  \providecommand\BibTeX{{%
    \normalfont B\kern-0.5em{\scshape i\kern-0.25em b}\kern-0.8em\TeX}}}
\newtheorem{remark}{Remark}
\newcommand{\dom}{\oname{dom}}
\newcommand{\cc}{{\operatorname{\mathsf{cc}}}}
\newcommand{\cd}{{\operatorname{\mathsf{cd}}}}
\newcommand{\od}{{\operatorname{\mathsf{od}}}}
\newcommand{\dr}{{\operatorname{\mathsf{dr}}}} 
\renewcommand{\tt}{{\operatorname{\mathsf{tt}}}} 
\newcommand{\ev}{{\operatorname{\mathsf{ev}}}} 
\newcommand{\hc}{\textsc{Hyb\-Core}}
\newcommand{\ie}{i.e.}
\newcommand{\eg}{e.g.}
\renewcommand{\VDash}{\vdash}
\newcommand{\ctx}[1]{\VDash_{#1}}
\newcommand{\cctx}{\ctx{\mathsf{c}}}
\newcommand{\vctx}{\ctx{\mathsf{v}}}
\definecolor{rules}{rgb}{0.8,0.4,0.23}
\definecolor{propositions}{rgb}{0.46,0.40,0.9}
\definecolor{assertions}{rgb}{0.10,0.5,0.36}
\DeclareRobustCommand{\mif}[3]{\mfix{\ruco{if}\,}{#1}{\,\ruco{then}\,}{#2}{\,\ruco{else}\,}{#3}}
\DeclareRobustCommand{\mwhile}[3]{\mfix{}{#1}{\,\ruco{while}\,}{#2}{\,}{\{#3\}}}
\DeclareRobustCommand{\cwhile}[2]{\mfix{}{#1\,}{\asco{\&}\,}{#2}}
\DeclareRobustCommand{\pcase}[2]{\mfix{}{#1;}{}{#2}}
\DeclareRobustCommand{\mbind}[2]{\mfix{\!}{#1;}{}{#2}}
\newcommand{\now}[1]{\lceil\kern-4pt\lceil #1 \rceil\kern-4pt\rceil}
\newcommand{\ass}{\mathrel{\asco{\coloneqq}}}
\newcommand{\integ}{\vaco{\mathit{int}}}
\newcommand{\wait}{\vaco{\mathit{wait}}}
\newcommand{\ruco}{\textcolor{rules}}
\newcommand{\vaco}{\textcolor{magenta}}
\newcommand{\asco}{\textcolor{assertions}}
\newcommand{\dist}{\oname{dist}}
\newcommand{\To}{\Downarrow} 
\DeclareSymbolFont{arrows1}{LS1}{stixsf}{m}{n}
\DeclareMathDelimiter{\Ddownarrow}{\mathrel}{arrows1}{"60}{arrows1}{"60}
\newcommand{\TTo}{\mathrel{\scalebox{.6}[1.05]{$\Ddownarrow$}}}
\newcommand{\True}{\ruco{\oname{true}}}
\newcommand{\False}{\ruco{\oname{false}}}
\newcommand{\BBM}{{\bm{\mathsf{M}}}}
\newcommand{\BBH}{{\bm{\mathsf{H}}}}
\newcommand{\BBQ}{{\bm{\mathsf{Q}}}}
\newcommand{\BBT}{{\bm{\mathsf{T}}}}
\newcommand{\nat}{\boldsymbol{\BBN}}
\newcommand{\real}{\boldsymbol{\BBR}}
\newcommand{\realp}{\real_{\mplus}}
\newcommand{\realpe}{\bar\real_{\mplus}}
\newcommand{\ite}[3]{\mfix{\kern-2pt}{#1}{\lhd}{#2}{\rhd}{\mathbin{}#3}}
\newcommand{\klstar}{\star}  				
\newcommand{\istar}{\dagger}  				
\newcommand{\iistar}{\ddagger}  			
\newcommand{\out}{\operatorname{\mathsf{out}}}
\newcommand{\tuo}{\operatorname{\out^{\text{\kern.5pt\rmfamily-}\kern-.5pt1}\kern-1pt}}
\newcommand{\lrule}[3]{\textbf{#1}\quad\frac{#2}{#3}}
\newcommand{\anonrule}[3]{\infrule{#2}{#3}}
\newcommand{\infrule}[2]{\frac{#1}{#2}}
\newcommand{\none}{} 
\renewcommand{\div}{\circleddash} 
\newcommand{\ssto}[1]{
\newdimen\stringwidth
\setbox0=\hbox{\tiny$#1$}
\stringwidth=\wd0
    \mathrel{\raisebox{-.1ex}{\kern3pt\ensuremath{\mathrel{\tikz{\draw [-stealth,line width=0.4] (0.6ex,.1ex) -- node[font=\tiny, midway,above=-.3ex,xshift=-.2ex] {$#1$\kern2pt\erule} ++([xshift=2ex]\stringwidth,0) ;}
  }}\kern3pt}}
}
\newcommand{\sssto}[1]{
\newdimen\stringwidth
\setbox0=\hbox{\tiny$#1$}
\stringwidth=\wd0
    \mathrel{\raisebox{-.05ex}{\kern3pt\ensuremath{\mathrel{\tikz{\draw [-stealth,line width=0.4,double] (0.6ex,.1ex) -- node[font=\tiny, midway,above=-.3ex] {$#1$\kern2pt\erule} ++([xshift=2ex]\stringwidth,0) ;}
  }}\kern3pt}}
}
\providecommand{\catname}{\mathbf} 
\providecommand{\clsname}{\mathcal}
\providecommand{\oname}[1]{\operatorname{\mathsf{#1}}}
\def\defcatname#1{\expandafter\def\csname B#1\endcsname{\catname{#1}}}
\def\defcatnames#1{\ifx#1\defcatnames\else\defcatname#1\expandafter\defcatnames\fi}
\def\defclsname#1{\expandafter\def\csname C#1\endcsname{\clsname{#1}}}
\def\defclsnames#1{\ifx#1\defclsnames\else\defclsname#1\expandafter\defclsnames\fi}
\def\defbbname#1{\expandafter\def\csname BB#1\endcsname{\mathbb{#1}}}
\def\defbbnames#1{\ifx#1\defbbnames\else\defbbname#1\expandafter\defbbnames\fi}
\def\Set{\catname{Set}}
\def\Top{\catname{Top}}
\providecommand{\eps}{\operatorname\epsilon}
\providecommand{\argument}{\operatorname{-\!-}}
\DeclareOldFontCommand{\bf}{\normalfont\bfseries}{\mathbf}
\providecommand{\mplus}{{\scriptscriptstyle\bf+}} 	
\providecommand{\Hom}{\mathsf{Hom}}
\providecommand{\id}{\mathsf{id}}
\providecommand{\comp}{\mathbin{\circ}}
\providecommand{\bang}{\operatorname!}				
\providecommand{\dar}{\kern-1.2pt\operatorname{\downarrow}}	
\providecommand{\uar}{\kern-1.2pt\operatorname{\uparrow}}	
\providecommand{\mto}{\mapsto}
\providecommand{\True}{\top}
\providecommand{\False}{\bot}
\providecommand{\impl}{\Rightarrow}
\providecommand{\fst}{\oname{fst}}
\providecommand{\snd}{\oname{snd}}
\providecommand{\brks}[1]{\langle #1\rangle}
\providecommand{\Brks}[1]{\bigl\langle #1\bigr\rangle}
\providecommand{\inl}{\oname{inl}}
\providecommand{\inr}{\oname{inr}}
\DeclareSymbolFont{Symbols}{OMS}{cmsy}{m}{n}
\DeclareMathSymbol{\iobj}{\mathord}{Symbols}{"3B}
\providecommand{\ev}{\oname{ev}}
\providecommand{\lsem}{\llbracket}
\providecommand{\rsem}{\rrbracket}
\providecommand{\sem}[1]{\lsem #1 \rsem}
\providecommand{\comma}{,\operatorname{}\linebreak[1]}		
\providecommand{\dash}{\nobreakdash-\hspace{0pt}}			    
\providecommand{\erule}{\rule{0pt}{0pt}}		        	    
\providecommand{\pacman}[1]{}					
\providecommand{\noqed}{\def\qed{}}				
\providecommand{\mone}{{\text{\kern.5pt\rmfamily-}\sf\kern-.5pt1}}
\newlist{citemize}{itemize}{1}
\setlist[citemize]{label=\labelitemi,wide} 
\newlist{cenumerate}{enumerate}{1}
\setlist[cenumerate,1]{label=\arabic*.~,ref={\arabic*},wide} 
\def\mfix#1{\oname{#1}\@ifnextchar\bgroup\@mfix{}}	
\def\@mfix#1{#1\@ifnextchar\bgroup\mfix{}}			
\providecommand{\case}[3]{\mfix{case}{\mathbin{}#1}{of}{#2}{\kern-1pt;}{\mathbin{}#3}}
\renewcommand{\comp}{\,}
\newcommand{\ball}{\vaco{\mathit{ball}}}
\newcommand{\lin}{\vaco{\mathit{line}}}
\newcommand{\accel}{\vaco{\mathit{accel}}}
\newcommand{\brake}{\vaco{\mathit{brake}}}
\begin{document}\allowdisplaybreaks

%
\title{An Adequate While-Language for Hybrid Computation}

%
\author{Sergey Goncharov}
\email{sergey.goncharov@fau.de}
\affiliation{%
  \institution{Friedrich-Alexander Universit\"at Erlangen-N\"urnberg, Germany}
}

\author{Renato Neves}
\email{nevrenato@di.uminho.pt}
\affiliation{%
  \institution{INESC TEC (HASLab) \& University of Minho, Portugal}
}

%
\renewcommand{\shortauthors}{Goncharov and Neves}

%
\begin{abstract}
  Hybrid computation harbours discrete and continuous dynamics in the
  form of an entangled mixture, inherently present in various natural
  phenomena and in applications ranging from control theory to
  microbiology.
  The emergent behaviours bear signs of both computational and
  physical processes, and thus present difficulties not only in their
  analysis, but also in describing them adequately in a structural,
  well-founded way.

  In order to tackle these issues and, more generally, to investigate
  \emph{hybridness} as a dedicated computational phenomenon, we
  introduce a while-language for hybrid computation inspired by the
  \emph{fine-grain call-by-value} paradigm. We equip it with
  \emph{operational} and \emph{computationally adequate denotational}
  semantics.  The latter crucially relies on a \emph{hybrid monad}
  supporting an (Elgot) iteration operator that we developed
  elsewhere. As an intermediate step, we introduce a more lightweight
  \emph{duration semantics} furnished with analogous results and based
  on a new \emph{duration monad} that we introduce as a lightweight
  counterpart to the hybrid monad.
\end{abstract}

%
%
\begin{CCSXML}
<ccs2012>
<concept>
<concept_id>10003752.10003753.10003765</concept_id>
<concept_desc>Theory of computation~Timed and hybrid models</concept_desc>
<concept_significance>500</concept_significance>
</concept>
</ccs2012>
\end{CCSXML}

\ccsdesc[500]{Theory of computation~Timed and hybrid models}
%
\keywords{Elgot iteration, guarded iteration, hybrid monad, Zeno
  behaviour, hybrid system, operational semantics.}

%
\maketitle

\section{Introduction}

\textbf{\emph{Motivation and context.~}} Hybrid systems are
traditionally seen as computational devices that closely interact with
physical processes, such as movement, temperature, energy, and
time. Their wide range of applications, including \eg\ the design of
embedded systems \cite{Platzer08} and analysis of disease propagation
\cite{liu17}, led to an extensive research in the area, with a
particular focus on languages and models that suitably accomodate
classic computations and physical processes at the same
time~\cite{Henzinger96,Platzer08,BromanLeeEtAl12}. Most of this
research revolved around the notion of \emph{hybrid automaton}, widely
adopted in control theory and in computer
science~\cite{Henzinger96,AlurCourcoubetisEtAl93}, and around certain
classes of \emph{(weak) Kleene algebra}, specialised at verifying
safety and liveness properties of hybrid
systems~\cite{Platzer08,HofnerMoller09}.

A distinctive aspect of these models is the free use of non-determinism,
which semantically amounts to \emph{non-deterministic hybrid systems} rather
than \emph{(purely) hybrid systems}, the latter abundantly found in the wild.
Furthermore, iterative computations in these models are captured by Kleene star
(essentially the non-deterministic choice between all possible finite iterates
of the system at hand), i.e.\ non-determinism is also intertwined with recursion. In this
paper, we take a different view: we see non-determinism as an \emph{abstraction
layer} added on top of \emph{hybridness} for handling uncertainty. This
consideration can be made precise by drawing on Moggi's seminal work~\cite{Moggi91a}
associating \emph{computational effects} with \emph{(strong) monads}.
Both non-determinism and hybridness are computational effects in Moggi's
sense: while the former effect is standard, the latter was identified
recently~\cite{NevesBarbosaEtAl16} and elaborated in the second author's PhD
thesis~\cite{Neves18}.

Our view of (pure) hybridness as a computational effect connects the
area of hybrid systems with a vast body of research on monad-based
programming theory (\eg\ \cite{Moggi91a,LevyPowerEtAl02, Elgot75}),
which potentially eases the development of sophisticated \emph{hybrid
programming languages} -- a task that arguably belongs to the core of
the twenty-first century's technology, due to the increasing presence
of software products that interact with physical processes
\cite{BromanLeeEtAl12,KimKumar12}.
As a step in the direction of such languages, we introduce a
deterministic while-language for hybrid computation~\hc{}. We do not
intend to use it directly in the development of complex hybrid
systems, but rather as a basic language on which to study the subject
of \emph{(purely) hybrid computation}, possible extensions of the
latter, and as a theoretical basis for the development of more complex
hybrid programming languages.

\noindent
\textbf{\emph{Contributions and overview of \hc{}.~}} In the
development of \hc{}, we concentrate on the indispensable features of
hybrid computation, namely classic program constructs and primitives
for describing behaviours of physical processes. As we will see, the
orchestration of these two aspects yields significant challenges,
centring on the main construct of \hc{} -- the \emph{hybrid
  while-loop} -- which captures iterated hybrid behaviour, including
both finite and infinite loop unfoldings, most notably \emph{Zeno
  behaviour}.
\begin{figure*}[t]
{\parbox{\textwidth}{%
 \centering
 \begin{subfigure}[b]{0.45\textwidth}
   \centering
  $\mwhile{x\ass 5}{x>0}{x\ass x-1}$
 \caption{Convergent non-progressive loop}\label{fig:while-a}
 \end{subfigure}
\qquad
  \begin{subfigure}[b]{0.46\textwidth}
  \centering
  $\mwhile{x\ass 5}{x>0}{\mbind{\wait(1)}{x\ass x-1}}$
 \caption{Convergent progressive loop}\label{fig:while-b}
 \end{subfigure}
\vspace{3ex}

 \begin{subfigure}[b]{0.42\textwidth}
  \centering
  $\mwhile{x\ass 0}{\True}{x\ass x+1}$
 \caption{Divergent non-progressive loop}\label{fig:while-c}
 \end{subfigure}
\qquad
  \begin{subfigure}[b]{0.46\textwidth}
  \centering
  $\mwhile{x\ass 0}{\True}{\mbind{\wait(1)}{x\ass x+1}}$
 \caption{Divergent progressive loop}\label{fig:while-d}
 \end{subfigure}

\vspace{3ex}
  \begin{subfigure}[b]{0.48\textwidth}
  \centering
  $\mwhile{x\ass 1}{\True}{\mbind{\wait(x)}{x\ass x/2}}$
 \caption{Zeno behaviour}\label{fig:while-e}
 \end{subfigure}
 \caption{Taxonomy of hybrid while-loops}\label{fig:whiles}
}}

\end{figure*}
Figure~\ref{fig:whiles} shows a taxonomy of while-loops emerging from
\hc{}:  besides the standard possibilities of \emph{convergence} and
\emph{divergence}, we now distinguish between \emph{progressive},
\emph{non-progressive}, and Zeno behaviour, depending on the
\emph{execution time of computations} (in Figure~\ref{fig:whiles} the
$\wait(s)$ command, triggers a delay of $s$ time units).

We emphasise the distinction between our semantics and the more
widespread \emph{transition semantics} for hybrid
automata~\cite{Henzinger96}.  Following Abramsky~\cite{Abramsky14},
the latter can be qualified as an \emph{intensional semantics}, while
we aim at an \emph{extensional semantics}, meaning that \eg\ we regard
the two programs
\begin{flalign*}
  \mwhile{x\ass 1}{\True}{\wait(1)} &\text{\quad and\quad } \mwhile{x\ass 1}{\True}{\wait(2)} 
\end{flalign*}
as equivalent, while they would be distinct under the transition
semantics.
%
%
In a nutshell, by committing to the extensional view, we abstract away
from specific choices of control states -- a decision that could also
be interpreted in hybrid automata setting and thus potentially lead to
coarser, more flexible notions of automata equivalence.

One principal design decision we adopt in~\hc{} is to base it on the
\emph{fine-grain call-by-value} paradigm~\cite{LevyPowerEtAl02} -- a
refinement of Moggi's computational $\lambda$-calculus, suitable for
both operational and denotational semantics. Denotationally, we
associate a hybrid computation with a \emph{trajectory}, i.e.\ a map
from a downward closed set~$D$ of non-negative real numbers (with
$D=\emptyset$ for the \emph{empty trajectory} representing
non-progressive divergence). 
The supremum of~$D$ is called the \emph{duration} of the hybrid
computation.  As shown in previous work~\cite{GoncharovJakobEtAl18,
  GoncharovJakobEtAl18a}, such trajectories constitute a \emph{hybrid
  monad}~$\BBH$. 

We introduce an operational semantics for \hc\ and show that it is
sound and (computationally) adequate w.r.t.~$\BBH$. This result relies
on a more lightweight operational semantics that keeps track of
durations (\ie\ execution times) only. We complement this latter
semantics with a \emph{duration monad} and prove analogous soundness
and adequacy results for it.

Our results do not fit standard semantic frameworks for the following reasons:
\begin{itemize}[wide]
\item It is not possible to introduce iteration for the duration monad
  (as well as for the hybrid monad ~\cite{GoncharovJakobEtAl18,
    GoncharovJakobEtAl18a}) via the usual least fixpoint argument,
  essentially due to the lack of canonical choice of divergence; the
  strategy we employ here, is thus to introduce a certain
  \emph{coalgebraic} version of this monad equipped with a partial
  (guarded) Elgot iteration and then to suitably quotient it.
\item Progressive divergence and Zeno behaviour, are principally
  \emph{non-inductive} kinds of behaviour, which we manage to capture
  by allowing for infinitely many premises in the operational
  semantics rules.
\item Our semantics for \hc\ is essentially two-stage, and contains
  the aforementioned duration semantics as an auxiliary
  ingredient. The operational semantics rules thus feature two
  separate evaluation judgements.
\end{itemize}

\noindent
\textbf{\emph{Related work.~}}
In designing the different constructs of \hc{}, we owe inspiration to
Witsenhausen's model of differential equations indexed by operations
modes, a percursor of the concept of hybrid system
\cite{witsenhausen66}, the \emph{relational} Kleene algebra underlying
\emph{differential dynamic logic}~\cite{Platzer08} and also hybrid
automata~\cite{Henzinger96}. As already mentioned, we restrict
ourselves to a purely hybrid setting, in particular we exclude
non-determinism. This is not to say that we could not have considered
non-determinism (in fact, there is also a non-deterministic hybrid
monad \cite{DahlqvistNeves18}) but for the time being we focus on pure 
hybrid behaviour in isolation from other computational effects, such as 
non-determinism.

Suenaga and Hasuo~\cite{SuenagaHasuo11} develop a somewhat similar
while-language for hybrid systems but from a different semantic
viewpoint: by unifying both continuous and discrete cyclic behaviour
in a single while-construct that makes incrementations by
infinitesimals. An interesting, principled approach to the semantics
of \emph{timed systems}, via an \emph{evolution comonad}, is proposed
by Kick, Power and Simpson~\cite{KickPowerEtAl06}. The underlying
functor of the evolution comonad is striking close to the functor
underlying our hybrid monad, except that the former does not include
empty trajectories and hence does not support non-progressive
divergence.

As indicated above, in order to interpret hybrid while-loops, we
necessarily went beyond standard semantic
frameworks~\cite{Winskel93,Reynolds98}, in particular because of the
indispensable presence of various notions of divergence. This relates
our work to the recent work on modelling productively non-terminating
programs using a coalgebraic \emph{delay monad} (previously called the \emph{partiality
  monad})~\cite{Capretta05,Danielsson12}. Indeed, our construction of
the iteration operator on the \emph{duration monad} in
Section~\ref{sec:Q-sem}, is inspired by a technically similar
procedure of quotienting the delay monad by weak
bisimilarity~\cite{ChapmanUustaluEtAl15}, albeit in a different
(constructive) setting.

\section{Preliminaries and Notation}\label{sec:prelim}
We will generally work in the category $\Set$ of sets and functions,
but also call on some standard idioms of category
theory~\cite{MacLane71,Awodey10}.  
By~$|\BC|$ we refer to the objects of~$\BC$, and by 
$\Hom_{\BC}(A,B)$ (or~$\Hom(A,B)$, if no
confusion arises) to the morphisms $f\colon A\to B$ from $A\in |\BC|$ to
$B\in|\BC|$. We often omit indices at natural
transformations. 
By~$\nat$,
$\real$, $\realp$ and~$\realpe$ we denote 
natural numbers (including~$0$), real numbers, non-negative real
numbers and extended non-negative real numbers $\realp\cup\,\{\infty\}$ respectively.
By $X+Y$ we denote a coproduct of~$X$ and~$Y$, the corresponding
coproduct injections $\inl\colon X\to X+Y$ and~$\inr\colon Y\to X+Y$, and dually for products: 
$\fst\colon X\times Y\to X$, $\snd\colon X\times Y\to Y$.  
%
A category with finite products and coproducts 
is \emph{distributive}~\cite{Cockett93} if the natural transformation
%
$[\id\times\inl,\id\times\inr]\colon X\times Y + X\times Z \to X\times (Y+Z)$
is an isomorphism, whose inverse we denote by $\dist$. By
$\ite{p}{b}{q}$ we abbreviate the if-then-else operator expanding as follows:
$\ite{p}{\top}{q} = p$, $\ite{p}{\bot}{q} = q$. For time-dependent functions 
$e\colon\real\to Y$ we use the superscript notation $e^t$ in parallel with $e(t)$ to 
emphasize orthogonality of the temporal dimension to the spatial one.

\begin{figure*}[h]
\begin{minipage}{0.85\textwidth}%
%
%
  \begin{flalign*}
    \lrule{(unit)}{}{\Gamma\vctx\star\colon 1}
    &&
    \lrule{(var)}{
		  x \colon A \text{~~in~~} \Gamma
	  }{
		  \Gamma \vctx x \colon A
	  }
    &&
    \lrule{(sig)}{
		  \vaco{f}\colon A\to B\in\Sigma\qquad\Gamma\vctx v\colon A
	  }{
		  \Gamma \vctx f(v) \colon B
	  }
  \end{flalign*}\\[-3.5ex]
  \begin{flalign*}
    \lrule{(true)}{}{\Gamma\vctx\True\colon 2}
    &&
    \lrule{(false)}{}{\Gamma\vctx\False\colon 2}
    &&
    \lrule{(prod)}{
		  \Gamma\vctx v\colon A\qquad \Gamma\vctx w\colon B
	  }{
		  \Gamma \vctx \brks{v,w} \colon A\times B
	  }\\[-3.5ex]
  \end{flalign*}
  \smallskip
  \dotfill 
  \smallskip
%
%
%
\vspace{1ex}
%
  \begin{flalign*}
    \lrule{(tr)}{
      \Gamma,t\colon\real\vctx v\colon A \qquad \Gamma,x\colon A\vctx w\colon 2
    }{
      \Gamma \cctx \cwhile{x\ass t.\,v}{w} \colon A
    }
    &&
    \lrule{(seq)}{
      \Gamma \cctx p\colon A \qquad \hspace{-0.1cm}
      \Gamma, x\colon A \cctx q \colon B
    }{
      \Gamma \cctx \mbind{x\ass p}{q} \colon B
    }
  \end{flalign*}\\[-3.5ex]
%
\begin{flalign*}
   \lrule{(pm)}{
      \Gamma \vctx v\colon A\times B \qquad
      \Gamma,x\colon A,y\colon B \cctx p \colon C
    }{
      \Gamma \cctx \pcase{\brks{x,y} \ass v}{p} \colon C
    }
&&
    \lrule{(now)}{
      \Gamma\vctx v\colon A
    }{
      \Gamma \cctx \now v \colon A
    }
\end{flalign*}\\[-3.5ex]
\begin{flalign*}
    \lrule{(if)}{
      \Gamma \vctx v\colon 2 \quad
      \Gamma \cctx p \colon A \quad \Gamma \cctx q \colon A
    }{
      \Gamma \cctx \mif{v}{p}{q} \colon A
    }
\hspace{-1cm}
&&
    \lrule{(wh)}{
		  \Gamma\cctx p\colon A \quad \Gamma,x\colon A\vctx w\colon 2\quad \Gamma,x\colon A\cctx q\colon A
	  }{
		  \Gamma\cctx \mwhile{x\ass p}{w}{q} \colon A
	  }
\end{flalign*}
\end{minipage}
\caption{\hc{}'s term formation rules for values (top) and computations (bottom)}
\label{fig:lang.full}
\end{figure*}

Following Moggi~\cite{Moggi91a}, we identify
a \emph{monad} $\BBT$ on a category $\BC$ with the corresponding
\emph{Kleisli triple} $(T,\eta,(\argument)^\klstar)$
consisting of an endomap~$T$ on $|\BC|$, a $|\BC|$-indexed class of
morphisms $\eta_X\colon X\to TX$, called the \emph{unit} of $\BBT$, and the
\emph{Kleisli lifting} maps
$(\argument)^\klstar\colon\Hom(X,TY)\to\Hom(TX,TY)$ such that
$
%
\eta^{\klstar}=\id,
f^{\klstar}\eta=f,
(f^{\klstar} g)^{\klstar}=f^{\klstar}g^{\klstar}.
$
Provided that $\BC$ has finite products, a
monad $\BBT$ on $\BC$ is \emph{strong} if it is equipped with 
\emph{strength}, i.e.\ a natural transformation
  $\tau_{X,Y}\colon X\times TY\to T(X\times Y)$
satisfying a number of standard coherence conditions
(see \eg~\cite{Moggi91a}). Every monad~$\BBT$ on $\Set$ is
strong~\cite{Kock72} under 
$\tau_{X,Y} = \lambda\brks{x,p}.\, T(\lambda y.\, \brks{x,y}) (p)$.
Morphisms of the form $f\colon X\to TY$ form the
\emph{Kleisli category} of $\BBT$, which has the same objects
as~$\BC$, units $\eta_X\colon X\to TX$ as identities, and composition
$(f,g)\mapsto f^\klstar g$, also called~\emph{Kleisli composition}.

An \emph{($F$-)coalgebra} on $X\in |\BC|$ for a functor $F\colon\BC\to\BC$
is a pair $(X,f\colon X\to FX)$. Coalgebras form a category whose morphisms
$h\colon (X,f)\to (Y,g)$ satisfy $h\in\Hom_{\BC}(X,Y)$ and $(Fh) f = g
h$. A final object $(\nu F,\out\colon\nu F\to F\nu F)$ of this category is
called a \emph{final coalgebra} (see~\cite{UustaluVene99} for more
details on coalgebras in semantics of (co)iteration).

\section{A While-Language for Hybrid Computation}
\label{sec:hybcore}
We start off by defining our core language for hybrid computation
\hc{}, following the \emph{fine-grain call-by-value}
paradigm~\cite{LevyPowerEtAl02}.  First, we introduce \emph{types}: 
%
\begin{align}\label{eq:sum-types}
A,B,\ldots          &\Coloneqq \nat\mid\real\mid 1\mid 2 \mid A\times B
\end{align}
and then postulate a signature $\Sigma$ of typed operation symbols of the
form $f\colon A\to B$ with $B\in\{\nat,\real,2\}$ for integer and real
arithmetic, \eg\ summation $+\colon\nat\times\nat\to\nat$, and further
primitives meant to capture time-dependent solutions of systems of
ordinary differential equations: \ie\ for every such system
$\{\dot x_1 = f_1(\bar x,t)\comma\ldots\comma \dot x_n = f_n(\bar x,t)\}$
where $\bar x$ is the vector $\brks{x_1,\ldots,x_n}$ and
$f_i \colon\real^n \times \real \to\real$, the signature~$\Sigma$ contains the symbol
$\integ_{\vaco{f_1,\ldots,f_n}}^{\vaco i}\colon \real^n\times\real\to
\real$, representing the $i$-th projection of the corresponding unique
(global) solution~\cite{Perko13}. 
\begin{remark}
We do not study conditions ensuring existence of the solutions 
$\integ_{\vaco{f_1,\ldots,f_n}}^{\vaco i}$, and instead work under the assumption 
that the functions~$f_i$ come from a sufficiently well-behaved class for
which these solutions exist. Clearly, we could instead introduce a specific
grammar of differential equations and make them part of the language. For 
example, we could use the following grammar for the~$f_i(\bar x,t)$:
\begin{align*}
  f(\bar x,t),g(\bar x,t),\ldots \Coloneqq&\\*
 c\in\real\mid t\mid x& \in \bar x\mid f(\bar x,t) + g(\bar x,t) \mid f(t)\cdot g(\bar x,t). 
\end{align*}
%
where $f(t)$ reads as $f(\bar x,t)$ with empty $\bar x$.
This corresponds to systems of
\emph{linear ordinary differential equations}, which provide a
sufficiently large stock of trajectories standardly used in hybrid
system modelling~\cite{Alur15}. A concrete choice of such grammar
however would have no bearing on our present results, but we expect to
revisit this decision in future work devoted to reasoning and
verification in \hc{}.
\end{remark}

\noindent
Our language features two kinds of judgement,
\begin{align}\label{eq:judg}
\Gamma\vctx v\colon A  \text{\qquad and\qquad } \Gamma \cctx p\colon A
\end{align}
for \emph{values} and \emph{computations}, respectively. These involve
\emph{variable contexts} $\Gamma$ which are non-repetitive lists of
\emph{typed variables} $x\colon A$. We indicate the \emph{empty context} as
$-$ (dash), \eg\ $-\cctx p\colon A$.  The term language over these data is
given in Figure~\ref{fig:lang.full}. Programs of \hc{} figure as
terms~$p$ in computation judgements $\Gamma\cctx p\colon A$ and are
inductively built over value judgements. Intuitively, the latter
capture functions from~$\Gamma$ to~$A$ in the standard mathematical
sense, and computation judgements capture functions from $\Gamma$ to
\emph{trajectories} valued on $A$; \ie\ instead of single values
from $A$, they return \emph{functions} $D \to A$ valued on $A$ with
$D=[0,d]$ or $D=[0,d)$, \eg\ $\emptyset = [0,0)$ and
$\realp = [0,\infty)$. Most of the constructs in
Figure~\ref{fig:lang.full} are standard.
Sequential composition \textbf{(seq)} for
example reads as ``bind the output of~$p$ to~$x$ and then
feed it to $q$''.  
The rule~\textbf{(now)} converts a value~$v$ into an instantaneous
computation, $[0,0] \to A$, that returns $v$.  The rules~\textbf{(if)}
and~\textbf{(wh)} provide basic control structures
(cf.~\cite{Winskel93}). Finally,~\textbf{(tr)} converts a value into a
computation by simultaneously abstracting over time and restricting
the duration of the computation to the \emph{union} of all those
intervals $[0,d]$ on which~$v$ holds throughout.  Note that the
resulting interval can thus be either open or closed on the right.
Intuitively, one can see~\textbf{(tr)} as a `continuous' while-loop
since the trajectory captured by $t.\,v$ runs as long as the condition
$w$ is satisfied.  For example, the computation judgement
$\Gamma \cctx \cwhile{x\ass t.\,t}{t \leq 7} \colon\real$ models the
passage of time and runs whilst the latter does not surpass seven time
units.  This behaviour captures the essence of hybrid behaviour 
\cite{witsenhausen66}, which is now accommodated by hybrid automata and 
Platzer's differential dynamic logic; our notation `$\&$' alludes precisely 
to this corresponding operator of the latter framework.

%

In contrast to fine-grain call-by-value~\cite{LevyPowerEtAl02}, we
have no conversions from computations to values, for we do not assume
existence of higher order types, which potentially provide a
possibility to \emph{suspend} a program as a value and subsequently
\emph{execute} it,
resulting in a computation. Adding these facilities to~\hc{} can be done standardly,
but we refrain from it, as we intend our semantics  to be eventually
transferable from $\Set$ to more suitable universes, which need not be
Cartesian closed, such as the category of topological spaces~$\Top$.

Next we introduce some syntactic conventions for \hc\ programs:
\begin{citemize}
\item We write $\cwhile{x\ass t.\,v}{r}$ with
  $r \in \real$ for 
  $\mbind{\brks{y,t} \ass (\cwhile{x\ass t.\,\brks{v,t}\\}{\snd x\leq
      r})}{\now{y}}$ (assuming that the projection
  $\snd\colon A\times B\to B$ and the predicate $\leq$ are in
  $\Sigma$). Intuitively, this restricts the trajectory encoded by
  $t.\, v$ 
  to the interval $[0,r]$.
\item Sequential composition is right associative, that is, an
  iterated sequential composition in the form
  $x_1\ass p_1; \ldots; x_n\ass p_n; q$ should be parsed as
  $x_1\ass p_1; (\ldots; (x_n\ass p_n; q)\ldots)$.
\item We allow for programs of the form $x\ass p$ which abbreviate
  $x\ass p;\now {x}$.
\item We omit the brackets $\now{\argument}$ 
if they can be easily reconstructed. E.g.\ $x\ass x+1$ means $x:=\now{x+1}$ and thus $x:=\now{x+1};\now{x}$,
by the previous clause (e.g.\ this is used in Figure~\ref{fig:whiles}).
\end{citemize}

\noindent
Let us examine two examples of hybrid systems programmable
in \hc.

\begin{example}[Bouncing ball]\label{expl:ball}
%
Consider the system of differential equations
$\{ \dot{u} = v\comma \dot{v} = \mathsf{g} \}$ describing continuous
movement of a ball in terms of its \emph{height} $u$ and \emph{velocity} $v$ 
with~$\mathsf{g}$ being Earth's gravity. We postulate the corresponding solutions
$\ball_{\vaco{u}} \colon\real^2 \times \real \to \real$ and
$\ball_{\vaco{v}} \colon\real^2 \times \real \to \real$ in $\Sigma$ so that $\ball_{\vaco{u}}(x,y,t)$
and $\ball_{\vaco{v}}(x,y,t)$ are the height and the velocity of the ball at
time $t$, assuming $x$ and $y$ as the corresponding initial values 
(at time $0$). Let $\ball$ be $\brks{\ball_{\vaco{u}}, \ball_{\vaco{v}}}$, and then e.g.\
%
  $\cwhile{\brks{u,v} \ass t.\, \ball(5,0,t)}{u \geq 0}$
%
  encodes the behaviour of the ball starting from the moment when it
  is dropped from height~$5$ until it hits the ground. The \hc{}
  program describing the resulting bouncing behaviour with infinitely
  many iterations is as follows:

\bigskip
\hspace{.5cm}\begin{minipage}{\dimexpr\textwidth-3cm}
\begin{flushleft}
$\brks{u,v}\ass\brks{5,0}$   \\[.8ex]
$\oname{\ruco{while}}\> \True\> \{$ \\[.8ex]
$\begin{array}{@{\quad}r@{\;}l}
    \cwhile{\brks{u,v}  &\ass t.\, \ball(u,v,t)}{u \geq 0};\\[.8ex]
    \brks{u,v}          &\ass \brks{u, -0.5  v}
\end{array}$\\[.8ex]
$\}$
\end{flushleft}
\xdef\tpd{\the\prevdepth}
\end{minipage}\medskip

\noindent
The movement thus programmed is depicted in
Figure~\ref{fig:plotball}. Note that the height is decreasing, tending
to zero in the limit. This requires infinitely many iterations, but
the process is finite in terms of physical time, because each
iteration gets progressively shorter. This yields an example of Zeno
behaviour~\cite{GoncharovJakobEtAl18,HofnerMoller11,NakamuraFusaoka05},
which is often dismissed in the hybrid systems literature, as
justified by the fact that such behaviour is on the one hand not
computationally realisable, and on the other hand notoriously
difficult to analyse. Here, we commit ourselves to the task of
faithful modelling Zeno behaviour as an inherent feature of physical,
biological and other systems.
\end{example} 
\begin{example}[A simplistic cruise controller]
  The following program implements a simplistic cruise controller
  programmed in \hc{}: 

\bigskip
\hspace{.5cm}\begin{minipage}{\dimexpr\textwidth-3cm}
\begin{flushleft}
$\brks{u,v} \ass \brks{0,0}$\\[.8ex]
$\oname{\ruco{while}}\> \True \> \{$ \\[.8ex]
$\begin{array}{@{\quad}r@{\;}l}
\oname{\ruco{if}} \> v \leq 120~ \oname{\ruco{then}} \>
    \cwhile{\brks{u,v}  \ass& t.\, \accel(u,v,t)}{3}\\[.8ex]
    \oname{\ruco{else}} \> \cwhile{\brks{u,v}  \ass& t.\, \brake(u,v,t)}{3} 
\end{array}$\\[.8ex]
$\}$
\end{flushleft}
\xdef\tpd{\the\prevdepth}
\end{minipage}\bigskip

\noindent
Here, $\accel = \brks{\accel_{\vaco{u}},\accel_{\vaco{v}}}$ and
$\brake = \brks{\brake_{\vaco{u}}, \brake_{\vaco{v}}}$ refer to the
solutions of the systems of differential equations
$\{ \dot{u} = v, \dot{v} = 1 \}$, $\{ \dot{u} = v, \dot{v} = -1 \}$
correspondingly.  As the names suggest, $\accel$ is responsible for
the acceleration, while $\brake$ is responsible for braking. Each 
iteration in the loop has the duration of three
time units (due to the expression `$\asco{\&} \> 3$'); at
the beginning of each iteration the controller checks the current
velocity, leading the vehicle to the target velocity of $120$
km/h. This produces the oscillation pattern depicted in
Figure~\ref{fig:plotvehicle}.
\end{example}

\begin{example}[Signal sampling]\label{expl:sigsamp}
The program below discretises a continuous-time signal 
$\mathit{\vaco{signal}}\colon\real\times\real\to\real$.
The variable $u$, which is valued on natural numbers $\nat$, 
represents the sampled signal, and the operation $\mathsf{round}\colon\real\to\nat$ 
rounds the input up to the closest natural number.

\bigskip \hspace{.5cm}\begin{minipage}{\dimexpr\textwidth-3cm}
\begin{flushleft}
$\brks{u,v}\ass\brks{0,0}$   \\[.8ex]
$\oname{\ruco{while}}\> \True\> \{$ \\[.8ex]
$\begin{array}{@{\quad}r@{\;}l}
   \cwhile{\brks{u,v}  &\ass t.\, \brks{\mathsf{round}(v),\mathit{\vaco{signal}}(v,t)}}{1} 
\end{array}$\\[.8ex]
$\}$
\end{flushleft}
\xdef\tpd{\the\prevdepth}
\end{minipage}\medskip

\noindent
The resulting computation when projected to the second Cartesian
factor yields the original signal, while the same computation
projected to the first Cartesian factor yields the corresponding
discretised, or \emph{sampled} signal. Here, we used the unit $1$ as a
\emph{sampling interval}.
%
%
\end{example}

\begin{figure*}  \vspace{.25cm}
\centering
\hspace{.5cm}
 \begin{subfigure}[t]{0.42\textwidth}
  %
  %
  \begin{center}
     \scalebox{0.6}{
       \begin{tikzpicture}
         \begin{axis}[ymin= 0.01, ymax = 5, xmin = 0, xmax = 3, grid = major ]
           \addplot [thick,color=blue,smooth, domain = 0:2] { 5 - (1/2)*9.8*x^2 };
           \addplot[thick,color=blue,smooth, shift = {(101.0,0.0)}, domain = 0:2] {
             4.945*x - (1/2)*9.8*x^2 }; 
           \addplot[thick,color=blue,smooth, shift =
           {(202.0,0.0)}, domain = 0:2] { 2.473*x - (1/2)*9.8*x^2 };
           \addplot[thick,color=blue,smooth, shift = {(252.0,0.0)}, domain = 0:2] {
             1.2365*x - (1/2)*9.8*x^2 }; 
           \addplot[thick,color=blue,smooth, shift =
           {(278.0,0.0)}, domain = 0:2] { 0.62*x - (1/2)*9.8*x^2 };
           \addplot[thick,color=blue,smooth, shift = {(290.0,0.0)}, domain = 0:2] {
             0.31*x - (1/2)*9.8*x^2 };
         \end{axis}
       \end{tikzpicture}
     }
     \end{center}
    \caption{Bouncing ball height}\label{fig:plotball}
 \end{subfigure}
\hspace{0.5cm}
 \begin{subfigure}[t]{0.42\textwidth}
%
%
   \begin{center}
     \scalebox{0.6}{
       \begin{tikzpicture}
         \begin{axis}[ymin=111, ymax = 130, xmin = 0, xmax = 27, 
           axis x line = center,
           axis y line = center, 
           grid = major ]
           \addplot[thick,color=blue,smooth, domain = 0:12] { x + 111};
           \addplot[thick,color=blue,smooth, domain = 12:15] { 123 - (x - 12)};
           \addplot[thick,color=blue,smooth, domain = 15:18] { 120 + (x - 15)};
           \addplot[thick,color=blue,smooth, domain = 18:21] { 123 - (x - 18)};
           \addplot[thick,color=blue,smooth, domain = 21:24] { 120 + (x - 21)};
           \addplot[thick,color=blue,smooth, domain = 24:27] { 123 - (x - 24)};
         \end{axis}
       \end{tikzpicture}
     }
   \end{center}
\caption{Vehicle velocity}\label{fig:plotvehicle}
\end{subfigure}
\caption{Trajectories produced by hybrid programs}
\end{figure*}

\section{Operational Duration Semantics}
\label{sec:dur_op_sem}

We now introduce an \emph{operational duration semantics} of~\hc, both
in small-step and big-step styles. In order to help build intuitions,
we begin with the former. Duration semantics abstract away from the
run-time behaviour of the program (\ie\ its trajectory), and only
provide the execution time and possibly the final result (\ie\ the
endpoint of the trajectory). The small-step judgements are defined on
closed computations terms and have the form,
\begin{align*}
  p\ssto{d} q\qquad\text{and}\qquad p\ssto{e} && (d\in\real_{\mplus}, e\in\bar\real_{\mplus})
\end{align*}
meaning that $p$ one-step reduces to $q$ in time $d$, or $p$ one-step
diverges in time~$e$ respectively, where only in the latter case the
relevant time interval may be infinite. The derivation rules for these
judgements are shown in Figure~\ref{fig:small-dur-oper}.  These rules
are essentially obtained by extending the standard call-by-value
small-step semantics~\cite{Winskel93} by decorating the transitions
with durations.  The most distinctive rules are the ones addressing
the abstraction construct~\textbf{(tr)}:  intuitively, the resulting
durations $d$ are computed as the lengths of the largest interval
$[0,t]$ on which the trajectory produced by $v$ satisfies the
predicate $w$ throughout.  The terminal value $\now{v[d/t]}$ is
available precisely when the trajectory produced by $v$ satisfies the
predicate $w$ at $d$. For example, for
$\lin \colon\real \times \real \to \real$ the global solution of the
differential equation $\dot{x} = 1$, the program
$\cwhile{x \ass t.\, \lin(0,t)}{x \leq 1}$ one-step reduces to
$\now 1$ in time $1$, and on the other hand,
$\cwhile{x \ass t.\, \lin(0,t)}{x < 1}$ one-step diverges in time~$1$.

Note a certain arbitrariness in defining \emph{instantaneous
  transitions}~$\ssto{0}$. E.g.\ we could just as well use the
following slightly more economical rule for reducing
$\mif{\True}{p}{q}$:
\begin{displaymath}
  \anonrule{(if1)}{
    p\ssto{d} p'
  }{
    \mif{\True}{p}{q} \ssto{d} p'
  }
\end{displaymath}
This would however not change the principal fact of pervasiveness of intermediate
instantaneous transitions, as they are also triggered by unfolding while-loops,
and cannot be completely eliminated from the corresponding rules.

We proceed to introduce the aforementioned big-step duration
semantics. It associates a closed computation term $p$ either with a
pair $d,v$, consisting of a \emph{duration} $d\in\real_{\mplus}$ and a
(closed) \emph{terminal value} $v$, or with a (possibly infinite)
duration $d\in\bar\real_{\mplus}$. The latter case occurs \eg\ if $p$
exhibits Zeno behaviour or more generally if $p$ is a while-loop that
unfolds infinitely many times. The rules for big-step judgements are
given in Figure~\ref{fig:dur-oper}.

\begin{figure*}[t]
\begin{minipage}{0.90\textwidth}%
\textit{Closed values, Closed computations:}
\begin{align*}
   v,w &\Coloneqq x\mid \star\mid\True\mid\False\mid\brks{v,w}\mid f(v) \qquad (f\in\Sigma) \\
   p,q &\Coloneqq \pcase{\brks{x,y} \ass \brks{v,w}}{p}\mid\mif{v}{p}{q}\mid \now v
        \mid\mbind{x\ass p}{q}\mid\mwhile{x\ass p}{v}{q}\mid\cwhile{x\ass t.\,v}{w}
\end{align*}
\noindent
\textit{Rules:}
  \begin{flalign*}
    \anonrule{match1}{
    }{
      \pcase{\brks{x,y}\ass\brks{v,w}}{q} \ssto{0} q[v/x,w/y]
    }
&&
    \anonrule{ret-bind1}{
      p \ssto{d} p'
    }{
      \mbind{x\ass p}{q} \ssto{d} \mbind{x\ass p'}{q}
    }
&&
  \anonrule{ret-bind2}{
    p \ssto{d} \none
  }{
    \mbind{x\ass p}{q} \ssto{d} \none
  }
  \end{flalign*}\\[-3ex]
%
  \begin{flalign*}
    \anonrule{ret-bind2}{
    }{
      \mbind{x \ass \now v}{q} \ssto{0} q[v/x]
    }
&&
 \anonrule{(if1)}{
  }{
    \mif{\True}{p}{q} \ssto{0} p
  }
&&
\anonrule{(if2)}{
  }{
    \mif{\False}{p}{q} \ssto{0} q
  }
 \end{flalign*}
\\[-3ex]
  \begin{flalign*}
  \anonrule{(cond2)}{
      \forall s\leq d.\, w[v[s/t]/x]=\True\qquad \forall e>0.\,\exists s\in (d,d+e).\, w[v[s/t]/x]=\False
	}{
		\cwhile{x\ass t.\,v}{w} \ssto{d} \now{v[d/t]}
	}
&&
\anonrule{(d-while1)}{
      p\ssto{d} \none
	  }{
		  \mwhile{x\ass p}{v}{q} \ssto{d} \none
	  }
 \end{flalign*}
\\[-3ex]
  \begin{flalign*}
  \anonrule{(cond1)}{
      \forall s<d.\, w[v[s/t]/x]=\True\qquad w[v[d/t]/x]=\False
	}{
		\cwhile{x\ass t.\,v}{w} \ssto{d} \none
	}
&&
  \anonrule{(d-while2)}{
      w[v/x] = \False
	}{
		\mwhile{x\ass \now v}{w}{q} \ssto{0} \now v
	}
 \end{flalign*}\\[-3ex]
  \begin{flalign*}
    \anonrule{(d-while1)}{
        p \ssto{d} p'
	  }{
		  \mwhile{x\ass p}{v}{q} \ssto{d} \mwhile{x\ass p'}{v}{q}
	  }
  &&
  \anonrule{(d-while4)}{
      w[v/x] = \True
	}{
          \mwhile{x\ass \now v}{w}{q} \ssto{0} \mwhile{x\ass q[v/x]}{w}{q}
        }
\end{flalign*}
\end{minipage}      
\caption{Small-step duration operational semantics}\label{fig:small-dur-oper}
\end{figure*}

Again, we focus only on the least standard rules.  In
both~\textbf{(tr$_{\bm1}^{\mathbf{d}}$)}
and~\textbf{(tr$_{\bm2}^{\mathbf{d}}$)} the duration is identified
with the length of the largest interval $[0,d]$ on which the
condition~$w$ holds throughout, and subsequently each rule verifies
if~$v$ is true at~$t=d$.
The
rules~\textbf{(seq$_{\bm1}^{\mathbf{d}}$)},~\textbf{(seq$_{\bm2}^{\mathbf{d}}$)}
and~\textbf{(seq$_{\bm3}^{\mathbf{d}}$)} capture the fact that
durations of subsequent computations add up, unless the terminal value
of the first computation is undefined, in which case the total
duration of the program is the duration of the first subprogram. This
behaviour is inherited inductively by while-loops
via~\textbf{(wh$_{\bm1}^{\mathbf{d}}$)}--\textbf{(wh$_{\bm4}^{\mathbf{d}}$)},
and in addition we have a \emph{non-inductive case} of infinitely many
unfoldings of while-loops, as captured by the (infinitary)
rule~\textbf{(wh$_{\bm5}^{\mathbf{d}}$)}. This allows for computing
the duration of the program as the infinite sum~$\sum\nolimits_i d_i$.

\begin{remark}\label{rem:while}
  The rules~\textbf{(wh$_{\bm3}^{\mathbf{d}}$)}
  and~\textbf{(wh$_{\bm4}^{\mathbf{d}}$)} directly capture the
  inductive intuition in the semantics of while-loops, but notably
  these are the only rules which are not structural, in the sense that
  the programs in the premises are not structurally simpler than the
  corresponding programs in the conclusions.  It is easy to see
  that~\textbf{(wh$_{\bm3}^{\mathbf{d}}$)}
  and~\textbf{(wh$_{\bm4}^{\mathbf{d}}$)} can be equivalently replaced
  by the families of structural rule schemes presented in
  Figure~\ref{fig:rule:while} in the style
  of~\textbf{(wh$_{\bm5}^{\mathbf{d}}$)}.
\end{remark}

\begin{proposition}[Determinacy]\label{prop:deter}
  The semantics in Figure~\ref{fig:dur-oper} is deterministic, that is
  for every closed computation term $p$, there is (exclusively!) at most one
  judgement of the form $p\To d,v$ or $p\To d$ derivable using the
  rules of Figure~\ref{fig:dur-oper}.
\end{proposition}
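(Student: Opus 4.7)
The plan is to proceed by induction on the derivation tree of a judgement about $p$, showing that for every derivation the conclusion is fixed by the head constructor of $p$ together with the (inductively unique) outcomes of its premise judgements. Since $p$ is closed, every substitution appearing in a premise (e.g.\ $p[v/x,w/y]$ or $q[v_0/x]$) yields a closed computation, and each premise either concerns a strictly smaller term or admits a strictly shorter subderivation. This keeps the induction well-founded even though the infinitary rule \textbf{(wh$_{\bm5}^{\mathbf{d}}$)} has infinite branching: derivation trees are well-founded, so the IH applies to every child.

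The cases $\now v$, $\pcase{\brks{x,y}\ass\brks{v,w}}{p}$, and $\mif{v}{p}{q}$ are immediate: in the first there is a single axiom; in the second the pair $\brks{v,w}$ has a unique decomposition reducing to a judgement about $p[v/x,w/y]$ that is unique by IH; in the third the closed boolean $v$ is either $\True$ or $\False$, so exactly one of the two conditional rules fires and the IH closes the chosen branch. For $\cwhile{x\ass t.\,v}{w}$, both applicable trajectory rules share the same $d$, namely the endpoint of the largest interval $[0,d]$ or $[0,d)$ on which $w[v[s/t]/x]=\True$ holds throughout, and they are distinguished solely by the truth value of $w[v[d/t]/x]$; hence at most one applies. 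For $\mbind{x\ass p}{q}$, the IH applied to the premise about $p$ fixes its outcome and therefore selects exactly one sequencing rule, with the convergent case closed by a further appeal to the IH on the strictly smaller judgement about $q[v_0/x]$.

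The principal obstacle is the while-loop $\mwhile{x\ass p}{v}{q}$, where the infinitary rule must be reconciled with the finite unfolding rules. Applying the IH to the first premise, either $p$ diverges (and only the divergence-propagation rule applies) or $p \To d_0, v_0$ is uniquely determined. In the latter case the closed boolean $v[v_0/x]$ is either $\False$ (forcing the loop-termination rule) or $\True$, in which case any applicable rule proceeds either through a judgement about the same-sized loop $\mwhile{x\ass q[v_0/x]}{v}{q}$ (the finite unfolding rules) or through the full infinite sequence of iterate premises (the infinitary rule). These two derivation shapes are mutually exclusive: any finite derivation must eventually bottom out at some iteration $k$ where either $v[v_k/x] = \False$ or the body $q[v_k/x]$ diverges, whereas the premises of \textbf{(wh$_{\bm5}^{\mathbf{d}}$)} require $v[v_k/x] = \True$ and the body to converge at every $k$. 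Within each shape, the IH on the strictly smaller subderivations forces the sequence of iterate durations and intermediate values to coincide, pinning down the total duration $\sum_i d_i \in \realpe$ in the infinite case and the accumulated duration and terminal value in the finite case. This exhausts all possibilities and yields determinacy; the main delicacy is simply to be explicit that the induction is performed on derivations rather than on term structure, so that the self-referential premises of \textbf{(wh$_{\bm3}^{\mathbf{d}}$)}, \textbf{(wh$_{\bm4}^{\mathbf{d}}$)}, and \textbf{(wh$_{\bm5}^{\mathbf{d}}$)} cause no circularity.
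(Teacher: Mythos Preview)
Your approach is correct but differs from the paper's. The paper does \emph{structural induction on the number of computation constructs in $p$}, not induction on derivations. Since the premises of \textbf{(wh$_{\bm3}^{\mathbf{d}}$)} and \textbf{(wh$_{\bm4}^{\mathbf{d}}$)} are not structurally smaller (the unfolded loop has the same shape), the paper first invokes Remark~\ref{rem:while} to replace these two rules by the equivalent structural schemes \textbf{(wh$_{\bm3}^{\mathbf{d}}{}'$)} and \textbf{(wh$_{\bm4}^{\mathbf{d}}{}'$)} of Figure~\ref{fig:rule:while}, whose premises concern only $p$ and the bodies $q[w_i/x]$. After this replacement, every premise of every rule is about a term with strictly fewer computation constructs, and the mutual exclusion between the finite and infinitary while-rules becomes immediate: the three structural while-rules are distinguished by whether the iterate sequence $(w_i)$ eventually hits a $\False$ guard, eventually produces a divergent body, or does neither, and the IH on the structurally smaller iterates $q[w_i/x]$ pins that sequence down uniquely.

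Your route---induction on derivations, exploiting well-foundedness even under infinite branching---avoids the detour through the auxiliary structural rules, at the price of the explicit ``finite unfoldings must bottom out'' argument you give to separate \textbf{(wh$_{\bm4}^{\mathbf{d}}$)} from \textbf{(wh$_{\bm5}^{\mathbf{d}}$)}. That argument is sound (the chain of nested \textbf{(wh$_{\bm4}^{\mathbf{d}}$)} applications inside a fixed derivation is finite by well-foundedness, so it must end in \textbf{(wh$_{\bm2}^{\mathbf{d}}$)}, contradicting the $\forall i.\, v[w_i/x]=\True$ premise of \textbf{(wh$_{\bm5}^{\mathbf{d}}$)} once the IH aligns the iterate values). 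The paper's approach buys a more uniform induction step once Remark~\ref{rem:while} is in hand; yours is self-contained but makes the while case carry more weight.
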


\begin{proof}
The proof runs routinely by induction over the
number of computation constructs (bottom part of
Figure~\ref{fig:lang.full}) in $p$ and by inspecting the rules in
Figure~\ref{fig:dur-oper}, except
for~\textbf{(wh$_{\bm3}^{\mathbf{d}}$)}
and~\textbf{(wh$_{\bm4}^{\mathbf{d}}$)}, which must be replaced by
their structural analogues per Remark~\ref{rem:while}.   
\end{proof}

\noindent
In order to relate small-step with big-step semantics, we define
$p\sssto{d} q$ and $p\sssto{d}$ inductively by the following
small-step judgements and closure rules:
\begin{flalign*}
\infrule{}{\now{v}\sssto{0} \now{v}}&&
\infrule{p\ssto{d} q}{p\sssto{d} q}&&
\infrule{p\sssto{d} p'\quad p'\ssto{e} q}{p\sssto{d+e} q}&&
\infrule{p\ssto{d} }{p\sssto{d} }
\end{flalign*}
\vspace{-.5ex}
\begin{align*}
\infrule{p\sssto{d} p'\qquad p'\ssto{e}\none}{p\sssto{d+e}\none } &&
\infrule{p\ssto{d_1} p'\qquad p'\ssto{d_2} p''\quad \ldots}{p\sssto{d_1+d_2+\ldots}\none }
\end{align*}
The binary relation $\sssto{d}$ is then essentially the standard
transitive closure, modulo the fact that durations are added along the
way. The unary relation $\sssto{d}$ is derivable both from finite and
infinite numbers of premises, \eg\ in the Zeno behaviour case, as in
Example~\ref{expl:ball}.
\begin{theorem}\label{thm:duration-semantics}
For every closed computation term $p$, $p\sssto{d} \now{v}$ iff $p\To d,v$ and
$p\sssto{d} \none$ iff~$p\To d$.
\end{theorem}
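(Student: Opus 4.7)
The plan is to establish both biconditionals by two independent implications, using the standard pattern for small-step/big-step equivalence but adapted to the infinitary nature of divergence. Since the statement conflates four sub-claims (convergence iff, divergence iff), I would first record a handful of congruence lemmas for $\sssto{d}$ that lift single-step congruence rules like \textbf{(ret-bind$_{\bm1}$)} and \textbf{(d-while$_{\bm1}$)} to the multi-step level: typically, if $p \sssto{d} p'$ then $\mbind{x\ass p}{q}\sssto{d}\mbind{x\ass p'}{q}$, and if $p\sssto{d}\none$ then $\mbind{x\ass p}{q}\sssto{d}\none$, together with the analogous statements for the while-loop congruences. In the divergent case one has to argue carefully that the derivation of $p \sssto{d}\none$ from a countable family $p^{(i)}\ssto{d_i}p^{(i+1)}$ lifts to a derivation of the congruence-wrapped sequence with the same $\sum_i d_i$.

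For the direction $p\To d,v \Rightarrow p\sssto{d}\now{v}$ and $p\To d \Rightarrow p\sssto{d}\none$ (soundness), I would proceed by induction on the big-step derivation, using the structural version of \textbf{(wh$_{\bm3}^{\mathbf{d}}$)} and \textbf{(wh$_{\bm4}^{\mathbf{d}}$)} indicated in Remark~\ref{rem:while}. Most cases are immediate after applying the congruence lemmas of the previous paragraph together with an instance of a small-step rule; for instance, in the sequential case one composes the multi-step derivation of $p\sssto{d_1}\now{v}$ with the small-step \textbf{(ret-bind$_{\bm2}$)} to get an instantaneous transition to $q[v/x]$, and then appends the induction hypothesis for $q[v/x]\sssto{d_2}\now{w}$. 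The genuinely non-inductive case is the infinitary \textbf{(wh$_{\bm5}^{\mathbf{d}}$)}: here one unfolds the loop step by step, obtaining $p^{(i)} \sssto{d_i} p^{(i+1)}$ for each $i$ from the inductive hypotheses on each premise plus the loop-unfolding small-steps, and concludes by the infinitary small-step divergence rule with total duration $\sum_i d_i$.

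For the converse direction (completeness) the key auxiliary lemma is: if $p\ssto{d}p'$ and $p'\To e,v$ then $p\To d+e,v$, and if $p\ssto{d}p'$ and $p'\To e$ then $p\To d+e$. This is proved by case analysis on the small-step rule producing $p\ssto{d}p'$, inverting the big-step derivation of $p'$ when needed and reassembling the result; sequential composition again requires the associativity/additivity of durations, while the while-loop congruence step is justified by the structural scheme from Remark~\ref{rem:while}. From this lemma the finite case $p\sssto{d}\now{v}\Rightarrow p\To d,v$ follows by induction on the derivation of $p\sssto{d}\now{v}$ using the reflexive seed $\now{v}\To 0,v$. The corresponding base case for divergence $p\ssto{d}\none\Rightarrow p\To d$ is handled by a symmetric inductive argument on small-step divergence, treating the two schemes (finite prefix ending in a $\ssto{d}\none$ step versus infinite sequence) separately.

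The main obstacle, as anticipated, lies in the infinite case: showing that the infinitary small-step divergence premise $p\ssto{d_1}p'\ssto{d_2}p''\cdots$ forces a big-step derivation with total duration exactly $\sum_i d_i$. The subtle point is that, by determinacy (Proposition~\ref{prop:deter}), each such $p^{(i)}$ has a unique big-step classification; combined with the convergence lemma applied repeatedly, the infinite small-step sequence must unfold a unique while-loop infinitely often (all non-loop constructs, such as sequential composition or conditionals, can only contribute finitely many non-terminating steps before exposing a loop at the head). Isolating that unique outermost loop and matching its iterates with the premises of \textbf{(wh$_{\bm5}^{\mathbf{d}}$)} then yields the desired big-step judgement with duration $\sum_i d_i$, completing the equivalence.
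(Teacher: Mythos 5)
Your overall decomposition coincides with the paper's: the same four implications, proved via (i) a characterisation of $\sssto{d}$ in terms of one-step chains (your ``congruence lemmas'' play the role of the paper's Lemma~\ref{lem:finite_chains}), (ii) induction on the big-step derivation for soundness, using the structural reformulation of Remark~\ref{rem:while} and a direct construction of the infinite chain for \textbf{(wh$_{\bm5}^{\mathbf{d}}$)}, and (iii) the auxiliary lemma that $p\ssto{d}p'$ and $p'\To e,v$ imply $p\To d+e,v$ for the convergent completeness direction, which is exactly the paper's Lemma~\ref{lem:eq-finite-op}.

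The one place where your sketch would not go through as written is the infinite-divergence direction (the paper's Lemma~\ref{lem:eq-inf-op}, case of an infinite one-step chain). First, the divergent loop is in general never ``exposed at the head'': in $\mbind{x\ass p'}{q}$ with $p'$ divergent, the congruence step $\mbind{x\ass p}{q}\ssto{d}\mbind{x\ass p'}{q}$ keeps the bind as the outermost constructor forever, so there is no stage at which a while-loop becomes the top-level term. The paper instead runs an induction on the number of computation constructs of $p$ and, in the bind case, splits on whether the chain of the first component is finite (pass to $q[w/x]$ and reapply the induction hypothesis) or infinite (apply the induction hypothesis to $p'$ and conclude by \textbf{(seq$_{\bm1}^{\mathbf{d}}$)}). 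Second, an infinite chain for a while-loop need not arise from infinitely many unfoldings: it may come from the initial computation, or from a single iterate whose body already produces an infinite chain, in which case the big-step judgement is obtained via \textbf{(wh$_{\bm1}^{\mathbf{d}}$)} or the structural \textbf{(wh$_{\bm3}^{\mathbf{d}}$)}-scheme rather than \textbf{(wh$_{\bm5}^{\mathbf{d}}$)}; your plan only covers the genuinely infinitary sub-case. The appeal to Proposition~\ref{prop:deter} does not repair this, since determinacy gives uniqueness, not existence, of a big-step classification. Adding the explicit three-way case split recovers the paper's argument; everything else in your proposal matches it.
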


The claimed equivalence is entailed by the following
Lemmas \ref{lem:finite_chains}\dash\ref{lem:eq-inf-op}.
\begin{lemma}\label{lem:finite_chains} For any two closed programs
  $p,q$,
\begin{enumerate}
\item unless~$p \neq \now v$, the judgement $p \sssto{d} q$ is
  derivable iff there exists a finite chain
  $p \ssto{d_1} \dots \ssto{d_n} q$ such that $\sum d_i = d$;
  \item the judgement $p\sssto{e}$ is derivable iff
  \begin{enumerate}[label=(\alph*)]
  \item either $p \ssto{d_1} p' \ldots \ssto{d_n} q\ssto{d_{n+1}}$
    (with $n=0$ meaning $p \ssto{e} $),
    \item or $p\ssto{d_1} p'\ssto{d_2} \ldots,$
  \end{enumerate}
  with suitable $p', \ldots$ and $q$ where $\sum\nolimits_i d_i = d$.
\end{enumerate}
\end{lemma}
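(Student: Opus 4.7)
The plan is to prove both items by induction on the derivation of the $\sssto{}$-judgement in the forward direction, and by straightforward construction of $\sssto{}$-derivations in the backward direction, using Part (1) as an auxiliary in Part (2).

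For Part (1), the backward direction is immediate: given a chain $p \ssto{d_1} p_1 \ssto{d_2} \cdots \ssto{d_n} q$, first apply the rule $\infrule{p\ssto{d_1} p_1}{p\sssto{d_1} p_1}$, then iterate the composition rule $\infrule{p\sssto{d} p'\; p'\ssto{e} q'}{p\sssto{d+e} q'}$ to splice on each successive step. For the forward direction, I proceed by induction on the derivation of $p \sssto{d} q$, analyzing the last rule applied: the reflexivity rule $\now{v}\sssto{0}\now{v}$ forces $p = \now v$, which is excluded by hypothesis; the one-step promotion rule $\infrule{p\ssto{d}q}{p\sssto{d}q}$ yields a chain of length one; and in the composition case, the sub-judgement $p \sssto{d'} p'$ has the same head $p \neq \now v$, so the induction hypothesis produces a chain $p \ssto{d_1} \cdots \ssto{d_m} p'$, onto which the premise $p' \ssto{e} q$ is appended.

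For Part (2), the backward direction is again direct: for case (a), build $p \sssto{d_1+\cdots+d_n} q$ by Part (1) (if $n\geq 1$) and then apply the rule $\infrule{p\sssto{d}p'\; p'\ssto{e}\none}{p\sssto{d+e}\none}$, or use the promotion rule $\infrule{p\ssto{e}\none}{p\sssto{e}\none}$ when $n=0$; for case (b), apply the infinitary rule directly to the infinite chain. The forward direction proceeds by case analysis on the last rule of the derivation of $p \sssto{e} \none$: the single-step promotion gives case (a) with $n=0$; the infinitary rule gives case (b) directly; in the composition rule with premises $p \sssto{d'} p'$ and $p' \ssto{e'} \none$, I first observe that $p \neq \now v$ (otherwise the premise $p \sssto{d'} p'$ could only be derived by reflexivity, forcing $p' = \now v$, but $\now v$ has no outgoing $\ssto{}$-transitions among the rules of Figure~\ref{fig:small-dur-oper}), and then invoke Part (1) to obtain a finite chain $p \ssto{d_1} \cdots \ssto{d_m} p'$, onto which $p' \ssto{e'} \none$ is appended, yielding case (a).

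The main subtlety will be the coordination between the two parts: Part (2)'s composition case relies on Part (1), whose hypothesis $p \neq \now v$ must be verified from the shape of the unary premise. This is the step that requires the small observation about irreducibility of $\now v$. The infinitary rule contributes no real difficulty beyond making the disjunction (a)/(b) necessary rather than a single uniform clause, and the rearrangement of arbitrary finite/infinite sums $d_1 + d_2 + \cdots$ is purely formal.
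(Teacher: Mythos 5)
Your proof is correct. The paper itself gives no proof of Lemma~\ref{lem:finite_chains} --- it is stated without argument, followed only by the remark that $\now{v}$ is not one-step reducible and hence $\now{v}\sssto{0}\now{v}$ is the only derivable judgement with head $\now{v}$; that remark is exactly the key observation you isolate to discharge the composition case of Part~(2) and to justify excluding $p=\now{v}$ from Part~(1), so your induction on derivations (finitary for the binary relation, case analysis on the last rule plus Part~(1) for the unary one) is a faithful and complete filling-in of what the authors treated as routine.
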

\noindent
Note that $\now{v}$ is not one-step reducible, and hence
$\now{v} \sssto{0} \now{v}$ is the only provable judgement of the form
$\now{v} \sssto{d} q$.

\begin{lemma}\label{lem:eq-finite}
For every closed program $p$, $p\To d,v$ implies\ ${p\sssto{d} \now v}$.
\end{lemma}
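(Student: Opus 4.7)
The plan is to proceed by induction on the derivation of $p \To d, v$, matching each big-step rule to a corresponding sequence of small-step transitions whose durations sum to $d$. Before the main induction, I would establish two auxiliary facts about $\sssto{d}$ that follow directly from Lemma~\ref{lem:finite_chains}(1): first, transitivity with duration addition, i.e.\ if $p \sssto{d_1} p'$ and $p' \sssto{d_2} p''$ then $p \sssto{d_1 + d_2} p''$; second, a \emph{congruence lemma} lifting the structural small-step rules such as \textbf{(ret-bind1)} and \textbf{(d-while1)} from single-step to multi-step, e.g.\ if $p \sssto{d} p'$ then $\mbind{x \ass p}{q} \sssto{d} \mbind{x \ass p'}{q}$, and likewise for while-loops. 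Both facts follow by a straightforward induction on the length of the finite chain witnessing $p \sssto{d} p'$.

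The case analysis is then largely routine. For $p = \now{v}$, the base rule $\now{v} \sssto{0} \now{v}$ closes the case immediately. For the pattern-matching, conditional, and sequencing rules, one invokes the inductive hypothesis on each premise, concatenates the resulting chains using transitivity, and inserts the relevant instantaneous one-step rule (\textbf{(match1)}, \textbf{(if1)}/\textbf{(if2)}, \textbf{(ret-bind2)}) at the appropriate juncture, using the congruence lemma to lift subderivations sitting under a sequencing context. The \textbf{(tr)} case reduces to a single application of \textbf{(cond2)} followed by the base clause for $\now{\argument}$, since the side conditions in the two rules coincide exactly.

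The main obstacle is the while-loop case. Here I would follow Remark~\ref{rem:while} and replace \textbf{(wh$_{\bm3}^{\mathbf{d}}$)} and \textbf{(wh$_{\bm4}^{\mathbf{d}}$)} by their structural reformulations from Figure~\ref{fig:rule:while}, so that each premise is strictly smaller than its conclusion and the induction is well-founded. The \emph{false-guard} case is handled by lifting the premise $p \sssto{d_1} \now v$ through \textbf{(d-while1)}-congruence and concluding with one instantaneous step via \textbf{(d-while2)}. The \emph{true-guard} case is the genuinely delicate one: after similarly lifting $p \sssto{d_1} \now v$ and applying \textbf{(d-while4)} instantaneously, we arrive at $\mwhile{x\ass q[v/x]}{w}{q}$, and we need the inductive hypothesis to supply $\mwhile{x\ass q[v/x]}{w}{q} \sssto{d_2} \now{v'}$. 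With the structural reformulation this recursive premise is a strict subderivation, so the hypothesis applies and one final use of transitivity yields the desired $\sssto{d_1+d_2} \now{v'}$.
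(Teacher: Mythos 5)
Your proposal is correct and follows essentially the same route as the paper's proof: induction on the big-step derivation, using Lemma~\ref{lem:finite_chains} to extract finite small-step chains that are lifted through the sequencing and while contexts before inserting the instantaneous unfolding steps. Two minor caveats: your congruence lemma must exclude the degenerate case $p=\now v$ (e.g.\ $\mbind{x\ass \now v}{q}\sssto{0}\mbind{x\ass \now v}{q}$ is not derivable, which is why the paper performs an explicit case split on $p=\now v$ with $d=0$), and the appeal to Remark~\ref{rem:while} is unnecessary here, since under induction on the derivation the recursive premise of \textbf{(wh$_{\bm4}^{\mathbf{d}}$)} is already a strict subderivation (the reformulation is needed for Proposition~\ref{prop:deter}, whose induction is on term size).
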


\begin{proof}
  We proceed by induction over the derivation of $p\To d,v$. Note that
  every rule in Figure~\ref{fig:dur-oper}, whose conclusion is of the
  form $p\To d,v$, has as premise big-step judgements only the
  judgements in the same form. In order to obtain a proof, we thus
  need to check the induction step for every rule. This is
  straightforward for all the rules, except for the ones for
  sequential composition and for while-loops. We consider the ones
  that apply, in detail.

  {\textbf{(seq$_{\bm3}^{\mathbf{d}}$)}} The premises read as
  $p\To d,v$ and $q[v/x]\To e,w$, hence, by induction,
  $p\sssto{d} \now v$, $q[v/x]\sssto{e} \now w$.  If $p=\now v$ then
  $d=0$ and $\mbind{x\ass p}{q}\ssto{0} q[v/x]\sssto{e} \now w$, and
  hence $\mbind{x\ass p}{q}\sssto{d+e} \now w$, as desired. Otherwise,
  by Lemma~\ref{lem:finite_chains}, 
\begin{displaymath}
  \mbind{x\ass p}{q}\ssto{d_1}\ldots \ssto{d_n} \mbind{x\ass \now v }{q}\ssto{0} q[v/x] \sssto{e}  \now w
\end{displaymath}
where $p\ssto{d_1}\ldots \ssto{d_n} \now v$, $d=\sum\nolimits_i d_i$. Hence,
again $\mbind{x\ass p}{q}\sssto{\!d+e\!} \now w$.
{\textbf{(wh$_{\bm2}^{\mathbf{d}}$)}} The premises imply by induction
hypothesis $p \sssto{d} \now w$. The case $p=\now{w}$ is as in the previous clause. Otherwise,
we have
\begin{align*}
  \mwhile{x\ass p}{v}{q} \ssto{d_1}\ldots \ssto{d_n}& \mwhile{x\ass \now w}{v}{q}\\ \ssto{0}&\; \now w
\end{align*}
with $p\ssto{d_1}\ldots \ssto{d_n} \now w$, $d=\sum\nolimits_i d_i$, which yields
the derivation $\mwhile{{x\ass p}}{v}{q}\sssto{d}\now w$.

{\textbf{(wh$_{\bm4}^{\mathbf{d}}$)}} We obtain $p \sssto{d} \now w$
and $\mwhile{x\ass q[w/x]}{v}{q} \sssto{r} \now u$ from the rule
premises. After disregarding the obvious case $p=\now w$, we obtain
similarly to the previous clause
\begin{align*}
  \mwhile{x\ass p}{v}{q} \ssto{d_1}\ldots \ssto{d_n} \mwhile{x\ass \now w}{v}{q} \\
  \sssto{r} \now u
\end{align*}
Hence $\mwhile{x\ass p}{v}{q}\sssto{d+r} \now u$, and we are done.
\end{proof}

\begin{lemma}\label{lem:eq-finite-op}
  For every closed program $p$, $p\sssto{d} \now v$ implies
  $p\To d,v$.
\end{lemma}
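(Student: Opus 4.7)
The plan is to argue by induction on the length of the reduction chain witnessing $p \sssto{d} \now v$. By Lemma~\ref{lem:finite_chains} (clause~1), either $p = \now v$ with $d = 0$, or there is a finite chain $p \ssto{d_1} \cdots \ssto{d_n} \now v$ with $\sum_i d_i = d$. The base case is immediate: the only derivation of $\now v \To d', v'$ in Figure~\ref{fig:dur-oper} forces $d' = 0$ and $v' = v$, exactly what is needed. For the inductive step, it suffices to establish the following head-expansion auxiliary lemma: whenever $p \ssto{e} p'$ and $p' \To r, v$, we have $p \To e + r, v$. Applying this to $p \ssto{d_1} p_1 \sssto{d - d_1} \now v$ together with the inductive hypothesis on the shorter chain yields $p \To d, v$.

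The head-expansion lemma is proved by induction on the derivation of $p \ssto{e} p'$. For the non-recursive small-step rules — \textbf{(match1)}, the second \textbf{(ret-bind2)}, \textbf{(if1)}, \textbf{(if2)}, \textbf{(cond2)}, \textbf{(d-while2)}, \textbf{(d-while4)} — the result of one reduction is precisely the premise of the matching big-step rule, so $p \To e + r, v$ is obtained by applying that rule directly (with the side-conditions on the predicate $w$ in the abstraction case carried over verbatim).

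The interesting cases are the two congruence rules \textbf{(ret-bind1)} and \textbf{(d-while1)}, where the reduct is not structurally smaller in an obvious sense. Consider \textbf{(ret-bind1)}: we have $\mbind{x \ass p}{q} \ssto{e} \mbind{x \ass p'}{q}$ from $p \ssto{e} p'$, together with $\mbind{x \ass p'}{q} \To r, v$. Inspecting Figure~\ref{fig:dur-oper}, the only rule that can derive a judgement of the form $\mbind{x \ass p'}{q} \To r, v$ is \textbf{(seq$_{\bm3}^{\mathbf{d}}$)}, so $p' \To d', v'$ and $q[v'/x] \To r', v$ with $d' + r' = r$. The inner induction hypothesis gives $p \To e + d', v'$, and a fresh application of \textbf{(seq$_{\bm3}^{\mathbf{d}}$)} produces $\mbind{x \ass p}{q} \To (e + d') + r' = e + r, v$. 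The case \textbf{(d-while1)} is analogous: inverting the big-step derivation of the reduct forces either \textbf{(wh$_{\bm2}^{\mathbf{d}}$)} or \textbf{(wh$_{\bm4}^{\mathbf{d}}$)}, and in each instance the inner hypothesis upgrades $p \ssto{e} p'$ with the subderivation for $p'$ into the needed big-step derivation for~$p$.

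The main obstacle is the inversion step inside the congruence cases: one must be sure that no other big-step rule can produce the judgement for the reduct. Here I rely on the fact that the conclusions of the big-step rules for sequential composition and while-loops of the form $\To d, v$ (as opposed to $\To d$) are uniquely pattern-matched by the syntactic shape of the left-hand program — a fact already exploited in Proposition~\ref{prop:deter} — so the case analysis is forced and exhaustive. With both lemmas in hand, the outer induction on chain length closes the argument.
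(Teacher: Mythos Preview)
Your proposal is correct and follows essentially the same approach as the paper: reduce the claim to a head-expansion lemma ($p\ssto{e} p'$ and $p'\To r,v$ imply $p\To e+r,v$), prove that by induction on the small-step derivation, and then conclude by induction on the length of the reduction chain. The paper states this plan in two sentences and omits the case analysis you spell out; your treatment of the congruence cases \textbf{(ret-bind1)} and \textbf{(d-while1)} via inversion of the big-step rules is exactly the content the paper leaves implicit.
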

\begin{proof}
  It suffices to show that $p\ssto{d} p'$ with $p'\To r,v$ imply
  $p \To d+r, v$, from which the claim follows by obvious
  induction. The latter statement in turn follows by induction over
  the derivation of $p\ssto{d} p'$ using the rules in
  Figure~\ref{fig:small-dur-oper}.
\end{proof}
We consider the proofs of the analogous results in the non-terminating
cases in more detail.

\begin{figure*}
\begin{minipage}{0.90\textwidth}%
\begin{flalign*}
  \lrule{(tr$_{\bm1}^{\mathbf{d}}$)}{
      \forall s\leq d.\, w[v[s/t]/x]=\True\qquad \forall e>0.\,\exists s\in (d,d+e).\, w[v[s/t]/x]=\False
	}{
		\cwhile{x\ass t.\,v}{w} \To d, v[d/t]
	}
&&
  \lrule{(pm$_{\bm1}^{\mathbf{d}}$)}{
    p[v/x,w/y] \To d
  }{
    \pcase{\brks{x,y}\ass\brks{v,w}}{p} \To d
  }
\end{flalign*}\\[-3ex]
\begin{flalign*}
  \lrule{(tr$_{\bm2}^{\mathbf{d}}$)}{
      \forall s<d.\, w[v[s/t]/x]=\True\qquad w[v[d/t]/x]=\False
	}{
		\cwhile{x\ass t.\,v}{w} \To d
	}
&&
  \lrule{(pm$_{\bm2}^{\mathbf{d}}$)}{
    p[v/x,w/y] \To d, u
  }{
    \pcase{\brks{x,y}\ass\brks{v,w}}{p} \To d, u
  }
\end{flalign*}\\[-3ex]
%
\begin{flalign*}
    \lrule{(seq$_{\bm1}^{\mathbf{d}}$)}{
      p \To d
    }{
      \mbind{x\ass p}{q} \To d
    }
&&
    \lrule{(seq$_{\bm2}^{\mathbf{d}}$)}{
      p \To d, v\qquad q[v/x]\To e
    }{
      \mbind{x\ass p}{q} \To d+e
    }
&&
    \lrule{(seq$_{\bm3}^{\mathbf{d}}$)}{
      p \To d, v\qquad q[v/x]\To e, w
    }{
      \mbind{x\ass p}{q} \To d+e, w
    }
\end{flalign*}\\[-3ex]
\begin{flalign*}
    \lrule{(now$^{\mathbf{d}}$)}{
    }{
      \now v \To 0, v
    }
&&
   \lrule{(if$_{\bm1}^{\mathbf{d}}$)}{
      p\To d
    }{
      \mif{\True}{p}{q} \To d
    }
&&
    \lrule{(if$_{\bm2}^{\mathbf{d}}$)}{
      q\To d
    }{
      \mif{\False}{p}{q} \To d
    }
\end{flalign*}\\[-3ex]
\begin{flalign*}
   \lrule{(if$_{\bm3}^{\mathbf{d}}$)}{
      p\To d, v
    }{
      \mif{\True}{p}{q} \To d,v
    }
&&
    \lrule{(if$_{\bm4}^{\mathbf{d}}$)}{
      q\To d, v
    }{
      \mif{\False}{p}{q} \To d,v
    }
&&
    \lrule{(wh$_{\bm1}^{\mathbf{d}}$)}{
      p \To d
    }{
      \mwhile{x\ass p}{v}{q} \To d
    }
\end{flalign*}\\[-3ex]
\begin{flalign*}
  \lrule{(wh$_{\bm2}^{\mathbf{d}}$)}{
      p \To d,w\qquad v[w/x] = \False
	}{
		\mwhile{x\ass p}{v}{q} \To d,w
	}
&&
  \lrule{(wh$_{\bm3}^{\mathbf{d}}$)}{
      p \To d,w\qquad v[w/x] = \True\qquad \mwhile{x\ass q[w/x]}{v}{q} \To r
  }{
	  \mwhile{x\ass p}{v}{q} \To d + r
  }
\end{flalign*}\\[-3ex]
%
\begin{flalign*}
  \lrule{(wh$_{\bm4}^{\mathbf{d}}$)}{
      p \To d,w\qquad v[w/x] = \True\qquad \mwhile{x\ass q[w/x]}{v}{q} \To r,u
	}{
		\mwhile{x\ass p}{v}{q} \To d + r, u
	}
  \end{flalign*}\\[-3ex]
\begin{flalign*}
&& 
   \lrule{(wh$_{\bm5}^{\mathbf{d}}$)}{
        p\To d_0,w_0\qquad 
        q[w_0/x] \To d_{1},w_{1}\qquad q[w_1/x] \To d_{2},w_{2}\qquad \ldots        
        \qquad \forall i\in\omega.\, v[w_i/x] = \True
	  }{
		  \mwhile{x\ass p}{v}{q} \To \sum\nolimits_i d_i
	  }&&
\end{flalign*}
\end{minipage}        
\caption{Big-step duration operational semantics}\label{fig:dur-oper}
\end{figure*}
\begin{lemma}\label{lem:eq-inf}
For every closed program $p$, $p\To d$ implies $p\sssto{d} \none\,$.
\end{lemma}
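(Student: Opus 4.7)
The plan is to proceed by induction on the derivation of $p\To d$, treating each rule in Figure~\ref{fig:dur-oper} whose conclusion is a divergence judgement, namely \textbf{(tr$_{\bm2}^{\mathbf{d}}$)}, \textbf{(pm$_{\bm1}^{\mathbf{d}}$)}, \textbf{(seq$_{\bm1}^{\mathbf{d}}$)}, \textbf{(seq$_{\bm2}^{\mathbf{d}}$)}, \textbf{(if$_{\bm1}^{\mathbf{d}}$)}, \textbf{(if$_{\bm2}^{\mathbf{d}}$)}, \textbf{(wh$_{\bm1}^{\mathbf{d}}$)}, \textbf{(wh$_{\bm3}^{\mathbf{d}}$)}, and \textbf{(wh$_{\bm5}^{\mathbf{d}}$)}. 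The base cases \textbf{(tr$_{\bm2}^{\mathbf{d}}$)} and the conditional cases reduce directly to a single small-step rule in Figure~\ref{fig:small-dur-oper} followed by an application of the one-step closure rule for $\sssto{}$. The cases \textbf{(pm$_{\bm1}^{\mathbf{d}}$)}, \textbf{(if$_{\bm1}^{\mathbf{d}}$)} and \textbf{(if$_{\bm2}^{\mathbf{d}}$)} additionally require splicing a $0$-duration instantaneous transition in front of the inductively obtained infinite or finite-then-diverging small-step derivation.

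For \textbf{(seq$_{\bm1}^{\mathbf{d}}$)}, \textbf{(wh$_{\bm1}^{\mathbf{d}}$)}, and similar propagation rules, the induction hypothesis gives $p'\sssto{d}\none$ for the subprogram; by Lemma~\ref{lem:finite_chains} this unfolds either as a finite chain ending in $p''\ssto{e}\none$ or as an infinite chain, and in either shape the structural small-step rules \textbf{(ret-bind1)}/\textbf{(ret-bind2)} (respectively \textbf{(d-while1)}) lift it verbatim to a derivation of $\mbind{x\ass p'}{q}\sssto{d}\none$ (respectively $\mwhile{x\ass p'}{v}{q}\sssto{d}\none$). For \textbf{(seq$_{\bm2}^{\mathbf{d}}$)} and \textbf{(wh$_{\bm3}^{\mathbf{d}}$)} the first premise is a terminating big-step judgement, so Lemma~\ref{lem:eq-finite} converts it to a finite $\sssto{d}\now v$ chain; the corresponding structural small-step rules lift this chain into the larger term, followed by an instantaneous \textbf{(ret-bind2)} or \textbf{(d-while4)} transition into the continuation, to which the induction hypothesis supplies a $\sssto{e}\none$ derivation. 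Concatenating via the closure rules yields $\sssto{d+e}\none$.

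The main obstacle, and the only genuinely non-inductive case, is \textbf{(wh$_{\bm5}^{\mathbf{d}}$)}. Here the premises give terminating big-step judgements $p\To d_0,w_0$ and $q[w_i/x]\To d_{i+1},w_{i+1}$ for all $i\in\omega$, all with $v[w_i/x]=\True$, and the claimed total duration is $\sum\nolimits_i d_i$. Applying Lemma~\ref{lem:eq-finite} to each premise yields finite small-step chains $p\sssto{d_0}\now{w_0}$ and $q[w_i/x]\sssto{d_{i+1}}\now{w_{i+1}}$; by Lemma~\ref{lem:finite_chains} each of these unfolds as an explicit sequence of $\ssto{}$-steps. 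The structural rule \textbf{(d-while1)} lifts the first such chain to a chain from $\mwhile{x\ass p}{v}{q}$ to $\mwhile{x\ass \now{w_0}}{v}{q}$; the rule \textbf{(d-while4)}, applicable since $v[w_0/x]=\True$, then performs a $0$-duration transition to $\mwhile{x\ass q[w_0/x]}{v}{q}$, after which the next finite chain is lifted in the same way, and so on. Splicing these blocks together produces an infinite $\ssto{}$-sequence whose step-durations sum to $\sum\nolimits_i d_i=d$, and the infinitary closure rule delivers precisely $\mwhile{x\ass p}{v}{q}\sssto{d}\none$.
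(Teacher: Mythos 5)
Your proposal is correct and follows essentially the same route as the paper: induction over the big-step derivation, converting terminating premises via Lemma~\ref{lem:eq-finite}, unfolding $\sssto{}$ judgements into explicit chains via Lemma~\ref{lem:finite_chains}, lifting those chains through the structural small-step rules, and handling \textbf{(wh$_{\bm5}^{\mathbf{d}}$)} by splicing infinitely many finite blocks joined by instantaneous \textbf{(d-while4)} transitions. If anything, you are slightly more explicit than the paper in separating the appeal to Lemma~\ref{lem:eq-finite} from the appeal to Lemma~\ref{lem:finite_chains} in the Zeno case, and in spelling out the routine \textbf{(pm)}/\textbf{(if)} cases.
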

\begin{proof}
  The proof runs analogously to the proof of Lemma~\ref{lem:eq-finite}
  and occasionally calls the latter. Again, we restrict ourselves to
  verifying the rules for sequential composition and while-loops and
  make free use of Lemma~\ref{lem:finite_chains}.

  {\textbf{(seq$_{\bm1}^{\mathbf{d}}$)}} The premise of the rule
  implies $p\sssto{d}$ by induction hypothesis. We proceed by case
  distinction.
  \begin{itemize}
    \item $p\sssto{e} p'$ and $p'\ssto{r}$ where $d = e+r$. Then
  $\mbind{x\ass p}{q}\sssto{e} \mbind{x\ass p'}{q}\ssto{r}$, and hence
 $\mbind{x\ass p}{q}\sssto{d}$
    \item $p\ssto{d_1} p'\ssto{d_2}\ldots$, $d=\sum\nolimits_i d_i$. Then
\begin{displaymath}
  \mbind{x\ass p}{q}\ssto{d_1} \mbind{x\ass p'}{q}\ssto{d_2}\ldots,
\end{displaymath} and hence again
 $\mbind{x\ass p}{q}\sssto{d}$.
  \end{itemize}

  {\textbf{(seq$_{\bm2}^{\mathbf{d}}$)}} Using
  Lemma~\ref{lem:eq-finite} and the induction hypothesis, the rule
  premises imply $p \sssto{d} \now v$ and $q[v/x]\sssto{e}$. After
  disregarding the obvious case $p=\now v$, the former implies
  $\mbind{x\ass p}{q}\sssto{d} \mbind{x\ass\now v}{q}$, and therefore,
  the requisite reduction is constructed as follows:
\begin{displaymath}
  \mbind{x\ass p}{q}\sssto{d} \mbind{x\ass\now v}{q} \ssto{0} q[v/x]\sssto{e}.
\end{displaymath}

{\textbf{(wh$_{\bm1}^{\mathbf{d}}$)}} The induction hypothesis implies
$p\sssto{d}$. By case distinction:
  \begin{itemize}
    \item $p\sssto{e} p'$ and $p'\ssto{r}$ where $d = e+r$; then we obtain
  $\mwhile{x\ass p}{v}{q}\sssto{e} \mwhile{x\ass p'}{v}{q}\ssto{r}$, and hence
 $\mwhile{x\ass p}{v}{q}\sssto{d}$
    \item $p\ssto{d_1} p'\ssto{d_2}\ldots$, $d=\sum\nolimits_i d_i$; then
\begin{displaymath}
  \mwhile{x\ass p}{v}{q}\ssto{d_1} \mwhile{x\ass p'}{v}{q}\ssto{d_2}\ldots,
\end{displaymath} and hence again  $\mwhile{x\ass p}{v}{q}\sssto{d}$.
  \end{itemize}

  {\textbf{(wh$_{\bm3}^{\mathbf{d}}$)}} By applying the induction
  hypothesis and Lemma~\ref{lem:eq-finite} to the premises, we obtain
  the judgments $p \sssto{d} \now w$ and
  $\mwhile{x\ass q[w/x]}{v}{q} \sssto{r}$.  After disregarding the
  obvious case $p=\now w$, we obtain:
  \begin{align*}
    \mwhile{x\ass p}{v}{q} &\,\sssto{d} \mwhile{x\ass\now w}{v}{q}\ssto{0}\\
    \mwhile{x\ass q[w/x]}{v}{q}  &\,\sssto{r} 
  \end{align*}

  {\textbf{(wh$_{\bm5}^{\mathbf{d}}$)}} Finally, let us consider the
  case of Zeno behaviour of the while-loop. By applying
  Lemma~\ref{lem:finite_chains} to the rule premises, we obtain
  $p\sssto{d_0} \now {w_0}$, and
  $q[w_i/x]\sssto{d_{i+1}} \now{w_{i+1}}$ for all $i\in\omega$.  Thus
  we obtain the following infinite chain
\begin{align*}
\mwhile{x\ass p}{v}{q}
\sssto{d_0}&\; \mwhile{x\ass \now{w_0}}{v}{q}\ssto{0}\\ & \; \mwhile{x\ass q[w_0/x]}{v}{q}\\
\sssto{d_1}&\; \mwhile{x\ass \now{w_1}}{v}{q}\ssto{0}\\ & \; \mwhile{x\ass q[w_1/x]}{v}{q}\\
\vdots\quad&\;
\end{align*}

\vspace{-2ex}
\noindent as desired.
\end{proof}

\begin{figure*}
  \begin{minipage}{0.90\textwidth}\normalfont
  \begin{align*}
        \lrule{(wh${_{\bm3}^{\mathbf{d}}}{}'$)}{
    p \To d_0,w_0\qquad (q[w_i/x] \To d_{i+1},w_{i+1})_{i < n} \qquad
      q[w_n/x]  \To d_{n +1} \qquad \forall i \leq n.\, v[w_i/x] = \True
            }{
                    \mwhile{x\ass p}{v}{q} \To d_0 +\ldots+ d_{n +1 }
            }\qquad (n\in\omega)\\
  \end{align*}
  \begin{align*}
        \lrule{(wh${_{\bm4}^{\mathbf{d}}}{}'$)}{
         p \To d_0,w_0\qquad (q[w_i/x] \To d_{i+1},w_{i+1})_{i < n} \qquad
    v[w_n/x] = \False \qquad  \forall i< n.\, v[w_i/x]
    = \True
            }{
                    \mwhile{x\ass p}{v}{q} \To d_0 +\ldots+ d_n,w_n
            }\qquad (n\in\omega) 
  \end{align*}
  \end{minipage}
  \caption{Structural rules for while-loops}
  \label{fig:rule:while}
  \end{figure*}
\begin{lemma}\label{lem:eq-inf-op}
For every closed program $p$, $p\sssto{d} \none\,$ implies $p\To d$.
\end{lemma}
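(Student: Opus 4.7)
The strategy is to mirror the proof of Lemma~\ref{lem:eq-inf}. As a preliminary I would establish the dual of Lemma~\ref{lem:eq-finite-op} for diverging judgements: \textbf{(i)} if $p\ssto{d}p'$ and $p'\To e$ then $p\To d+e$, and \textbf{(ii)} if $p\ssto{d}\none$ then $p\To d$. Both would be proved by induction on the derivation of the small-step judgement from Figure~\ref{fig:small-dur-oper}, reading each case off against the corresponding big-step rule of Figure~\ref{fig:dur-oper}; the while-loop cases would invoke the structural variants \textbf{(wh${_{\bm3}^{\mathbf{d}}}'$)} and \textbf{(wh${_{\bm4}^{\mathbf{d}}}'$)} of Figure~\ref{fig:rule:while} per Remark~\ref{rem:while}, and for sequential composition and while-loops one also calls on Lemma~\ref{lem:eq-finite-op} to handle any terminating-value components that arise in the big-step derivation being analysed.

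By Lemma~\ref{lem:finite_chains}(2), $p\sssto{d}\none$ arises in one of two ways. The finite-chain case $p\ssto{d_1}\ldots\ssto{d_n}p_n\ssto{d_{n+1}}\none$ with $\sum_i d_i = d$ is immediate: part~\textbf{(ii)} above supplies $p_n\To d_{n+1}$, and iterating part~\textbf{(i)} back along the chain yields $p\To d$.

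The infinite-chain case $p\ssto{d_1}p_1\ssto{d_2}\ldots$ with $\sum_i d_i = d$ is the principal obstacle, because \textbf{(wh$_{\bm5}^{\mathbf{d}}$)} is the only big-step rule capable of deriving such a judgement from an infinite collection of premises. I would proceed by strong induction on the size of $p$, counting only computation constructs, so that every value substitution $q[v/x]$ is non-increasing and thus genuinely smaller than its enclosing $\mbind{x\ass p_1}{q}$ or $\mwhile{x\ass p_1}{v}{q}$. The cases $\now v$ and $\cwhile{x\ass t.\,v}{w}$ are vacuous since such programs admit no infinite reduction chain, and the \textbf{(pm)} and \textbf{(if)} cases discard an initial instantaneous step and appeal to the induction hypothesis on a strictly smaller program. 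For $p=\mbind{x\ass p_1}{q}$ I would distinguish whether the chain ever passes through the glue step $\mbind{x\ass\now v}{q}\ssto{0}q[v/x]$: if not, the whole chain propagates reductions of $p_1$, so $p_1\sssto{d}\none$, and by the induction hypothesis $p_1\To d$, whence \textbf{(seq$_{\bm1}^{\mathbf{d}}$)} concludes; otherwise Lemma~\ref{lem:eq-finite-op} supplies $p_1\To d',v$ for the finite prefix, the induction hypothesis on the proper subterm $q[v/x]$ gives $q[v/x]\To d-d'$, and \textbf{(seq$_{\bm2}^{\mathbf{d}}$)} delivers the result.

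The genuinely delicate subcase is $p=\mwhile{x\ass p_1}{v}{q}$ with infinitely many unfoldings: the chain then factors as a succession of terminating iterates $p_1\sssto{d_0}\now{w_0}$ and $q[w_i/x]\sssto{d_{i+1}}\now{w_{i+1}}$, with $v[w_i/x]=\True$ for all $i$ and $\sum_i d_i = d$; Lemma~\ref{lem:eq-finite-op} upgrades each finite segment to $p_1\To d_0,w_0$ and $q[w_i/x]\To d_{i+1},w_{i+1}$, and \textbf{(wh$_{\bm5}^{\mathbf{d}}$)} finally yields $p\To d$. The remaining while-sub-cases (the chain trapped inside a single iterate, or finitely many unfoldings followed by divergence of the last iterate) reduce to the bind-style analysis, using \textbf{(wh$_{\bm1}^{\mathbf{d}}$)} together with the induction hypothesis on $p_1$, respectively the structural variant \textbf{(wh${_{\bm3}^{\mathbf{d}}}'$)} of Figure~\ref{fig:rule:while} together with the induction hypothesis applied to $q[w_{n-1}/x]$. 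The essential work, and the main obstacle, is thus the syntactic analysis needed to single out the outermost cause of divergence and, in the Zeno-like while-case, to extract the infinite stream of intermediate values feeding \textbf{(wh$_{\bm5}^{\mathbf{d}}$)}.
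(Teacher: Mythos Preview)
Your proposal is correct and follows essentially the same approach as the paper's proof: both split via Lemma~\ref{lem:finite_chains}(2) into a finite-chain case handled by the one-step compatibility facts (your (i) and (ii), which the paper also establishes inline) and an infinite-chain case handled by induction on the number of computation constructs in $p$, with the detailed syntactic case analysis for sequential composition and while-loops culminating in an application of \textbf{(wh$_{\bm5}^{\mathbf{d}}$)} for the Zeno-like case. Your presentation is slightly more explicit about the auxiliary lemmas and the size measure, but the argument is the same.
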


\begin{proof}
By Lemma~\ref{lem:finite_chains}~(2), we must consider two cases.

(a) $p\sssto{e} p'\ssto{r} \none\,$ with $d = e+r$. It follows by induction
over the derivation of $p'\ssto{r} \none$ that $p'\To r$. As in the proof of
Lemma~\ref{lem:eq-finite-op}, by subsequent induction over the length of the chain
$p\ssto{e_1}\ldots\ssto{e_n} p'$, we obtain $p\To d$.

(b) $p\ssto{d_1} p_1\ssto{d_2} p_2\ldots,$ with $d=\sum\nolimits_i d_i$. We proceed
by induction over the number of computation constructs in $p$. Unless, $p$ is
a sequential composition or a while-loop, we can apply the induction hypothesis to
$p_1\ssto{d_2} p_2\ldots$ and thus obtain $p\ssto{d_1} p_1$ with $p_1\To\sum\nolimits_{i>1} d_i$
from which the goal follows as in clause~(a). Consider the two remaining cases.

Let $p$ be $\mbind{x\ass p'}{q}$. Then, either $p'\ssto{d_1} p_1'\ldots \ssto{d_n}\now w$,
hence
\begin{displaymath}
  p\ssto{d_1} \mbind{x\ass p'_1}{q}\ldots \ssto{d_n}\mbind{x\ass \now w}{q}\ssto{0} q[w/x]
\end{displaymath}
and we are done by applying the induction hypothesis to the derivation
$q[w/x]\ssto{d_{n+1}}\ldots$, or there is an infinite chain
$p'\ssto{d_1} p_1'\ssto{d_2}\ldots$, from which we obtain by induction
that $p'\To\sum\nolimits_i d_i$ and therefore $p\To\sum\nolimits_i d_i$, by
\textbf{(seq$_{\bm1}^{\mathbf{d}}$)}.

Finally, consider the case $p=(\mwhile{x\ass p'}{b}{q})$. If there is
an infinite chain $p'\ssto{d_1} p_1'\ssto{d_2}\ldots$ the proof
reduces to the induction hypothesis analogously to the case of
sequential composition above, but now using the
rule~\textbf{(wh$_{\bm1}^{\mathbf{d}}$)}. Otherwise,
$p'\sssto{\sum\nolimits_{i\leq n_1} {d_i}} \now{w_1}$ for suitable
$n_1$ and $w_1$. 
We then iterate this case distinction, which results in a sequence of
the form
\begin{align*}
  \mwhile{x\ass p'}{b}{q}      &\sssto{\sum\nolimits_{i\leq n_1} {d_i}}
  \mwhile{x\ass q[w_1/x]}{b}{q}\\ &\sssto{\sum\nolimits_{n_1<i\leq n_2} {d_i}}
  \mwhile{x\ass q[w_2/x]}{b}{q} ~\ldots
\end{align*}
which either terminates if for some $i$, $q[w_i/x]$ produces an
infinite chain, yielding a proof by reduction to the induction
hypothesis, or does not terminate. In the latter case, by
Lemma~\ref{lem:eq-finite-op},
$p'\To \sum\nolimits_{i\leq n_1} {d_i}, w_1$, and
$q[w_k/x]\To \sum\nolimits_{n_k<i\leq n_{k+1}} {d_i}$ for every
$k\in\omega$. Therefore, $p\To \sum\nolimits_i d_i$ by
\textbf{(wh$_{\bm5}^{\mathbf{d}}$)}.
\end{proof}

\noindent
Since for every closed computation term $p$ we can build a
possibly non-terminating reduction chain $p\ssto{d_1} p_1\ssto{d_2}$ (by
applying the rules in Fig.~\ref{fig:small-dur-oper}), we immediately
obtain
\begin{corollary}[Totality]
For any closed computation term $p$, precisely one of the judgements $p\To d$ or
$p\To d,v$ is derivable with the rules in Figure~\ref{fig:dur-oper}.
\end{corollary}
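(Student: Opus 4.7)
The plan is to combine the three ingredients already at hand: the progress-type observation in the text preceding the corollary (that every closed computation admits some small-step reduction chain), the equivalence of small-step and big-step semantics given by Theorem~\ref{thm:duration-semantics}, and the determinacy proved in Proposition~\ref{prop:deter}. Uniqueness is then immediate; only existence requires work.

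For existence, I first prove (or cite as essentially routine) a \emph{progress lemma}: every closed computation term $p$ is either of the form $\now v$, or admits at least one small-step transition $p \ssto{d} q$, or admits a divergence step $p \ssto{d}$. This is established by induction on the structure of $p$, inspecting each production of the computation grammar against the rules of Figure~\ref{fig:small-dur-oper}. The only mildly delicate cases are the hybrid abstraction $\cwhile{x\ass t.\,v}{w}$, where one uses a case distinction on the behaviour of the Boolean predicate $w[v[s/t]/x]$ along the real line to pick \textbf{(cond1)} or \textbf{(cond2)}, and the while-loop, where one unfolds the head computation using the induction hypothesis.

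Using progress, starting from any closed $p$ I build, greedily, a maximal reduction sequence $p \ssto{d_1} p_1 \ssto{d_2} p_2 \ssto{d_3}\ldots$. One of exactly three situations occurs: (i) the chain is finite and ends at a term $\now v$, in which case $p \sssto{d} \now v$ with $d = \sum d_i$; (ii) the chain is finite and its last step is a divergence judgement $p_{n}\ssto{e}$, in which case $p \sssto{d+e} \none$ using the second closure rule for the unary variant of $\sssto{}$; (iii) the chain is infinite, and then $p \sssto{\sum d_i} \none$ by the infinitary closure rule. In all three cases, Theorem~\ref{thm:duration-semantics} converts this into a big-step derivation, either $p \To d,v$ or $p \To d$.

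Uniqueness follows directly from Proposition~\ref{prop:deter}, which already forbids two distinct big-step judgements of either form for the same $p$. The main obstacle, and the only step that is not essentially bookkeeping, is the progress lemma for the hybrid abstraction construct: one has to know that the premise patterns of \textbf{(cond1)} and \textbf{(cond2)} (together with the convention that the trajectory may never satisfy $w$ at all, handled via the $d=0$ instance) exhaust the possibilities for the supremum of $\{s : \forall s' \leq s.\, w[v[s'/t]/x] = \True\}$. Modulo this, the argument is a straightforward assembly of the previously established results.
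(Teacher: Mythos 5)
Your proposal is correct and follows essentially the same route as the paper: the paper derives totality "immediately" from the observation that every closed computation term admits a (possibly non-terminating) small-step reduction chain, combined with Theorem~\ref{thm:duration-semantics} for existence and Proposition~\ref{prop:deter} for uniqueness. You have merely spelled out the progress lemma and the three-way case analysis on the maximal chain that the paper leaves implicit.
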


\begin{figure*}[t]
%
\begin{minipage}{0.85\textwidth}%
  \begin{flalign*}
    \anonrule{(unit)}{}{\sem{\Gamma\vctx\star\colon 1}= \lambda\bar x.\,\star}
    &&
    \anonrule{(var)}{
	  }{
		  \sem{\Gamma \vctx x_i \colon A} = \lambda\bar x.\, x_i
	  }
    &&
    \anonrule{(sig)}{
		  \sem{\Gamma\vctx v\colon A} = h\qquad \vaco{f}\colon A\to B\in\Sigma
	  }{
		  \sem{\Gamma \vctx f(v) \colon B} = f\comp h
	  }
  \end{flalign*}\\[-2.5ex]
  \begin{flalign*}
    \anonrule{(true)}{}{\sem{\Gamma\vctx\True\colon 2} = \lambda\bar x.\,\True}
&&
    \anonrule{(false)}{}{\sem{\Gamma\vctx\False\colon 2} = \lambda\bar x.\,\False}
&&
    \anonrule{(prod)}{
		  \sem{\Gamma\vctx v\colon A} = h\qquad \sem{\Gamma\vctx w\colon B} = l
	  }{
		  \sem{\Gamma \vctx \brks{v,w} \colon A\times B} = \brks{h,l}
	  }\\[-2.5ex]
  \end{flalign*}
%

\dotfill 
\vspace{-2ex}
\begin{flalign*}
    \anonrule{(pm)}{
      \sem{\Gamma \vctx v\colon A\times B} = h\qquad
      \sem{\Gamma,x\colon A,y\colon B \cctx p \colon C} =l
    }{
      \sem{\Gamma \cctx \pcase{\brks{x,y} \ass v}{p} \colon C} = l\comp\brks{\id,h}
    }
&&
    \anonrule{(seq)}{
      \sem{\Gamma \cctx p\colon A}  = h\qquad
      \sem{\Gamma, x\colon A \cctx q \colon B} =l
    }{
      \sem{\Gamma \cctx \mbind{x\ass p}{q} \colon B} = l^\klstar
      \comp \tau \comp \brks{\id, h}
    }
\end{flalign*}\\[-2.5ex]
%
\begin{flalign*}
    \anonrule{(now)}{
      \sem{\Gamma\vctx v\colon A} = h
    }{
      \sem{\Gamma \cctx \now v \colon A} = \eta_A\comp h
    }
&&
    \anonrule{(if)}{
      \sem{\Gamma \vctx v\colon 2}  = b\qquad
      \sem{\Gamma \cctx p \colon A} = h\qquad \sem{\Gamma \cctx q \colon A} = l
    }{
      \sem{\Gamma \cctx \mif{v}{p}{q} \colon A} = \lambda\bar x.\,\ite{h(\bar x)}{b(\bar x)}{l(\bar x)}
    }
\end{flalign*}\\[-1ex]
%
\begin{align*}
    \anonrule{(tr)}{
        \sem{\Gamma,t\colon\real\vctx v\colon A}  = h~\qquad
        \sem{\Gamma,x\colon A\vctx w\colon 2}       = b~\qquad 
        \lambda\bar x.\, \sup\{e\mid \forall t\in [0,e].\, b(\bar x,h(\bar x,t)) \} =d
    }{
      \sem{\Gamma \cctx \cwhile{x\ass t.\,v}{w} \colon A} = \lambda\bar x.\, \ite{\brks{d(\bar x),h(\bar x,d(\bar x))}}{b(\bar x,h(\bar x,d(\bar x)))}{d(\bar x)}
    } 
\end{align*}\\[-1.5ex]
  \begin{align*}
    \anonrule{(wh)}{
		  \sem{\Gamma\cctx p\colon A} = h \qquad \sem{\Gamma,x\colon A\vctx v\colon 2} = b\qquad \sem{\Gamma,x\colon A\cctx q\colon A} = l
	  }{
		  \sem{\Gamma\cctx \mwhile{x\ass p}{v}{q} \colon A} = \lambda\bar x.\, ((\lambda x.\,\ite{(Q\inr)\comp l(\bar x,x)}{b(\bar x,x)}{\eta\inl x})^\istar)^\klstar (h(\bar x))
	  }
  \end{align*}
\end{minipage}    
\caption{Denotational duration semantics}\label{fig:Q-denote}
\end{figure*}
\section{Denotational Duration Semantics}\label{sec:Q-sem}
Next we build a syntax-independent, \emph{denotational} counterpart to
the operational semantics previously introduced. In order to achieve
this, we will take advantage of the general principles of fine-grain
call-by-value~\cite{LevyPowerEtAl02,GeronLevy16,GoncharovRauchEtAl18},
and thus introduce a monad for the duration semantics.  Recall that
$(-)^\klstar$ denotes the Kleisli lifting of a monad $\BBT$, the set
$\real_{\mplus}$ denotes the non-negative real numbers and the set
$\bar\real_{\mplus}$ denotes the non-negative real numbers with
infinity.
\begin{definition}[Duration monad]\label{defn:Q-set}
The duration monad $\BBQ$ is defined by the following data:
$QX = \real_{\mplus}\times X\cup \bar\real_{\mplus}$, $\eta(x) = \brks{0,x}$, and
\begin{align*}
  (f\colon X\to QY)^\klstar(d,x) &= \brks{d+e,y} &&\text{\qquad if~~} f(x) = \brks{e,y},\\
  (f\colon X\to QY)^\klstar(d,x) &= d + e && \text{\qquad if~~} f(x) = e,\\
  (f\colon X\to QY)^\klstar(d)   &= d.
\end{align*}
\end{definition}
\noindent
It is straightforward to check that $\BBQ$ is a monad using the fact
that~$\realp$ is a monoid and that~$\realpe$ is a monoid
$\realp$-module.
\begin{theorem}
$\BBQ$ is a monad.
\end{theorem}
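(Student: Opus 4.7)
The plan is to verify the three Kleisli triple axioms $\eta^\klstar = \id$, $f^\klstar\eta = f$, and $(f^\klstar g)^\klstar = f^\klstar g^\klstar$ by direct case analysis on the form of the input in the disjoint union $QX = \realp\times X\cup\realpe$, relying on the commutative monoid structure of $\realpe$ (with the convention $d+\infty=\infty$) and its restriction to $\realp$.

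For the first law, I apply $\eta^\klstar$ to each shape of input. For $\brks{d,x}\in\realp\times X$, since $\eta(x)=\brks{0,x}$ the first clause of the Kleisli lifting yields $\eta^\klstar\brks{d,x}=\brks{d+0,x}=\brks{d,x}$; for $d\in\realpe$, the third clause gives $\eta^\klstar(d)=d$. For the second law, I compute $f^\klstar(\eta(x))=f^\klstar\brks{0,x}$, which yields $\brks{0+e,y}=\brks{e,y}=f(x)$ when $f(x)=\brks{e,y}$, and $0+e=e=f(x)$ when $f(x)=e$.

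The associativity law $(f^\klstar g)^\klstar = f^\klstar g^\klstar$ is the only step requiring nontrivial case analysis. I split on the shape of the input, and on the shapes of $g(x)$ and (where applicable) $f(y)$. Writing $g(x)=\brks{e,y}$ or $g(x)=e$, and in the first sub-case $f(y)=\brks{r,z}$ or $f(y)=r$, I verify that both $(f^\klstar g)^\klstar\brks{d,x}$ and $f^\klstar(g^\klstar\brks{d,x})$ reduce to the same element of $QZ$ — respectively $\brks{d+e+r,z}$, $d+e+r$, or $d+e$ — using associativity of addition. For an input $d\in\realpe$ both sides immediately yield $d$.

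There is no real obstacle here: once the extended addition on $\realpe$ is recognised as associative and agreeing with $\realp$ on its restriction, each of the finitely many cases collapses to a single application of associativity. The main thing to be careful about is simply not conflating the two summands of the disjoint union $QX$ and tracking which clause of the Kleisli lifting fires on each branch.
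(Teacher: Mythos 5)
Your verification is correct: the case analysis is exhaustive, each clause of the Kleisli lifting is applied where it fires, and the three laws do reduce to unit and associativity laws for addition (with the point, which you note, that the second clause $\brks{d,x}\mapsto d+e$ with $e\in\realpe$ uses the action of $\realp$ on $\realpe$, i.e.\ the convention $d+\infty=\infty$). However, your route is genuinely different from the paper's. The paper does not verify the axioms by hand; it instead observes that $Q=\realp\times\argument+\realpe$ is an instance of a general construction: in any distributive category, for a monoid $M$ and a monoid $M$-module $E$, the functor $M\times\argument+E$ canonically carries a (strong) monad structure -- a ``generalised writer monad'' -- and then identifies $\realp$ as the monoid under addition and $\realpe$ as the $\realp$-module. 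What the paper's approach buys is (i) strength for free and in a form that transports to categories other than $\Set$ (which matters for the authors' stated ambition of moving to $\Top$), and (ii) a conceptual placement of $\BBQ$ in a known family of monads. What your approach buys is self-containedness and the explicit confirmation that nothing beyond associativity and unitality of (extended) addition is actually used -- which is in effect a proof of the folklore fact the paper cites, specialised to $\Set$. Both are valid proofs of the stated theorem.
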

\begin{proof}
  In every \emph{distributive category} $\BC$~\cite{Cockett93},
  specifically $\BC=\Set$, for every monoid $M\in |\BC|$ and every
  monoid $M$-module $E\in |\BC|$ (i.e.\ an object with a monoid action
  $M\times E\to E$ satisfying obvious equations),
  $M\times\argument + E$ canonically extends to a (strong)
  monad~\cite{CoumansJacobs13}. We regard this as folklore. With $E$
  being the initial object,
  $M\times\argument + E\cong M\times\argument$ is known as the
  \emph{writer monad} or as the \emph{(monoid) action monad}.  In that
  sense, the duration monad can be identified as a \emph{generalised}
  writer monad with~$\realp$ viewed as a monoid under addition
  and~$\realpe$ (which is not initial) viewed as a monoid module under
  addition $\realp\times\realpe\to\realpe$ of possibly extended real
  numbers.
\end{proof}
Now, to be able to interpret while-loops over
$\BBQ$, we need to equip it with an \emph{iteration operator}
$(-)^\istar$,
\begin{align*}
  \infer{f^\istar\colon X\to QY}{f\colon X\to Q(Y+X)}
\end{align*}
Moreover, we expect $\BBQ$ to satisfy the classical laws of
iteration~\cite{BloomEsik93}, \ie\ to be (completely)~Elgot.
\begin{definition}[Elgot monads~\cite{Elgot75,AdamekMiliusEtAl11}]\label{def:elgot-monad}
  A strong mo\-nad~$\BBT$ on a distributive category is a
  \emph{(complete) Elgot monad} if it is equipped with an iteration
  operator sending each $f\colon X\to T(Y+X)$ to $f^\istar\colon X\to TY$ and
  satisfying the following principles:
\begin{itemize}
  \item\emph{fixpoint law}:  $f^\istar = [\eta,f^\istar]^\klstar  f$;
  \item\emph{naturality}: $g^{\klstar} f^{\istar} = ([(T\inl) \comp g, \eta\inr]^{\klstar} f)^{\istar}$ for $f\colon X\to T(Y+X)$, $g \colon Y \to TZ$;
  \item\emph{codiagonal}: $(T[\id,\inr] \comp f)^{\istar} = f^{\istar\istar}$ for  $f \colon X \to T((Y + X) + X)$;
  \item\emph{uniformity}: $f \comp h = T(\id+ h) \comp g$ implies
	$f^{\istar} \comp h = g^{\istar}$ for $f\colon X \to T(Y + X)$, $g\colon Z \to T(Y + Z)$ and
	$h\colon Z \to X$;
  \item\emph{strength}: $\tau\comp(\id_Z\times f^\istar) =((T\dist)\comp\tau\comp (\id_Z\times f))^\istar$ for any $f\colon X\to T(Y+X)$.
\end{itemize}
\end{definition}

\noindent
A standard strategy to define an Elgot monad structure on $\BBQ$ would be to
enrich (the Kleisli category~of) $\BBQ$ over complete partial orders,
and then compute $f^\istar$ as the least fixpoint of the continuous
map $[\eta,\argument]^\klstar f\colon\Hom(X, QY)\to \Hom(X, QY)$. However,
this would not compatible with our duration semantics, because the iteration
operator of $\BBQ$ must capture \emph{non-inductive} behaviours of
while-loops postulated in~\textbf{(wh$_{\bm5}^{\mathbf{d}}$)}, the latter
implying that even \emph{divergent} computations produce meaningful
durations. %
%
We thus proceed to make $\BBQ$ into an \emph{Elgot monad} in a
different way.  To that end we will first introduce a fine-grained
version of $\BBQ$, roughly mimicking the small-step duration
semantics.
\begin{definition}[Layered duration monad, Guardedness]\label{def:Q-layer}
Let $\hat QX=\nu\gamma.\,(X+\realp\times\gamma)$, which extends to 
a monad $\hat\BBQ$ by general results~\cite{Uustalu03}. On $\Set$ this monad
takes a concrete particularly simple form: $\hat QX = \realp^\star\times X\cup \realp^\omega$,
$\eta(x)= \brks{\eps,x}\in \hat QX$, and
\begin{align*}
  (f\colon X\to \hat QY)^\klstar(w,x) &= \brks{wu,y} &&\text{\qquad if~~} f(x) = \brks{u,y}\in\realp^\star\times Y,\\
  (f\colon X\to \hat QY)^\klstar(w,x) &= wu &&\text{\qquad if~~} f(x) = u\in \realp^\omega,\\
  (f\colon X\to \hat QY)^\klstar(w) &= w.\\[-5ex]
\end{align*}
A morphism $f\colon X\to \hat Q(Y+Z)$ is called \emph{guarded} when for
every $x \in X$ if $f(x) = (w, \inr z)$ then the word $w$ is
non-empty.  Guardedness of $f\colon X\to \hat Q(Y+X)$ intuitively means that
the iteration operator applied to $f$ successively produces durations
as the loop unfolds, in other words that the loop is
progressive. This provides a fine-grained semantics of divergence, for
every infinite stream of durations thus obtained contributes as a
specific divergent behaviour.
\end{definition}

\begin{proposition}\label{prop:layer-Q-fix}
  For a guarded morphism $f\colon X\to \hat Q(Y+X)$, there is a unique
  ${f^\istar\colon X\to \hat QY}$, satisfying the fixpoint equation
  $f^\istar = [\eta,f^\istar]^\klstar f$.
\end{proposition}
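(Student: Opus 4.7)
My plan is to exploit the fact that $\hat QY=\nu\gamma.\,(Y+\realp\times\gamma)$ is a final coalgebra for the functor $\Phi_Y=Y+\realp\times(\argument)$, so that the proposition amounts to the unique-fixpoint property of the completely iterative monad freely generated by $\realp\times(\argument)$, specialised to guarded morphisms. Both existence and uniqueness of $f^\istar$ will then be obtained by corecursion into $\hat QY$.

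For existence, I would regard $X$ as a subset of $\hat Q(Y+X)$ along the embedding $x\mapsto(\eps,\inr x)$, and equip $\hat Q(Y+X)$ with a $\Phi_Y$-coalgebra structure $\alpha$ by case analysis on the input:
\begin{itemize}
\item $\alpha(\eps,\inl y)=\inl y$;
\item $\alpha((r_0\cdot w),z)=\inr(r_0,(w,z))$ when the prefix is non-empty;
\item $\alpha(r_0\cdot w')=\inr(r_0,w')$ on $\realp^\omega$;
\item $\alpha(\eps,\inr x)=\alpha(f(x))$, which by guardedness of $f$ falls into one of the preceding three cases and so introduces no circularity.
\end{itemize}
By finality of $\hat QY$, there is a unique coalgebra morphism $\phi\colon(\hat Q(Y+X),\alpha)\to(\hat QY,\out)$, and I would set $f^\istar(x)=\phi(\eps,\inr x)$. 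Computing $\out\comp\phi$ on $(\eps,\inr x)$ and comparing it with $\out\comp[\eta,f^\istar]^\klstar\comp f$ by inspection of the four possible shapes of $f(x)$ then delivers the fixpoint law $f^\istar=[\eta,f^\istar]^\klstar f$.

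For uniqueness, suppose $g\colon X\to \hat QY$ satisfies $g=[\eta,g]^\klstar f$. I would show that its Kleisli extension $\tilde g=[\eta,g]^\klstar\colon \hat Q(Y+X)\to\hat QY$ is itself a $\Phi_Y$-coalgebra morphism from $\alpha$ to $\out$. The first three clauses of $\alpha$ reduce directly to the definition of Kleisli lifting on $\hat\BBQ$: prepending a prefix, or returning an infinite stream unchanged. The only substantive case is $(\eps,\inr x)$, where one has $\tilde g(\eps,\inr x)=g(x)=[\eta,g]^\klstar f(x)=\tilde g(f(x))$, so that the morphism equation at $(\eps,\inr x)$ reduces to the equation at $f(x)$, already established. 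Finality then forces $\tilde g=\phi$, and restricting along $x\mapsto(\eps,\inr x)$ gives $g=f^\istar$.

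The hardest part is organising the case analysis for $\alpha$ so that the clause on $(\eps,\inr x)$ unfolds $f$ exactly once: this is precisely what guardedness buys us, since it forbids $f(x)$ from being of the shape $(\eps,\inr x')$ that would trigger an infinite recursion in the very definition of $\alpha$. Once this is set up, the rest of the proof is a compatibility check between the Kleisli lifting of $\hat\BBQ$ and the coalgebraic unfolding supplied by $\out$.
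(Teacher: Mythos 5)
Your route is genuinely different from the paper's: the paper disposes of this proposition in one line by citing general results on (completely) iterative monads for coalgebraic monads of the form $\nu\gamma.\,(X+F\gamma)$ (Uustalu, Milius), whereas you instantiate that theory by hand, building an explicit $\Phi_Y$-coalgebra on $\hat Q(Y+X)$ and corecursing into the final coalgebra $\hat QY$. This is a perfectly legitimate and more self-contained argument; the coalgebra structure $\alpha$ is well defined exactly because guardedness rules out $f(x)$ having the shape $(\eps,\inr x')$, and your uniqueness half is complete: the verification that $[\eta,g]^\klstar$ is a coalgebra morphism at $(\eps,\inr x)$ correctly reduces, via $g=[\eta,g]^\klstar f$, to the already-checked non-$(\eps,\inr\argument)$ cases.

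The one step that needs tightening is the fixpoint law in the existence half. A single application of $\out$ to $\phi(\eps,\inr x)$ and to $[\eta,f^\istar]^\klstar f(x)$, followed by case analysis on $f(x)$, does not close the argument: when $f(x)=(r_0 w,z)$ the two sides unfold to $\inr(r_0,\phi(w,z))$ and $\inr(r_0,[\eta,f^\istar]^\klstar(w,z))$ respectively, and the tails $(w,z)$ are arbitrary elements of $\hat Q(Y+X)$, not again of the form $f(x')$, so you cannot conclude by one more unfolding. You first need the global identification $\phi=[\eta,f^\istar]^\klstar$, which requires a coinduction: show that $\{(\phi(p),[\eta,f^\istar]^\klstar(p))\mid p\in\hat Q(Y+X)\}$ together with the diagonal is a bisimulation on $\hat QY$ (the case $p=(\eps,\inr x)$ is harmless since there the two maps agree literally, both returning $f^\istar(x)$). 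With that in hand, the fixpoint law follows from injectivity of $\out$ and the coalgebra-morphism equation $\out\comp\phi(\eps,\inr x)=(\id+\id\times\phi)\comp\alpha(f(x))=\out\comp\phi(f(x))$, giving $f^\istar(x)=\phi(f(x))=[\eta,f^\istar]^\klstar f(x)$. This is a fillable gap, and the template for filling it is essentially the same computation you already carry out for uniqueness.
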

\begin{proof}
  In the coalgebraic form, this was shown in previous work for a large
  class of monads~\cite{Uustalu02,Milius05} including $\hat\BBQ$.
\end{proof}
We now render $\BBQ$ as a quotient of $\hat\BBQ$ and derive a (total)
Elgot iteration operator on~$\BBQ$ from the partial one on $\hat\BBQ$,
as follows.
Note that every $QX$ is a quotient of $\hat QX$ under the ``weak
bisimulation'' relation $\approx$ generated by the clauses,
\begin{flalign*}\label{eq:Q-bisim}
&&\brks{r_1\ldots r_n,x} \approx\,& \brks{s_1\ldots s_m,x}&& (r_1+\ldots+r_n = s_1+\ldots+s_m)\\
&&r_1r_2\ldots  \approx\,& s_1s_2\ldots  && \Bigl(\sum\nolimits_i r_i = \sum\nolimits_j s_j\Bigr)
\end{flalign*}
Then let $\rho_X\colon\hat QX\to QX$ be the emerging quotienting map:
\begin{flalign*}
  \rho_X(r_1\ldots r_n,x) = \Brks{\sum\nolimits_i r_i,x}, \qquad
  \rho_X(r_1r_2\ldots) = \sum\nolimits_i r_i,
\end{flalign*}
and let $\upsilon_X:QX\to\hat QX$ be defined in the following way:
\begin{flalign*}
\upsilon_X(d,x) = \brks{d,x}, \quad
\upsilon_X(r\in QX\cap\realp) = r0^\omega,\quad
\upsilon_X(\infty) = 1^\omega.
\end{flalign*}
Note that by definition, for any $f\colon X\to Q(Y+X)$, $\upsilon f\colon X\to\hat Q(Y+X)$
is guarded.
\begin{theorem}\label{thm:Q-elgot}
The following is true for $\rho_X$ and $\upsilon_X$: 
\begin{enumerate}
  \item every $\rho_X$ is a left inverse of the corresponding $\upsilon_X$;
  \item $\rho$ is a monad morphism;
  \item for any guarded $f\colon X\to\hat Q(Y+X)$, $\rho f^\istar = \rho(\upsilon\rho f)^\istar$.
\end{enumerate}
\end{theorem}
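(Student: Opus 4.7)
The plan is to establish the three claims in order. For (1) I would do a case analysis on $q\in QX$: if $q=\brks{d,x}\in\realp\times X$ then $\upsilon q = \brks{d,x}\in\realp^\star\times X$ (the singleton word $d$), and $\rho$ sends this back to $\brks{d,x}$; if $q=r\in\realp$ then $\rho(r0^\omega)=r+0+0+\cdots =r$; if $q=\infty$ then $\rho(1^\omega)=\sum_i 1 = \infty$. For (2), compatibility with units is immediate since $\rho(\eps,x)=\brks{0,x}=\eta_\BBQ(x)$; Kleisli compatibility $\rho\comp g^{\klstar_{\hat\BBQ}}=(\rho\comp g)^{\klstar_\BBQ}\comp\rho$ follows by case analysis on the input in $\hat QX$, using that $\rho$ replaces word-concatenation by addition of the corresponding sums and stream-prefixing by addition of the suffix sum in $\realpe$.

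For (3), note first that $\upsilon\rho f$ is again guarded: whenever $f(x)=(w,\inr x')$ with $w$ non-empty, $\rho f(x)=\brks{\sum w,\inr x'}$, and then $\upsilon\rho f(x)$ is the non-empty singleton word $\brks{\sum w,\inr x'}$, so $(\upsilon\rho f)^\istar$ exists by Proposition~\ref{prop:layer-Q-fix}. Applying (1) and (2) to the fixpoint law for both $f^\istar$ and $(\upsilon\rho f)^\istar$, and using that $\rho\upsilon\rho f=\rho f$, yields that $\rho f^\istar$ and $\rho(\upsilon\rho f)^\istar$ both satisfy the equation $g=[\eta,g]^{\klstar_\BBQ}\comp\rho f$. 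Unfortunately this equation need not have a unique solution in $\BBQ$ (uniqueness at the $\hat\BBQ$-level crucially exploited the finer word structure together with guardedness), so an additional pointwise argument is needed.

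I would then argue pointwise by tracing the behaviour of $f$ through $X$. Fix $x\in X$ and unfold the fixpoint equation repeatedly: guardedness ensures this produces exactly one of three patterns, namely a finite sequence $x=x_0,\ldots,x_N$ terminating in $f(x_N)=(w_{N+1},\inl y)$, or in $f(x_N)=u\in\realp^\omega$, or an infinite sequence with $f(x_i)=(w_{i+1},\inr x_{i+1})$ for all $i$. Case-by-case, $f^\istar(x)$ evaluates respectively to $(w_1\cdots w_{N+1},y)$, to $w_1\cdots w_N\cdot u$, and to the infinite concatenation $w_1w_2\cdots\in\realp^\omega$ (infinite by guardedness). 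Crucially, $\upsilon\rho f$ visits the \emph{same} sequence $x_0,x_1,\ldots$ and terminates or not at the same stage, because $\rho$ preserves the $\inl/\inr$/stream dichotomy; it only compresses each $w_i$ into a singleton word $\sum w_i$, and in the streaming case replaces $u$ by either $(\sum u)\cdot 0^\omega$ or $1^\omega$ depending on convergence. Applying $\rho$ on both sides then collapses everything to a single total sum, and associativity of (possibly infinite) addition in $\realpe$ gives the desired equality in each case. The main obstacle is making this trajectory analysis rigorous, since the sequence $x_0,x_1,\ldots$ is intrinsically coinductive; a fully formal version would introduce it by corecursion into the final coalgebra $\hat Q(Y+X)$ and then verify the equality either by case distinction on whether the trace is terminating or not, or, more slickly, by exhibiting a coinductive invariant witnessing $\upsilon\rho f^\istar\approx(\upsilon\rho f)^\istar$ and invoking the definition of $\rho$ as the quotient by $\approx$.
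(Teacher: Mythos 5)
Your proposal is correct and follows essentially the same route as the paper: claims (1) and (2) are verified by the same direct case analysis (with (2) resting on associativity of countable sums in $\realpe$), and claim (3) is established by exactly the paper's three-way trajectory analysis — a finite run ending in $\inl y$, a finite run ending in an infinite stream, or an infinite run — observing that $f$ and $\upsilon\rho f$ traverse the same state sequence and that $\rho$ collapses both sides to the same total sum. Your additional remarks (that the fixpoint law alone cannot conclude because solutions are not unique in $\BBQ$, and that the trajectory analysis is implicitly coinductive) are sound refinements of the same argument rather than a different proof.
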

\begin{proof}
  (1) is easy to see by definition, and~(2) is essentially due to the
  fact that countable summation of non-negative real numbers is
  associative. Let us check~(3).
For every $x_0\in X$, there are three possibilities:
\begin{enumerate}[label=(\alph*)]
  \item there is a finite sequence of elements $x_0,\ldots,x_n$ in $X$ such that
$f(x_i) = \brks{w_i,x_{i+1}}$ for $i< n$, $f(x_n) = \brks{w_n,y}\in\realp^\mplus\times Y$
and $f^\istar(x_0) = \brks{w_0\ldots w_n,y}$;
  \item there is a finite sequence of elements $x_0,\ldots,x_n$ in $X$ such that
$f(x_i) = \brks{w_i,x_{i+1}}$ for $i< n$, $f(x_n) = w_n\in\realp^\omega$
and $f^\istar(x_0) = w_0 \ldots w_n$;
  \item there is an infinite sequence of elements $x_0,\ldots$ in $X$ such that
$f(x_i) = \brks{w_i,x_{i+1}}\in\realp^\mplus\times X$ for $i\in\omega$ and $f^\istar(x_0) = w_0 w_1\ldots$
\end{enumerate}
These three cases fully describe how the iteration operator of
$\hat \BBQ$ works. In each case, it is easy to check the identity in
question: e.g.\ in~(a)
$\rho f^\istar (x_0) = \rho \brks{w_0\ldots w_n\comma y} = \Brks{\sum
  w_0+\ldots +\sum w_n\comma y}$ and then
$\rho (\upsilon\rho f)^\istar (x_0)=\rho\Brks{\sum w_0\ldots\sum
  w_n\comma y} = \Brks{\sum w_0+\ldots +\sum w_n\comma y}$ where by
$\sum w_i$ we mean the sum of elements of the vector
$w_i\in\realp^\mplus$. The remaining cases~(b) and~(c) are handled
analogously.
\end{proof}

\noindent We immediately obtain the following corollary of Theorem~\ref{thm:Q-elgot}.
\begin{corollary}\label{cor:Q-elgot}
$\BBQ$ is an Elgot monad with the Elgot iteration sending
$f\colon X\to Q(Y+X)$ to $\rho\comp (\upsilon f)^\istar\colon X\to QY$.
\end{corollary}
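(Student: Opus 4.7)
The plan is to verify well-definedness and then each of the five Elgot axioms by transporting laws from the partial guarded Elgot structure on $\hat\BBQ$ across the monad morphism $\rho$, using its section $\upsilon$ together with the invariance property of Theorem~\ref{thm:Q-elgot}(3).

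First, well-definedness: for every $f\colon X\to Q(Y+X)$, $\upsilon f$ is guarded (as noted in the paragraph preceding Theorem~\ref{thm:Q-elgot}, since $\upsilon$ always produces word-parts of length~$\geq 1$ or infinite streams), so Proposition~\ref{prop:layer-Q-fix} yields a unique $(\upsilon f)^\istar\colon X\to\hat QY$, making $f^\istar\coloneqq\rho\comp(\upsilon f)^\istar$ well-defined.

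The fixpoint law is a one-line calculation combining the $\hat\BBQ$-fixpoint, $\rho$ being a monad morphism (hence preserving Kleisli lifting and units), and $\rho\upsilon=\id$:
\begin{align*}
\rho(\upsilon f)^\istar
 &= \rho\bigl([\hat\eta,(\upsilon f)^\istar]^{\klstar_{\hat\BBQ}}\upsilon f\bigr)\\
 &= [\rho\hat\eta,\rho(\upsilon f)^\istar]^{\klstar_\BBQ}\rho\upsilon f \\
 &= [\eta,f^\istar]^\klstar f.
\end{align*}
The remaining four axioms follow from the same recipe: both sides of each desired $\BBQ$-equation are rewritten in $\hat\BBQ$ via the naturality of $\upsilon$ and the monad-morphism property of $\rho$, then equated using the corresponding guarded laws on $\hat\BBQ$ available from coalgebraic iteration theory (cf.~\cite{Uustalu03,Milius05}), and finally projected back to $\BBQ$ through~$\rho$. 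The crucial matching of iterates obtained from different guarded presentations of the same $\BBQ$-morphism is enabled by Theorem~\ref{thm:Q-elgot}(3), which lets us freely replace any guarded $h\colon X\to\hat Q(Y+X)$ by $\upsilon\rho h$ without changing $\rho h^\istar$.

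The main obstacle will be in naturality and codiagonal, where one must assemble guarded iterates from Kleisli composites of guarded and possibly non-guarded pieces; each time such a composite is formed, one must either verify guardedness directly or reroute through $\upsilon\rho$ using Theorem~\ref{thm:Q-elgot}(3), which restores guardedness automatically. Uniformity reduces, via naturality of $\upsilon$, to guarded uniformity applied to $\upsilon f$ and $\upsilon g$; strength is analogous, using that $\tau$ commutes with $\upsilon$ up to the quotient collapsed by $\rho$.
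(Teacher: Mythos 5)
Your proposal is correct and follows essentially the same route as the paper: the combination of Theorem~\ref{thm:Q-elgot}(1)--(3) that you exploit (left inverse, monad morphism, and the invariance $\rho f^\istar = \rho(\upsilon\rho f)^\istar$) is exactly what the paper packages as the pair $(\rho,\upsilon)$ being an \emph{iteration-congruent retraction}, after which it simply invokes the general transfer theorem \cite[Theorem 21]{GoncharovSchroderEtAl17} instead of verifying the five axioms by hand. Your explicit check of the fixpoint law and your identification of the guardedness repair via $\upsilon\rho$ in the naturality and codiagonal cases are precisely the content of that cited theorem unfolded in this instance, so nothing is missing.
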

 \begin{proof}
  In the terminology of~\cite{GoncharovSchroderEtAl17}, Theorem~\ref{thm:Q-elgot}
 implies that the pair $(\rho,\upsilon)$ is an \emph{iteration-congruent retraction}.
 Thus, by~\cite[Theorem 21]{GoncharovSchroderEtAl17}, $\BBQ$ is indeed an Elgot monad
 under the iteration operator described above.
\end{proof}

\noindent
Note that the Elgot iteration of $\BBQ$ \emph{adds} the durations
produced by a map $f \colon X \to Q(Y + X)$ along the respective iteration
process, the latter realised by feeding back to $f$ values of type $X$
it produces. For example, if $f(x) = \brks{1, \inr x}$ then
$\rho(\upsilon f)^\istar(x) = 1 + 1 + \dots = \infty$; if
$f(n+1) = \brks{1/2, \inr n}$ and $f(0) = \brks{0, \inl \star}$ then
$\rho(\upsilon f)^\istar(n) = \brks{n * (1/2),\star}$.

Using Elgot iteration provided by Corollary~\ref{cor:Q-elgot}, we can
now define a denotational semantics of~\mbox{\hc} in the following
way.  We identify the types in~\eqref{eq:sum-types} with what they
denote in $\Set$, i.e.\ natural numbers, real numbers, the one element
set $\{\star\}$, the two-element set $\{\top,\bot\}$ and Cartesian
products respectively. For a variable context
$\Gamma=(x_1\colon A_1,\ldots,x_n\colon A_n)$, let
$\sem{\Gamma}=A_1\times\ldots\times A_n$ (specifically,
$\sem{\Gamma}=1$ if $\Gamma$ is empty). We will use~$\bar x$ to refer
to the tuples $\brks{x_1,\ldots, x_n}$ where
$\Gamma=(x_1\colon A_1,\ldots,x_n\colon A_n)$.
%
The semantics
\mbox{$\sem{\Gamma\vctx v\colon A}$} and $\sem{\Gamma\cctx p\colon A}$ are respectively
maps of type $\sem{\Gamma}\to A$ and $\sem{\Gamma}\to QA$, inductively
defined in Figure~\ref{fig:Q-denote}.

\begin{theorem}[Soundness and Adequacy]
  \label{thm:sound_adeq}
Given a computation judgement $\argument\cctx p\colon A$,
\begin{align*}
 p\To d   &\text{\qquad iff\qquad} \sem{\argument\cctx p\colon A} = d, \\
 p\To d,v &\text{\qquad iff\qquad} \sem{\argument\cctx p\colon A} = \brks{d,\sem{\argument\vctx v\colon A}}.
\end{align*}
\end{theorem}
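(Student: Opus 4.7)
My plan is to prove the soundness direction ($p\To d\Rightarrow\sem{p}=d$ and $p\To d,v\Rightarrow\sem{p}=\brks{d,\sem{v}}$) first, by induction on big-step derivations; the adequacy direction then follows formally from Totality (Corollary after Lemma~\ref{lem:eq-inf-op}) and Determinacy (Proposition~\ref{prop:deter}) together with soundness. For adequacy: given $\sem{p}=d$, Totality produces some derivable judgement, which by Determinacy is unique; if it were of the form $p\To e,v$, soundness would yield $\sem{p}=\brks{e,\sem{v}}$, a contradiction, so it must be $p\To d'$ and soundness forces $d'=d$. Analogously for $\sem{p}=\brks{d,\sem{v}}$.

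The soundness proof proceeds by induction over derivations using the structural rule schemes of Figure~\ref{fig:rule:while} in place of \textbf{(wh$_{\bm3}^{\mathbf{d}}$)} and \textbf{(wh$_{\bm4}^{\mathbf{d}}$)}, as justified in Remark~\ref{rem:while}. The cases \textbf{(now$^{\mathbf{d}}$)}, \textbf{(if$_{\bm{i}}^{\mathbf{d}}$)}, \textbf{(pm$_{\bm{i}}^{\mathbf{d}}$)} are immediate from the clauses of Figure~\ref{fig:Q-denote}. The cases \textbf{(seq$_{\bm{i}}^{\mathbf{d}}$)} reduce to a direct calculation with the Kleisli lifting of $\BBQ$ spelled out in Definition~\ref{defn:Q-set}: the formula $(f)^\klstar\brks{d,x}=\brks{d+e,y}$ when $f(x)=\brks{e,y}$ exactly matches duration addition in the premise. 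Case \textbf{(tr$_{\bm{i}}^{\mathbf{d}}$)} follows by unwrapping the definition of $d$ in rule \textbf{(tr)} as the supremum of the intervals on which~$w$ holds throughout; the two subcases mirror whether~$w$ stays true at the endpoint. Cases \textbf{(wh${_{\bm3}^{\mathbf{d}}}{}'$)} and \textbf{(wh${_{\bm4}^{\mathbf{d}}}{}'$)}, being finitary, follow by unfolding the fixpoint law of the Elgot iteration (Definition~\ref{def:elgot-monad}) finitely many times on the coalgebra associated with the loop body.

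The hard part will be the infinitary case \textbf{(wh$_{\bm5}^{\mathbf{d}}$)}, since our iteration on $\BBQ$ is defined indirectly as $\rho\comp(\upsilon(-))^\istar$ using the coalgebraic iteration on $\hat\BBQ$ from Corollary~\ref{cor:Q-elgot}. From the premises $p\To d_0,w_0$ and $q[w_i/x]\To d_{i+1},w_{i+1}$ with $v[w_i/x]=\True$ for all $i$, the induction hypothesis gives $\sem{p}=\brks{d_0,\sem{w_0}}$ and $\sem{q}\comp\brks{\id,\sem{w_i}}=\brks{d_{i+1},\sem{w_{i+1}}}$. Writing $f\colon A\to Q(A+A)$ for the coalgebra
\begin{align*}
f(x)=\ite{(Q\inr)\comp\sem{q}(x)}{\sem{v}(x)}{\eta\inl x},
\end{align*}
these hypotheses imply that $\upsilon f$, starting from $\sem{w_0}$, produces the guarded chain $\brks{[d_{i+1}],\inr\sem{w_{i+1}}}$ in $\hat Q(A+A)$ for every $i$. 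By the fixpoint equation of Proposition~\ref{prop:layer-Q-fix}, iterating $\upsilon f$ yields the infinite word $d_1 d_2 d_3\cdots\in\realp^\omega$, and applying $\rho$ collapses it to $\sum_{i\geq 1} d_i$. Finally the outer Kleisli lift $(f^\istar)^\klstar$ composes with $\sem{p}=\brks{d_0,\sem{w_0}}$ to give $d_0+\sum_{i\geq 1} d_i=\sum_i d_i$, matching the conclusion of \textbf{(wh$_{\bm5}^{\mathbf{d}}$)}.

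The main obstacle is thus purely computational but delicate: correctly tracking how the quotient $\rho$ interacts with the guarded iteration of $\hat\BBQ$ so that the infinite coalgebraic unfolding produced by $(\upsilon f)^\istar$ is preserved through the quotient to yield the correct infinite sum. Once this is established in the Zeno case, the remaining infinitary coherence (that is, the case where some $q[w_i/x]\To d$ diverges purely, handled via \textbf{(wh$_{\bm1}^{\mathbf{d}}$)} after applying \textbf{(wh${_{\bm4}^{\mathbf{d}}}{}'$)} finitely many times) is analogous. The adequacy direction then follows by the formal argument at the start of this plan.
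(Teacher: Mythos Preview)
Your proposal is correct. The soundness direction matches the paper's approach closely: both proceed by induction over big-step derivations, with the crux being \textbf{(wh$_{\bm5}^{\mathbf{d}}$)}, where you correctly trace the guarded iteration on $\hat\BBQ$ through $\rho$ to recover the infinite sum $\sum_i d_i$, exactly as the paper does. Your use of the structural schemes \textbf{(wh${_{\bm3}^{\mathbf{d}}}{}'$)}, \textbf{(wh${_{\bm4}^{\mathbf{d}}}{}'$)} in place of \textbf{(wh$_{\bm3}^{\mathbf{d}}$)}, \textbf{(wh$_{\bm4}^{\mathbf{d}}$)} is a harmless presentational choice; the paper instead uses the non-structural rules directly with induction on derivation length, and handles the finite while cases via the fixpoint law just as you do.

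The adequacy direction is where you genuinely diverge from the paper. The paper argues adequacy by a \emph{second} induction, this time over the denotational derivation (Figure~\ref{fig:Q-denote}), and must strengthen the invariant to quantify over arbitrary closing substitutions~$\sigma$ so that the structural induction goes through for open subterms. You instead derive adequacy as a formal consequence of Totality, Determinacy, and the already-proved soundness: Totality produces a unique big-step judgement, and soundness applied to it forces agreement with the given denotation. This is shorter and avoids the substitution-closed invariant entirely, at the price of depending on Totality, which in turn rests on the small-step/big-step correspondence of Theorem~\ref{thm:duration-semantics}. The paper's route is self-contained within the big-step and denotational layers; yours is the standard ``determinism plus totality plus soundness yields adequacy'' shortcut, which is available here precisely because the duration semantics is both deterministic and total. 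Both routes are sound; yours is more economical once the earlier results are in hand.
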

\begin{proof}
First we establish \emph{soundness}, which means simultaneously that
$p\To d$ implies $\sem{\argument\cctx p\colon A} = d$ and $p\To d,v$
implies $\sem{\argument\cctx p\colon A} = \brks{d,\sem{\argument\vctx v\colon A}}$.  By
induction over the derivation length in the proof system of
Figure~\ref{fig:dur-oper} this amounts to checking that every rule
preserves these implications. This is done routinely for most of the
rules. Consider some instances.

\textbf{(tr$_{\bm1}^{\mathbf{d}}$)} The assumption reads as $\cwhile{x\ass t.\,v}{w} \To d, v[d/t]$.
The premises of the rule imply
\begin{align*}
 d\geq&\;\sup\{e\mid \forall s\in [0,e].\, w[v[s/t]/x] = \True\},\\
 d\leq&\;\sup\{e\mid \forall s\in [0,e].\, w[v[s/t]/x] = \True\},
\end{align*}
hence $d=\sup\{e\mid \forall s\in [0,e].\, w[v[s/t]/x] = \True\}$. Assuming
$\sem{t\colon\real\vctx v\colon A} = h$ and $\sem{x\colon A\vctx w\colon 2} = b$, the latter can be rewritten
as
\begin{displaymath}
  d=\sup\{e\mid \forall t\in [0,e].\, b(h(t))\}.
\end{displaymath}
The left premise of~\textbf{(tr$_{\bm1}^{\mathbf{d}}$)}
also implies that $b(h(d))=\True$, and hence the corresponding rule in Figure~\ref{fig:Q-denote}
produces $\sem{\argument\cctx\cwhile{x\ass t.\,v}{w}\colon A} = \brks{d,h(d)}$ as required.

\textbf{(tr$_{\bm2}^{\mathbf{d}}$)} The assumption reads as $\cwhile{x\ass t.\,v}{w} \To d$.
Assuming $\sem{\argument\vctx t\colon\real\vctx v\colon A} = h$ and $\sem{\argument\vctx x\colon A\vctx w\colon 2} = b$, 
like in the previous clause, the premises of the rule imply $d=\sup\{e\mid \forall t\in [0,e].\, b(h(t))\}$,
however, now $b(h(d))=\False$. By using the same rule from Figure~\ref{fig:Q-denote},
we obtain $\sem{\argument\cctx\cwhile{x\ass t.\,v}{w}\colon A} = d$.

\textbf{(wh$_{\bm4}^{\mathbf{d}}$)} Suppose that $\mwhile{x\ass p}{v}{q} \To d + r, u$ is
obtained from the respective premises of~\textbf{(wh$_{\bm4}^{\mathbf{d}}$)} and
$\sem{\argument\cctx p\colon A} = h$, $\sem{x\colon A\vctx v\colon 2} = b$, $\sem{x\colon A\cctx q\colon A} = l$.
We obtain that $h=\brks{d,w}$
for some $w$ (by induction), $b(w)=\True$ and $\sem{\argument\cctx\mwhile{x\ass q[w/x]}{v}{q}\colon A} = \brks{r,\sem{\argument\vctx u\colon A}}$ (by induction).
Let
\begin{displaymath}
  f=\lambda x.\,\ite{(Q\inr)(l(x))}{b(x)}{\eta(\inl x)}
\end{displaymath}
and note that $\brks{r,\sem{\argument\vctx u\colon A}} = (f^\istar)^\klstar (l(w))$, which means that
$l(w) = \brks{e,c}$, $f^\istar(c) = \brks{e',\sem{\argument\vctx u\colon A}}$ and
${e+e' = r}$ for suitable $e$, $e'$ and $c$. Therefore, using the 
fixpoint identity for $(\argument)^\istar$,
$\sem{\argument\cctx\mwhile{x\ass p}{v}{q}\colon A}= (f^\istar)^\klstar (h)= (f^\istar)^\klstar (d,w)
= [\eta,f^\istar]^\klstar f^\klstar (d,w) = (f^\istar)^\klstar (d+e\comma c) = \brks{d+e+e'\comma \sem{\argument\vctx u\colon A}}
= \brks{d+r,\sem{\argument\vctx u\colon A}}$
and we are done.

\textbf{(wh$_{\bm5}^{\mathbf{d}}$)} Suppose that $\mwhile{x\ass p}{v}{q} \To \sum\nolimits_i d_i$
is obtained from the premises 
$p\To d_0,w_0$,
$q[w_i/x] \To d_{i+1}\comma w_{i+1}$ $(i\in\omega)$, and 
$v[w_i/x] = \True$ $(i\in\omega)$ by~\textbf{(wh$_{\bm5}^{\mathbf{d}}$)}.
By induction, $\sem{\argument\cctx p\colon A} = \brks{d_0,\sem{\argument\vctx w_0\colon A}}$, 
and for every $i\in\omega$, $\sem{\argument\cctx q[w_i/x]\colon A} = l(\sem{\argument\vctx w_{i}\colon A}) 
= \brks{d_{i+1},\sem{\argument\vctx w_{i+1}\colon A}}$ where $l=\sem{x\colon A\cctx q\colon A}$.
Note also that $b(\sem{\argument\vctx w_i\colon A}) = \True$ for every $i\in\omega$ with
$b=\sem{x\colon A\vctx v\colon 2}$.
Let again
\begin{displaymath}
  f=\lambda x.\,\ite{(Q\inr)(l(x))}{b(x)}{\eta(\inl x)}
\end{displaymath}
and observe that $f(\sem{\argument\vctx w_{i}\colon A}) = (Q\inr)(l(\sem{\argument\vctx w_{i}\colon A})) = \brks{d_{i+1},\inr\sem{\argument\vctx w_{i+1}\colon A}}$. 
Now, by definition of $(\argument)^\istar$
of $\hat\BBQ$, $(\upsilon f)^\istar\newline(\sem{\argument\vctx w_{0}\colon A}) = d_1d_2\ldots$, and therefore in
$\BBQ$, 
$f^\istar(\sem{\argument\vctx w_{0}\colon A}) 
= \rho(\upsilon f)^\istar(\sem{\argument\vctx w_{0}\colon A}) 
= \sum\nolimits_{i>0} d_i$. 
Therefore,
$\sem{\argument\cctx\mwhile{x\ass p}{v\\ }{q}\colon A} = 
(f^\istar)^\klstar(d_0,\sem{\argument\vctx w_{0}\colon A})= d_0 + f^\istar(\sem{\argument\vctx w_{0}\colon A})=\sum\nolimits_i d_i$,
as desired.

The remaining adequacy direction reads as follows: 
$\sem{\argument\cctx p\colon A} = d$ implies $p\To d$
and $\sem{\argument\cctx p\colon A} = \brks{d,\sem{\argument\vctx v\colon A}}$
implies $p\To d,v$. In order to obtain it by induction over
derivations in Figure~\ref{fig:Q-denote}, we slightly strengthen the
induction invariant as follows: for every substitution $\sigma$
sending variables from $\Gamma$ to values of corresponding types,
$\sem{\Gamma\cctx p\colon A} = d$ implies $p\sigma\To d\sigma$ and
$\sem{\Gamma\cctx p\colon A} = \brks{d,\sem{\Gamma\vctx v\colon A}}$ implies
$p\sigma\To d\sigma,v\sigma$.  It is then verified routinely that
every rule in Figure~\ref{fig:Q-denote} preserves this invariant.
\end{proof}

\begin{figure*}
\begin{minipage}{.98\textwidth}
  \begin{flalign*}
    \lrule{(seq$_{\bm1}^{\mathbf{e}}$)}{
        (p,s\TTo v_s)_{s\leq t}
      \qquad
        (q[v_s/x],0\TTo w_s)_{s\leq t}
    }{
      \mbind{x\ass p}{q},t \TTo w_t
    }
    &&
    \lrule{(seq$_{\bm2}^{\mathbf{e}}$)}{
      p \To d, v' \!\qquad (p,s\TTo v_s)_{s\leq d} \!\qquad 
        (q[v_s/x],0\TTo w_s)_{s\leq d} \!\qquad q[v_d/x],t\TTo w 
    }{
      \mbind{x\ass p}{q},d+t \TTo w
    }
  \end{flalign*}\\[-3ex]
%
%
\begin{flalign*} 
\lrule{(if$_{\bm1}^{\mathbf{e}}$)}{
  p,t\TTo v
}{
  \mif{\True}{p}{q},t \TTo v
}
&& 
\lrule{(if$_{\bm2}^{\mathbf{e}}$)}{
  q,t\TTo v
}{
  \mif{\False}{p}{q},t \TTo v
}
&& 
\lrule{(pm$^{\mathbf{e}}$)}{
  p[v/x,w/y], t \TTo u
}{
  \pcase{\brks{x,y}\ass\brks{v,w}}{p},t \TTo u
}
\end{flalign*}\\[-3ex]
\begin{flalign*}
    \lrule{(now$^{\mathbf{e}}$)}{
      }{
        \now v,0 \TTo v
      }
&&
   \lrule{(tr$_{\bm1}^{\mathbf{e}}$)}{
		\cwhile{x \ass s.\,v}{w} \To d\qquad t<d
	}{
		\cwhile{x\ass s.\,v}{w},t \TTo v[t/s]
	}&&
  \lrule{(tr$_{\bm2}^{\mathbf{e}}$)}{
		\cwhile{x \ass s.\,v}{w} \To d, u \qquad t\leq d
	}{
		\cwhile{x \ass s.\,v}{w},t \TTo v[t/s]
	}
\end{flalign*}\\[-3ex]
  %
%
\begin{flalign*}
    \lrule{(wh$_{\bm1}^{\mathbf{e}}$)}{p \To d \qquad p,t \TTo w \qquad {t < d}}{
		  \mwhile{x\ass p}{v}{q},t \TTo w
	  }
&&
  \lrule{(wh$_{\bm3}^{\mathbf{e}}$)}{
      p \To d,w' \qquad p,d \TTo w \qquad v[w/x] = \False}{
		  \mwhile{x\ass p}{v}{q},d  \TTo w
  }
  \end{flalign*}
%
\begin{flalign*}
    \lrule{(wh$_{\bm2}^{\mathbf{e}}$)}{
      p \To d, w' \qquad p,t \TTo w \qquad {t < d}
	  }{
		  \mwhile{x\ass p}{v}{q},t \TTo w
	  }
  &&
    \lrule{(wh$_{\bm4}^{\mathbf{e}}$)}{
      p \To d,w' \qquad p \TTo d,w\qquad v[w/x] = \True \qquad \mwhile{x\ass q[w/x]}{v}{q},t\TTo u}{
		  \mwhile{x\ass p}{v}{q},d+t \TTo u
	  }
\end{flalign*}\\[-3ex]
\end{minipage}
  \caption{Big-step evolution operational semantics}\label{fig:hyb-oper}
\end{figure*}
\section{Evolution Semantics}\label{sec:evo}

Building on the duration semantics of the previous section, we will
now present a big-step \emph{evolution operational semantics} for
hybrid programs, which incorporates the former as a critical
ingredient. In particular, both semantics will be needed for providing
the aforementioned adequacy result. The new semantics relates a closed
computation term $p$ with its \emph{trajectory}, i.e.\ the range of
time-indexed values produced by the execution of $p$ -- recall for
instance the bouncing ball and cruise controller described in
Section~\ref{sec:hybcore}. More specifically, the corresponding
big-step relation $\TTo$ connects a closed computation term $p$ and a
time instant~$t$ with the value~$v$ to which $p$ evaluates at~$t$, in
symbols $p,t\TTo v$. The corresponding derivation rules are presented
in Figure~\ref{fig:hyb-oper}. These rules recur to the duration
semantic judgements governed by the rules in
Figure~\ref{fig:dur-oper}.  Unlike duration semantics, evolution
semantics is not total: for example, for a given computation $p$,
$p,t\TTo v$ may not be derivable for any $v$ and $t$, which yields an
\emph{empty trajectory} for $p$, denotationally a totally undefined
function $\emptyset \to A$.

Let us discuss the non-obvious rules of this semantics, starting
with~\textbf{(seq$_{\bm1}^{\mathbf{e}}$)} which takes infinite
families of judgements as premises. This rule is similar to
\textbf{(seq$_{\bm1}^{\mathbf{d}}$)}, but
in~\textbf{(seq$_{\bm1}^{\mathbf{e}}$)} even if the first computation
evaluates to $v_t$ at $t$, we cannot disregard~$q$ and use~$v_t$ in
the conclusion, because the return type of $q$ need not match the
return type of $p$.  The evaluation $q[v_t/x],0\TTo w_t$ is required
to perform the necessary type conversion. Moreover, the rule premise
ensures that for every $s\leq t$, $q[v_s/x]$ also yields a value at
$0$. To see why this is necessary, consider the following instance of
$\mbind{x\ass p}{q}$:
\begin{displaymath}
  \mbind{x\ass (\cwhile{x\ass t.\,t}{\True})}{\mif{x< 1}{(\cwhile{x\ass 1}{\False})}{x}}
\end{displaymath}
First observe that for any given time instant $t \geq 0$ the
subprogram $\cwhile{x\ass t.\,t}{\True}$ evaluates to $t$, intuitively
modelling the passage of time \emph{ad infinitum} and thus producing a
trajectory of infinite duration.  On the other hand, the subprogram
$\cwhile{x\ass 1}{\False}$ models non-progressive divergence and
\emph{does not} yield a value at any time instant, in particular
$q[0/x], 0 \TTo v$ for no $v$. Hence the entire program
$\mbind{x\ass p}{q}$ diverges already at~$0$, as we cannot apply any
of the sequential composition rules relative to this time
instant. However, $q[1/x], 0 \TTo 1$ and therefore if we did not check
previous time instants for divergence (using $q[v_s/x],0 \TTo w_s$ for
$s \leq 1$) we would obtain $\mbind{x\ass p}{q}, 1 \TTo 1$ by the
sequential composition rules, meaning that we would not be able to
detect divergence of the entire program.
\begin{remark}\label{rem:Q-H-diff}
  The evolution semantics subtly differs from the duration
  semantics. Specifically, the above example indicates that $p\To d$
  does not necessarily imply that
  $d=\sup\{ t\mid \exists w.\, p,t\TTo w\}$. 
\end{remark}
The main subtlety behind our semantics of while-loops
via~\textbf{(wh$_{\bm1}^{\mathbf{e}}$)}--\textbf{(wh$_{\bm4}^{\mathbf{e}}$)}
is that, contrary to standard program semantics, while-loops no longer
behave as iterated sequential composition, which would be easily
achievable~via the rule
 \begin{displaymath}
    \frac{
      \begin{array}[b]{ll} 
        &(p,s\TTo v_s)_{s\leq t} \qquad v[v_{t}/x]=\True\\[1.5ex]
        &(\mwhile{x\ass q[v_s/x]}{v}{q},0\TTo w_s)_{s\leq t,v[v_s/x]=\True}
      \end{array}
	  }{
		  \mwhile{x\ass p}{v}{q},t \TTo w_t
	  }\quad
 \end{displaymath}
instead of~\textbf{(wh$_{\bm1}^{\mathbf{e}}$)} and similarly
for~\textbf{(wh$_{\bm2}^{\mathbf{e}}$)}--\textbf{(wh$_{\bm4}^{\mathbf{e}}$)}. 
We argue, however, that our rules are more suitable for hybrid
programming: the ones corresponding to iterated sequential composition
produce undesirable artefacts, e.g.\ the program
\begin{flalign*}
  \mwhile{x \ass 0}{\True}{\cwhile{x \ass t .\, (t + x) }{1}}
\end{flalign*}
would not evaluate under these rules at any time instant (we would not
be able to apply any of these rules since at every time instant we would be
lacking a terminating condition) and the same would apply to the
bouncing ball in Example~\ref{expl:ball} and other systems involving
infinite while-loops or Zeno behaviour. Intuitively, the reader may
see the adopted rules as the ones corresponding to testing the
condition of the while-loop only at the end of the input trajectory
whilst assuming that the other points evaluate to false; the rules
corresponding to iterated sequential composition would, on the other
hand, correspond to testing the condition of the while-loop at
\emph{all} points of the input trajectory.

The following lemma is obtained by induction over the derivation of the 
corresponding computation judgement by the rules in Figure~\ref{fig:hyb-oper}.
\begin{figure*}
  \begin{minipage}{0.85\textwidth}%
    \centering
%
%
$\displaystyle
  \anonrule{(tr)}{
      \sem{\Gamma,t\colon\real\vctx v\colon A} = h \qquad \sem{\Gamma, x\colon A\vctx w\colon 2} = b\qquad
      \lambda \bar x.\,\sup \{e\in\realp \mid \forall t\in [0,e].\, b(\bar x,h(\bar x,t))\} = d}  
      {\sem{\Gamma \cctx \cwhile{x\ass t.\,v}{w} \colon A} = \lambda\bar x.\, \brks{\ite{\cc}{b(\bar x,h(\bar x, d(\bar x)))}{\od}, d(\bar x), \lambda t.\, h(\bar x,t)}
  }
$\\[3ex]
$\displaystyle
  \anonrule{(wh)}{
    \sem{\Gamma\cctx p\colon A} = h \qquad \sem{\Gamma,x\colon A\vctx v\colon 2} = b\qquad \sem{\Gamma,x\colon A\cctx q\colon A} = l
	}{
    \sem{\Gamma\cctx \mwhile{x\ass p}{v}{q} \colon A} =
    \lambda\bar x.\, ((\lambda\brks{x,c}.\,\ite{(H\inr) (\kappa\, l(\bar x,x))}{b(\bar x,x) \wedge c}{\eta(\inl x)})^\istar)^\klstar (\kappa(h(\bar x)))
	}
$
\end{minipage}
\caption{Denotational evolution semantics }\label{fig:H-denote}
\end{figure*}
\begin{lemma}\label{lem:d-close}
For every closed computation term $p$, the set 
\begin{displaymath}
   \{t\in\realp\mid\exists v.\, p,t\TTo v\}
\end{displaymath}
is downward closed.
\end{lemma}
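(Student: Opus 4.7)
The plan is to prove the statement by induction on the derivation of a judgement $p,t\TTo v$, showing that whenever $p,t\TTo v$ holds and $0\leq s\leq t$, one can derive $p,s\TTo w$ for some $w$. Case analysis proceeds on the last rule applied in the derivation of $p,t\TTo v$.

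First I would dispose of the easy cases. For \textbf{(now$^{\mathbf{e}}$)} the only admissible $s$ is $0$ and the conclusion is immediate. For \textbf{(tr$_{\bm1}^{\mathbf{e}}$)} and \textbf{(tr$_{\bm2}^{\mathbf{e}}$)}, the duration judgement $\cwhile{x\ass s'.\,v'}{w'}\To d$ (resp.\ $\To d,u$) does not depend on $t$, so $s\leq t<d$ (resp.\ $s\leq t\leq d$) lets us re-apply the same rule at~$s$. For \textbf{(if$_{\bm1}^{\mathbf{e}}$)}, \textbf{(if$_{\bm2}^{\mathbf{e}}$)} and \textbf{(pm$^{\mathbf{e}}$)}, the induction hypothesis on the single subderivation gives a value at $s$, and re-applying the rule produces the desired judgement. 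For \textbf{(wh$_{\bm1}^{\mathbf{e}}$)} and \textbf{(wh$_{\bm2}^{\mathbf{e}}$)} the argument is the same: the inductive hypothesis applied to the premise $p',t\TTo w$ yields $p',s\TTo w''$, and since $s\leq t<d$, the same rule reconstructs $\mwhile{x\ass p'}{v}{q},s\TTo w''$. For \textbf{(wh$_{\bm3}^{\mathbf{e}}$)} (which fires precisely at $t=d$), when $s<d$ I drop to \textbf{(wh$_{\bm2}^{\mathbf{e}}$)} using the induction hypothesis, and $s=d$ is the given derivation.

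Next I would handle the rules with infinite premise families, \textbf{(seq$_{\bm1}^{\mathbf{e}}$)} and (tacitly) the family versions of while. For \textbf{(seq$_{\bm1}^{\mathbf{e}}$)}, concluding $\mbind{x\ass p'}{q},t\TTo w_t$, the premise families $(p',s'\TTo v_{s'})_{s'\leq t}$ and $(q[v_{s'}/x],0\TTo w_{s'})_{s'\leq t}$ restrict at once to $s'\leq s$, so the same rule yields $\mbind{x\ass p'}{q},s\TTo w_s$. Crucially no appeal to the induction hypothesis is needed here — only the closure of the premise family under restriction.

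The main obstacle is the two composite rules \textbf{(seq$_{\bm2}^{\mathbf{e}}$)} and \textbf{(wh$_{\bm4}^{\mathbf{e}}$)}, which split the time $t$ around a pivot duration $d$. For \textbf{(seq$_{\bm2}^{\mathbf{e}}$)}, derivation of $\mbind{x\ass p'}{q},d+t'\TTo w$, and $s\leq d+t'$: if $s\leq d$ I use the restriction of the premise families to $s'\leq s$ together with the existing premise $q[v_s/x],0\TTo w_s$ to invoke \textbf{(seq$_{\bm1}^{\mathbf{e}}$)}; if $d<s\leq d+t'$ I apply the induction hypothesis to the premise $q[v_d/x],t'\TTo w$ (at time $s-d\leq t'$) to get $q[v_d/x],s-d\TTo w''$, and then re-apply \textbf{(seq$_{\bm2}^{\mathbf{e}}$)}. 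For \textbf{(wh$_{\bm4}^{\mathbf{e}}$)}, concluding $\mwhile{x\ass p'}{v}{q},d+t'\TTo u$, and $s\leq d+t'$: if $s<d$ I use \textbf{(wh$_{\bm2}^{\mathbf{e}}$)} after invoking the induction hypothesis on $p',d\TTo w$; if $s=d$ I apply the induction hypothesis to the premise $\mwhile{x\ass q[w/x]}{v}{q},t'\TTo u$ at time $0\leq t'$ to obtain a value $u_0$, and then re-use \textbf{(wh$_{\bm4}^{\mathbf{e}}$)} with $t'$ replaced by $0$; if $d<s\leq d+t'$ I invoke the induction hypothesis on the same premise at time $s-d$ and re-apply \textbf{(wh$_{\bm4}^{\mathbf{e}}$)}. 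These case splits at the pivot are the only non-routine part; the rest is bookkeeping.
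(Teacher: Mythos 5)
Your proposal is correct and follows exactly the route the paper takes: the paper dispatches this lemma in one line as ``obtained by induction over the derivation of the corresponding computation judgement by the rules in Figure~7'', and your case analysis is a faithful and accurate elaboration of that induction, including the only genuinely delicate points (restricting the infinite premise families of \textbf{(seq$_{\bm1}^{\mathbf{e}}$)}, and splitting around the pivot duration in \textbf{(seq$_{\bm2}^{\mathbf{e}}$)} and \textbf{(wh$_{\bm4}^{\mathbf{e}}$)}). No gaps.
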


To relate the operational semantics with its denotational counterpart,
recall the hybrid monad $\BBH$ from our previous
work~\cite{GoncharovJakobEtAl18a} (in a slightly reformulated
equivalent form).  Intuitively, for every set~$X$, the elements
of~$HX$ are trajectories in the sense discussed above and Kleisli
composition is essentially used for concatenating trajectories whilst
simultaneously tracking divergence along the computation.  In what
follows we use the superscript notation~$e^t$ explained in
Section~\ref{sec:prelim}.
\begin{definition}[Hybrid monad]
The \emph{hybrid monad} $\BBH$ is defined as follows.~Let
\begin{align*}
  H X =\; &\{\brks{\cc, d, e:[0,d]\to X}\mid d\in\realp\}\; \cup\\
          & \{\brks{\cd, d, e:[0,d]\to X}\mid d\in\realp\}\; \cup \\
          & \{\brks{\od,d, e:[0,d)\to X}\mid d\in\realpe\},
\end{align*}
i.e.\ the set $HX$ contains three types of trajectory: \emph{closed
  convergent} $\brks{\cc, d, e}$, \emph{closed divergent}
$\brks{\cd, d, e}$ and \emph{open divergent} $\brks{\od, d, e}$
(notably, there are no ``open convergent'' ones), e.g., $HX$ contains
the \emph{empty trajectory} $\brks{\od,0,\bang}$, which we denote by
$\div$.  Given $p=\brks{\alpha,d,e}\in HX$, let us write
$p_\tt = \alpha$, $p_\dr = d$, $p_\ev = e$, $\dom e=[0,d)$ if
$\alpha=\od$ and $\dom e=[0,d]$ otherwise. For a map $f \colon X \to H Y$
we will abbreviate the notation $(f(x))_\tt$ to $f_\tt(x)$, and
similarly for the other cases, namely $(f(x))_\dr$, $(f(x))_\ev$ and
$(f(x))^t$.
The monad structure of $\BBH$ is then given in the following way:
$\eta(x) = \brks{\cc,0,\lambda t.\,x}$; for $f\colon X\to HY$,
$\brks{\alpha,d,e}\in HX$, let
\begin{align*}
  D=\bigcup\,\bigl\{[0,t]\subseteq\dom e \mid \forall s\in [0,t].\, f(e^s)\neq\div\bigr\}, 
\end{align*}
  $d' = \sup D$. Then $f^\klstar(\alpha,d,e)$ is defined by the clauses:
%
%
\begin{flalign*}
f^\klstar (\cc,d,e) =&\; \brks{f_{\tt}(e^d),d+f_\dr(e^d),\\
&\erule~\erule~ \lambda t.\,\ite{f_\ev^0(e^t)}{t<d}{f_\ev^{t-d}(e^d)}}\hspace{-1cm} & (\text{$D=[0,d]$})\\
f^\klstar (\cd,d,e) =&\; \brks{\cd,d,\lambda t.\, f_\ev^0(e^t)}
& (\text{$D=[0,d]$}) \\
f^\klstar (\alpha,d,e) =&\; \brks{\cd,d',\lambda t.\, f_\ev^0(e^t)}
& (\text{$D=[0,d']$, $d'<d$})\\
f^\klstar (\alpha,d,e) =&\; \brks{\od,d',\lambda t.\, f_\ev^0(e^t)}
& (\text{$D=[0,d')$, $d'\leq d$})
\end{flalign*}
%
%
%
%
\end{definition}

\noindent
Intuitively, only the first clause for $f^\klstar(\alpha,d,e)$
describes a successful `concatenation' scenario, which assumes that
the argument trajectory is closed and additionally ensures that
$f(e^s)$ does not yield divergence along the run of
$\brks{\cc,d,e}$. The remaining clauses exhaustively cover divergence
scenarios, caused either by~$f(e^s)$ or by the argument
trajectory. 
As shown in \cite{GoncharovJakobEtAl18a}, $\BBH$ is an Elgot monad. We
thus reuse the rules from Figure~\ref{fig:Q-denote} to define the
semantics over~$\BBH$ by replacing $Q$ with $H$ throughout, except for
the rules for~\textbf{(tr)} and~\textbf{(wh)} which are overridden by
those in Figure~\ref{fig:H-denote}. To cope with the non-standard
behaviour of while-loops discussed above, we call on an auxiliary
natural transformation $\kappa\colon H \to H (\argument \times 2)$,
appending to every point of the trajectory a Boolean flag signalling
whether the current point is the endpoint, formally:
\begin{align*}
\kappa(\alpha, d,e) = \brks{\alpha, d, u}, &&
u^t =
\begin{cases}
\brks{e^t,\top}& \text{if $\alpha=\cc$, $t=d$},\\
\brks{e^t,\bot}& \text{otherwise}.
\end{cases}
\end{align*}
Finally, we obtain our main result.
\begin{theorem}[Soundness and Adequacy]\label{thm:adeq-H}
Given $\argument\cctx p\colon A$,
\begin{enumerate}
  \item $p, t \TTo v$ iff $\sem{\argument\cctx p\colon A}=\brks{\alpha,d,e}$, $t\leq d$ and $e^t = v$;
  \item $p, d \TTo v$ and $p\To d,w$ iff $v=w$, $\sem{\argument\cctx p\colon A}=\brks{\cc,d,e}$ and $e^d = v$.
\end{enumerate}
\end{theorem}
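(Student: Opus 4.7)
The plan is to prove the theorem by simultaneous induction, with soundness by induction on the derivation in Figure~\ref{fig:hyb-oper} and adequacy by structural induction on the program, strengthened to open programs under closing substitutions $\sigma$, as in the proof of Theorem~\ref{thm:sound_adeq}. Throughout I would treat Theorem~\ref{thm:sound_adeq} as a black box: its presence is already built into part~(2) of the statement via the duration judgement $p\To d,w$, and it will also be invoked inside the soundness argument whenever a subcomputation's duration-level premise needs to be transferred to the denotational side.

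For soundness of clause~(1), I would walk through each rule of Figure~\ref{fig:hyb-oper}. The cases \textbf{(now$^{\mathbf e}$)},~\textbf{(pm$^{\mathbf e}$)}, and~\textbf{(if$_{\bm i}^{\mathbf e}$)} are routine. For \textbf{(tr$_{\bm 1}^{\mathbf e}$)} and \textbf{(tr$_{\bm 2}^{\mathbf e}$)} I would use Theorem~\ref{thm:sound_adeq} to turn the duration premise $\cwhile{x\ass s.\,v}{w}\To d$ (respectively $\To d,u$) into the equality $d=\sup\{e\mid\forall t\in[0,e].\,b(\bar x,h(\bar x,t))\}$ appearing in the \textbf{(tr)} clause of Figure~\ref{fig:H-denote}. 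The genuinely interesting case is \textbf{(seq$_{\bm 1}^{\mathbf e}$)}: the infinite family $(q[v_s/x],0\TTo w_s)_{s\leq t}$ is exactly the witness that $f(e^s)\neq\div$ for all $s\leq t$, which is the defining condition of the set $D$ in the Kleisli extension of $\BBH$; combined with the induction hypothesis on $(p,s\TTo v_s)_{s\leq t}$, this identifies the value of the concatenated trajectory at $t$. The rule \textbf{(seq$_{\bm 2}^{\mathbf e}$)} additionally places the splice point at duration $d$ using Theorem~\ref{thm:sound_adeq} and then falls under the first or third clause of $f^\klstar$. For \textbf{(wh$_{\bm 1}^{\mathbf e}$)}\dash\textbf{(wh$_{\bm 4}^{\mathbf e}$)} I would unfold the Elgot iteration once via the fixpoint law and observe that the $\kappa$ transformation flags only the endpoint of a closed convergent trajectory, which is exactly the operational intuition ``test the guard only at the endpoint''. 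Clause~(2) then follows from clause~(1) together with Theorem~\ref{thm:sound_adeq} via a short side lemma asserting that the $\BBQ$-denotation is the forgetful $(\dr,\tt,\text{endpoint})$-projection of the $\BBH$-denotation; this lemma is proved by a direct structural induction and forces $\sem{p}_{\BBH}=\brks{\cc,d,e}$ with $e^d=\sem{\argument\vctx w\colon A}$ whenever $p\To d,w$.

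For adequacy I would induct on the structure of $p$. The only constructs that require genuine work are sequential composition and while-loops. For $\mbind{x\ass p}{q}$, given the denotational output one inverts the four-case definition of $f^\klstar$ in $\BBH$ to decide whether the query $(t,v)$ was produced by the convergent concatenation clause or by one of the divergence clauses, applies the induction hypothesis to $p$ and to each $q[v_s/x]$ with $s\leq t$, and then selects either \textbf{(seq$_{\bm 1}^{\mathbf e}$)} or \textbf{(seq$_{\bm 2}^{\mathbf e}$)} accordingly. The main obstacle is the while case: here the denotation is $((\ldots)^\istar)^\klstar(\kappa(\sem{p}))$, and I need to read off from the Elgot unrolling a sequence of iterates that can be traced operationally. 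Crucially, the evolution rules have no infinitary analogue of \textbf{(wh$_{\bm 5}^{\mathbf d}$)}, so every query $p,t\TTo v$ must be resolved by a \emph{finite} unfolding; the argument is therefore that, given a time-budget~$t$, the trajectory point produced by the denotation always arises from the finite prefix of the unrolling up to the first iterate whose accumulated duration surpasses~$t$, at which point one of \textbf{(wh$_{\bm 1}^{\mathbf e}$)}\dash\textbf{(wh$_{\bm 4}^{\mathbf e}$)} applies. The subtlety highlighted in Remark~\ref{rem:Q-H-diff}\dash\,namely that the evolution domain need not equal the duration predicted at the $\BBQ$-level\dash\,is the reason this argument has to be carried out entirely inside the hybrid monad, using duration judgements only as bookkeeping for the splice points.
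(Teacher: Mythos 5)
Your overall architecture matches the paper's: soundness by induction on the derivations of Figure~\ref{fig:hyb-oper}, adequacy by structural induction on the program, and a bridging lemma relating the $\BBQ$- and $\BBH$-denotations to obtain clause~(2) from clause~(1) together with Theorem~\ref{thm:sound_adeq}. However, there is one genuine gap. In the while-loop case of adequacy you assert that ``the trajectory point produced by the denotation always arises from the finite prefix of the unrolling up to the first iterate whose accumulated duration surpasses~$t$.'' This is precisely the non-trivial content that the paper isolates as Lemma~\ref{lem:H-elgot}: for $f\colon X\to H(Y+X)$ satisfying the while-shape condition and $f^\istar(x)=\brks{\alpha,d,e}$, every $t<d$ (and, when $\alpha\neq\od$, also $t=d$) is already realised by some finite iterate $f_n$. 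This cannot be read off the fixpoint law alone, because the Elgot iteration on $\BBH$ is \emph{not} defined as a least fixpoint over a cpo enrichment (the paper explicitly rules that out); it is obtained by transporting a guarded iteration along the retraction $\rho\colon\BBH_0\BBM\to\BBH$, and the finite-approximation property has to be pulled back through $\rho$ and $\upsilon$ from a corresponding property of guarded iteration on $\BBH_0\BBM$ (cited from prior work). Without this lemma, neither the adequacy argument for while-loops nor the finiteness of the iterate sequence needed in the while-case of the bridging lemma goes through; your proposal treats it as self-evident, which it is not.

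A second, smaller point: your side lemma, stated as ``the $\BBQ$-denotation is the forgetful $(\dr,\tt,\text{endpoint})$-projection of the $\BBH$-denotation,'' is false in that generality. Remark~\ref{rem:Q-H-diff} (and the sequential-composition example preceding it) shows that for divergent computations the duration computed by $\BBQ$ can strictly exceed the supremum of the evolution domain, so the two semantics do not agree as a plain projection. The paper's Lemma~\ref{lem:H-cc} is correspondingly restricted: it only relates the two denotations under the hypothesis that the $\BBQ$-side yields a convergent pair $\brks{d',v}$, and then asserts $\alpha=\cc$ iff $d=d'$, with $e^d=v$. The instances you actually invoke are of this restricted convergent form, so the use is sound, but the lemma must be stated in this guarded way or its structural-induction proof will break in the divergence cases.
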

\noindent
%
The first clause ensures that the
correspondence $t\mto v_t$ generated by the judgements $p, t \TTo v_t$ 
yields a trajectory that agrees with the denotational semantics
previously presented. 
The second clause is needed
to 
relate the duration and evolution semantics: both must agree in the
evaluation of the \emph{last point} of the trajectory produced by a
program $p$, and naturally both must agree with the corresponding
denotational semantics.

The theorem follows from a sequence of lemmas, which we present
next. In the rest of the section, we subscript the semantic brackets
with $\BBQ$ and $\BBH$ to refer to duration semantics and to evolution
semantics correspondingly. The first lemma provides a technical characterization
of iterations occurring as the semantics of while-loops.
\begin{lemma}\label{lem:H-elgot}
Let $f\colon X\to H(Y+X)$ have the property
\begin{align}\label{eq:it-as-while}
\forall x\in X.\,(\exists y\in X.\,f_\ev^t(x)=\inr y)\impl f_\tt(x)=\cc\land f_\dr(x)=t.  
\end{align} 
We define $f_i\colon X\to H(Y+X)$ by induction as follows: $f_0 = \eta\inr$, $f_{i+1} = [\eta\inl, f_i]^\klstar f$.
Let $x\in X$, and suppose that $f^\istar(x)=\brks{\alpha,d,e}$.
\begin{enumerate}
  \item If $t < d$ then there exists $n \in \nat$ such that $t < (f_n)_\dr(x)$ and $(f_n)_\ev^t(x) = \inl e^t$.
  \item If $\alpha\neq\od$ then there exists $n \in \nat$ such that $d=(f_n)_\dr(x)$ and $(f_n)_\ev^{d}(x) = \inl e^{d}$.
\end{enumerate}
\end{lemma}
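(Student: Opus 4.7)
The plan is to exploit the recursive structure of $f^\istar$ by combining the fixpoint law with the explicit form of the Kleisli extension in $\BBH$ and the restriction \eqref{eq:it-as-while} on $f$. The pivotal identity is $f^\istar = [\eta, f^\istar]^\klstar f_n$ for every $n\geq 0$, to be proved by induction on~$n$: the base case uses $f_0=\eta\inr$ directly, while the inductive step rewrites $[\eta, f^\istar]^\klstar f_n = [\eta, f^\istar]^\klstar [\eta\inl,f_{n-1}]^\klstar f$ using associativity of Kleisli composition, and then reduces $[\eta,f^\istar]^\klstar [\eta\inl,f_{n-1}]$ to $[\eta,[\eta,f^\istar]^\klstar f_{n-1}]=[\eta,f^\istar]$ via the induction hypothesis, recovering $f^\istar$ through the fixpoint law.

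In parallel, I would show by induction on~$n$ that each $f_n$ inherits property \eqref{eq:it-as-while} from $f$: whenever $(f_n)_\ev^t(x)=\inr z$, necessarily $(f_n)_\tt(x)=\cc$ and $(f_n)_\dr(x)=t$. The base case is automatic since $f_0(x)=\brks{\cc,0,\lambda t.\,\inr x}$, and the inductive step reads off from the clauses defining $f^\klstar$ in~$\BBH$. Combining this with the identity of the previous paragraph, I would derive the key correspondence: for every $t<(f_n)_\dr(x)$, the value $(f_n)_\ev^t(x)$ is forced to be of the form $\inl y$ (otherwise the inherited property would force $(f_n)_\dr(x)=t$, contradicting $t<(f_n)_\dr(x)$), and unfolding $[\eta, f^\istar]^\klstar f_n$ through the Kleisli clauses of $\BBH$ pins down $y = f^\istar(x)_\ev^t$, i.e.\ $(f_n)_\ev^t(x)=\inl e^t$.

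To address both clauses, I would decompose $f^\istar(x)$ along the successor-state chain $x=x_0,x_1,\dots$, where $x_{k+1}$ exists exactly when $f(x_k)=\brks{\cc,d_k,e_k}$ with $e_k^{d_k}=\inr x_{k+1}$. With $D_k=d_0+\dots+d_k$, the trajectory of $f^\istar(x)$ is the concatenation of the segments $f(x_0),f(x_1),\dots$, of total duration $d=\sup_k D_k$, and the first $k+1$ segments are captured exactly by $f_{k+1}$. For clause~(1), given $t<d$, I would pick the least $k$ with $t<D_k$ and set $n=k+1$; then $(f_n)_\dr(x)=D_k>t$ and the correspondence gives $(f_n)_\ev^t(x)=\inl e^t$. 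For clause~(2), from $\alpha\neq\od$ I would argue that the chain $x_0,x_1,\dots$ must terminate at some $k$ (either $f(x_k)$ is $\cc$ with an $\inl$-endpoint, or $f(x_k)$ is $\cd$), for otherwise $f^\istar(x)$ would propagate to open-divergent type; taking $n=k+1$ then yields $(f_n)_\dr(x)=D_k=d$ and $(f_n)_\ev^d(x)=\inl e^d$.

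The hard part will be rigorously justifying the segment-by-segment concatenation picture of $f^\istar$ and the dichotomy between chains that terminate (yielding closed trajectories) and those that do not (forcing open-divergent type, including Zeno scenarios where $d=\sup_k D_k$ is finite). Each individual Kleisli clause for $\BBH$ handles one concatenation step transparently, but the infinite case is not pinned down by the fixpoint law alone and should be justified by appealing to the specific construction of $\BBH$'s Elgot iteration from~\cite{GoncharovJakobEtAl18a}, in the spirit of how $\BBQ$'s iteration is obtained through the layered monad $\hat\BBQ$ in Section~\ref{sec:Q-sem}.
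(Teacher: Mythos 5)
Your proposal follows essentially the same route as the paper: both arguments reduce the claim to the finite unfoldings $f_n$ and then appeal to the concrete construction of $\BBH$'s Elgot iteration rather than to the fixpoint law alone. The step you flag as hard but leave open --- that every point of the trajectory below $d$ (and the endpoint when $\alpha\neq\od$) is already produced by some finite approximant --- is precisely what the paper discharges by presenting $\BBH$ as an iteration-congruent retract of the layered monad $\BBH_0\BBM$ (where iteration is the least solution of the fixpoint law), invoking \cite[Lemma~25]{GoncharovJakobEtAl18a} for $g=\upsilon f$, and transferring back along $\rho$ and $\upsilon$; your auxiliary identities (that $f^\istar=[\eta,f^\istar]^\klstar f_n$ and that each $f_n$ inherits~\eqref{eq:it-as-while}) play the role of the paper's observation that $g^{\brks{n}}=[\eta,\div]^\klstar g_n$ and that $g_n=\upsilon\comp f_n$ on total trajectories.
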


\begin{proof}
  We rely on the fact that $\BBH$ is characterized as an iteration-congruent
retract of a monad $\BBH_0\BBM$ obtained by composing $\BBH_0$, an
iteration-free version of $\BBH$, with the \emph{maybe-monad} $\BBM$ (with
$MX=X+1$)~\cite{GoncharovJakobEtAl18a}. That is, there is a monad morphism
$\rho\colon\BBH_0\BBM\to\BBH$, whose components $\rho_X$ come with right inverses
$\upsilon_X$ and $\rho$ preserves iteration as follows: $\rho f^\iistar =
(\rho f)^\istar$. Here, by $(\argument)^\iistar$ we refer to the iteration of
$\BBH_0\BBM$, which is suitably characterized as the least solution of the
fixpoint law.
For the sake of the present argument it suffices to know that $H_0MX$ is obtained
from $HX$ by allowing the trajectories to be partial functions and by imposing the 
the following constraints: 
\begin{itemize}
  \item $\brks{\cc,d,e}\in H_0MX$ implies that $e^d$ is defined;
  \item closed divergent trajectories are not in $H_0MX$.
\end{itemize}
The
natural transformation~$\upsilon:H\to H_0M$ sends $\brks{\cc,d,e}\in HX$ to 
$\brks{\cc,d,e}\in H_0MX$ and $\brks{\alpha,d,e}\in HX$ with $\alpha\in\{\cd,\od\}$ 
to $\brks{\od,\infty,e'}\in H_0MX$ where $e'$ is the extension of $e$ to $\realpe$
with undefined elements. Conversely, $\rho(\alpha,d,e)$ is $\brks{\alpha,d,e}$ for total $e$ and
$\rho(\alpha,d,e)=\brks{\alpha',d',e'}$ for partial~$e$, in which case $e'$ is
a restriction of $e$ to the largest downward closed subset of the domain of 
definiteness of~$e$ and~$d'$ and $\alpha\in\{\cd,\od\}$ are selected accordingly. 

Let $g = \upsilon
f\colon X\to H_0M(Y+X)$. For every $i$, let $g^{\brks{i}}\colon X\to H_0M Y$ be defined as
follows: $g^{\brks{0}}=\div$ (the empty trajectory) and $g^{\brks{i+1}} =
[\eta,g^{\brks{i}}]^\klstar g$. We use the following property of 
$\BBH_0\BBM$~\cite[Lemma~25]{GoncharovJakobEtAl18a}: 
\begin{flalign}\label{eq:H-prop1}
\forall x\in X.\,g^\iistar(x) = \brks{\alpha,d,e}\land e^t\in Y \impl \exists n.\,(g^{\brks{n}})_{\ev}^t(x) = \inl e^t
\end{flalign}
for every $t\leq d$ for which $e^t$ is defined.
Let us define $g_0 = \eta\inr$, $g_{i+1} = [\eta\inl, g_i]^\klstar g$ and note 
that, by induction, for every~$i$, $g^{\brks{i}} = [\eta, \div]^\klstar g_i$. Therefore
we can reformulate~\eqref{eq:H-prop1} as follows: 
\begin{flalign}\label{eq:H-prop2}
\forall x\in X.\,g^\iistar(x) = \brks{\alpha,d,e}\land e^t\in Y  \impl \exists n.\,(g_n)_{\ev}^t(x) = \inl e^t.
\end{flalign}
In order to obtain the first clause of the lemma, suppose that
$f^\istar(x)=\brks{\alpha,d,e}$ and $t < d$. Now
$f^\istar(x)=\rho g^\iistar(x) = \brks{\alpha,d,e}$ and therefore
$g^\iistar(x) = \brks{\alpha',d',e'}$ with suitable $\alpha'$, $d'$,
$e'$ and $(e')^t=e^t$ because~$e'$ is at least as defined as
$e$. By~\eqref{eq:H-prop2}, for some $n$,
$(g_n)_{\ev}^t(x) = \inl e^t$. Note that by definition $g$
satisfies~\eqref{eq:it-as-while} and for every $y\in X$ the trajectory $g_\ev(y)$ is
total. It follows by induction that the same is true for $g_n$, hence
$g_n(x) = \upsilon (f_n(x))$, in particular, $t < (f_n)_\dr(x)$ and
$(f_n)_{\ev}^t(x) = \inl e^t$.  The second clause is shown
analogously.
\end{proof}

\noindent
Next, we check that the duration semantics and the evolution semantics
agree in the following sense (cf.~Remark~\ref{rem:Q-H-diff}).
\begin{lemma}\label{lem:H-cc}
Given a computation judgement $\Gamma\cctx p\colon A$, let $\bar v\in\sem{\Gamma}$.
Suppose that $\sem{\Gamma\cctx p\colon A}_\BBH(\bar v) = \brks{\alpha, d,e}$ and 
$\sem{\Gamma\cctx p\colon A}_\BBQ(\bar v) = \brks{d',v}$. Then $\alpha = \cc$ 
iff $d=d'$ and in either case $e^d=v$.
\end{lemma}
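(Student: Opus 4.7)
The plan is to prove the lemma by structural induction on the computation term $p$ (equivalently, induction on the derivation $\Gamma\cctx p\colon A$), comparing the clauses in Figure~\ref{fig:Q-denote} with their counterparts in Figure~\ref{fig:H-denote} (together with the rules for non-overridden constructs). Since the lemma is conditional on both semantics returning a value, each inductive step assumes that both $\sem{p}_\BBQ(\bar v)$ and $\sem{p}_\BBH(\bar v)$ produce genuine pairs/trajectories rather than pure durations/divergent trajectories.

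The trivial cases are easy to dispatch: for \textbf{(now)}, one computes directly that $\sem{\now v}_\BBQ(\bar v)=\brks{0,h(\bar v)}$ while $\sem{\now v}_\BBH(\bar v)=\brks{\cc,0,\lambda t.\,h(\bar v)}$, so $\alpha=\cc$, $d=d'=0$, and $e^d=h(\bar v)=v$. The \textbf{(unit)}, \textbf{(var)}, \textbf{(sig)}, \textbf{(pm)} and \textbf{(if)} cases follow immediately from the induction hypothesis, since these rules introduce no non-trivial monadic combination. For \textbf{(tr)}, the two semantics are defined by a common $d(\bar x)=\sup\{e\mid\forall t\in [0,e].\,b(\bar x,h(\bar x,t))\}$, and both split on the same Boolean $b(\bar x,h(\bar x,d(\bar x)))$: when it holds we get $\brks{\cc,d(\bar v),\ldots}$ and $\brks{d(\bar v),h(\bar v,d(\bar v))}$; when it fails we get $\brks{\od,d(\bar v),\ldots}$ and the pure duration $d(\bar v)$, so the biconditional together with $e^d=h(\bar v,d(\bar v))=v$ is immediate.

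The case \textbf{(seq)} requires unfolding the Kleisli lifting in both monads. Write $h=\sem{p}_{\BBQ/\BBH}$ and $l=\sem{q}_{\BBQ/\BBH}$ (abusing notation); assume the full sequential composition returns a pair in $\BBQ$, so $h(\bar v)=\brks{d_1,v_1}$ and $l(\bar v,v_1)=\brks{d_2,v_2}$ with $d'=d_1+d_2$ and $v=v_2$. By the induction hypothesis applied to $p$, the $\BBH$-semantics of $p$ is $\brks{\cc,d_1,e_1}$ with $e_1^{d_1}=v_1$. Then in the definition of $f^\klstar$ for $\BBH$ the set $D=[0,d_1]$ (otherwise, by IH applied pointwise, $l$ composed with $e_1$ would diverge somewhere, contradicting that the $\BBQ$-composition returned a pair), and the first clause of Kleisli lifting fires, yielding a trajectory whose tag is $l_\tt(v_1)$. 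By IH on $q[v_1/x]$ this is $\cc$ with duration $d_2$ and endpoint $v_2$; the overall trajectory is therefore $\brks{\cc,d_1+d_2,e}$ with $e^{d_1+d_2}=v_2$, as required. The reverse direction (assuming $\alpha=\cc$ in the $\BBH$-semantics) runs symmetrically.

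The genuine obstacle is \textbf{(wh)}. Let $f(x)=\ite{(T\inr)(l(\bar v,x))}{b(\bar v,x)}{\eta(\inl x)}$ (with $T\in\{Q,H\}$ and with the Boolean-augmented variant via $\kappa$ on the $\BBH$ side); by construction $f$ on the $\BBH$ side satisfies hypothesis~\eqref{eq:it-as-while}. Assume first that $\sem{\mwhile{x\ass p}{v}{q}}_\BBH(\bar v)=\brks{\cc,d,e}$. Applying Lemma~\ref{lem:H-elgot}(2), there exists $n$ with $(f_n)_\dr(h_\BBH(\bar v))=d$ and $(f_n)_\ev^d(h_\BBH(\bar v))=\inl e^d$. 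Unpacking the inductive definition of $f_n$ and repeatedly using the induction hypothesis on $p$ and on $q$ (with suitable substitutions), one sees that along this iteration every intermediate $\BBH$-trajectory is closed convergent, hence the same intermediate values $w_0,\ldots,w_n$ arise in the $\BBQ$-iteration, the $\BBQ$-loop terminates in the same $n$ steps, and the accumulated $\BBQ$-duration is precisely $d=\sum d_i$ with final value $e^d$. Conversely, if the $\BBQ$-iteration returns $\brks{d',v}$, inspecting the definition of $\rho\comp(\upsilon f)^\istar$ (from Corollary~\ref{cor:Q-elgot}) shows that this can only happen via case (a) of the proof of Theorem~\ref{thm:Q-elgot}, i.e.\ a finite sequence of iterates terminating in $\inl v$; the same sequence of $w_i$'s is produced on the $\BBH$ side (by IH, step by step) yielding a $\cc$-trajectory with duration $d'$ and endpoint $v$. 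The hard part here is establishing that the index $n$ provided by Lemma~\ref{lem:H-elgot} really traces the same computation as the $\BBQ$-iteration, which rests on the pointwise induction hypothesis ensuring that $\cc$-convergence in $\BBH$ is synchronous with the termination of the $\BBQ$-step.
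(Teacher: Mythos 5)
Your overall strategy coincides with the paper's: structural induction on $p$, a clause-by-clause comparison of Figures~\ref{fig:Q-denote} and~\ref{fig:H-denote} with \textbf{(seq)}, \textbf{(tr)} and \textbf{(wh)} as the substantial cases, and Lemma~\ref{lem:H-elgot}~(2) to tame the while-loop. There is, however, a genuine gap in your \textbf{(seq)} case, which then reappears in the converse direction of your \textbf{(wh)} case. From the standing hypothesis that the $\BBQ$-composite returns a pair you conclude ``by the induction hypothesis applied to $p$'' that the $\BBH$-semantics of the first component is $\brks{\cc,d_1,e_1}$, and then that $D=[0,d_1]$ because otherwise ``$l$ composed with $e_1$ would diverge somewhere, contradicting that the $\BBQ$-composition returned a pair''. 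Neither step is justified. The induction hypothesis is a biconditional between $\alpha=\cc$ and $d=d'$ (under the assumption that both semantics yield the indicated data); it never lets you infer $\alpha=\cc$ from the mere existence of a $\BBQ$-value. Worse, the implication you are tacitly using --- $\sem{p}_\BBQ(\bar v)$ a pair implies $\sem{p}_\BBH(\bar v)$ closed convergent --- is false. Take
\begin{displaymath}
  \mbind{x\ass (\cwhile{x\ass t.\,\lin(0,t)}{x\le 1})}{\mif{x<1}{(\cwhile{y\ass t.\,1}{\False})}{\now x}}.
\end{displaymath}
Its $\BBQ$-semantics is $\brks{1,1}$, since $\BBQ$ only ever evaluates the continuation at the endpoint value $1$; its $\BBH$-semantics is the empty trajectory $\brks{\od,0,\bang}$, since the $\BBH$-Kleisli lifting inspects the continuation at \emph{every} point $s\le 1$ of the incoming trajectory and finds $\div$ already at $s=0$. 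This is precisely the discrepancy recorded in Remark~\ref{rem:Q-H-diff}: intermediate-point divergence is invisible to $\BBQ$, so it cannot ``contradict that the $\BBQ$-composition returned a pair'', and your derivation of $D=[0,d_1]$ collapses.

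The repair is to prove the biconditional as stated. For \textbf{(seq)} the paper starts from $\alpha=\cc$; only the first clause of the $\BBH$-Kleisli lifting can then have fired, which \emph{forces} both components to be closed convergent with durations adding up to $d$, and only at that point is the induction hypothesis invoked (in the direction $\cc\Rightarrow{}$ equality of durations, together with agreement of endpoints) to match the $\BBQ$-data; the direction $d=d'\Rightarrow\alpha=\cc$ is then obtained by reversing this case analysis rather than by appealing to the existence of a $\BBQ$-value. The same discipline is needed in the converse direction of \textbf{(wh)}: you cannot conclude that ``the same sequence of $w_i$'s is produced on the $\BBH$ side yielding a $\cc$-trajectory'' merely because the $\BBQ$-iteration terminates with a pair. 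Your \textbf{(tr)} case and the forward direction of your \textbf{(wh)} case are in line with the paper's argument.
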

\begin{proof}[Proof Sketch]
The proof runs by induction over the structure of $p$. We restrict to the 
following three representative cases.
\begin{citemize}
  \item $p=(\mbind{x\ass q}{r})$. Let $f = \sem{\Gamma\cctx q\colon B}_\BBH$,
$g = \sem{\Gamma,x\colon B\cctx r\colon A}_\BBH$, $\hat f = \sem{\Gamma\cctx q\colon B}_\BBQ$,
$\hat g = \sem{\Gamma,x\colon B\cctx r\colon A}_\BBQ$. Suppose that ${\alpha = \cc}$.
Then $\sem{\Gamma\cctx p\colon A}_\BBH(\bar v) = g^\klstar\tau\brks{\bar v,f(\bar v)} = 
\brks{\cc,d,e}$, which by definition means that 
\begin{align*}
f(\bar v) =&\; \brks{\cc,d_1,e_1},&
g(\bar v, e_1^{d_1}) =&\; \brks{\cc, d_2, e_2},&
d =&\; d_1 + d_2,&
\intertext{and, analogously}
\hat f(\bar v) =&\; \brks{d_1',v'},&
\hat g(\bar v,v') =&\; \brks{d_2', v},
& d'=&\;d_1'+d_2'.&
\end{align*}
By induction, $d_1=d_1'$ and $e_1^{d_1}=v'$. Hence, again by induction, $d_2=d_2'$
and $e_2^{d_2} = v$. This implies $d = d'$ and $e^d = e_2^{d_2} = v$.
By reversing this argument we also obtain the implication in the
right-to-left direction.
  \item $p=(\cwhile{x\ass t.\,v}{w})$. By definition, 
  \begin{align*}
    \sem{\Gamma\cctx p\colon A}_\BBH(\bar v) =&\; \ite{\brks{\cc, u(\bar v), \lambda t.\, h(\bar v,t)}}{b(\bar v, h(\bar v, u(\bar v)))}{\\ &\; \brks{\od, u(\bar v), \lambda t.\, h(\bar v,t)}},\\
    \sem{\Gamma\cctx p\colon A}_\BBQ(\bar v) =&\; \ite{\brks{u(\bar v),h(\bar v,u(\bar v))}}{b(\bar v,h(\bar v,u(\bar v)))}{u(\bar v)}
  \end{align*}
  where $h=\sem{\Gamma,t\colon\real\vctx v\colon A}$ and $b=\sem{\Gamma, x\colon A\vctx w\colon 2}$, $u=\lambda\bar x.\,\sup \{e\in\realp \mid \forall t\in [0,e].\, b(\bar x,h(\bar x,t))\}$. The assumption implies 
$b(\bar v, h(\bar v, u(\bar v)))=\top$, hence we reduce to
\begin{align*}
  \sem{\Gamma\cctx p\colon A}_\BBH(\bar v) =&\; \brks{\cc, u(\bar v), \lambda t.\, h(\bar v,t)},\\
  \sem{\Gamma\cctx p\colon A}_\BBQ(\bar v) =&\; \brks{u(\bar v),h(\bar v,u(\bar v))}
\end{align*}
and the claim of the lemma holds trivially.
  \item $p=(\mwhile{x\ass q}{v}{r})$. By definition, 
  \begin{align*}
    \sem{\Gamma\cctx p\colon A}_\BBH(\bar v) =&\; ((\lambda\brks{x,c}.\,\ite{(H\inr) (\kappa\, l(\bar v,x))\\*&\qquad}{b(\bar v,x) \wedge c}{\eta(\inl x)})^\istar)^\klstar (\kappa\, h(\bar v)),\\*
    \sem{\Gamma\cctx p\colon A}_\BBQ(\bar v) =&\; ((\lambda x.\,\ite{(Q\inr)(\hat l(\bar v,x))\\*&\qquad}{b(\bar v,x)}{\eta(\inl x)})^\istar)^\klstar (\hat h(\bar v))
  \end{align*}
where $b=\sem{\Gamma,x\colon A\vctx v\colon 2}$, and 
\begin{align*}
h=&\;\sem{\Gamma\cctx q\colon A}_\BBH,& l=&\;\sem{\Gamma,x\colon A\cctx r\colon A}_\BBH,&\\
\hat h=&\;\sem{\Gamma\cctx q\colon A}_\BBQ,& \hat l=&\;\sem{\Gamma,x\colon A\cctx r\colon A}_\BBQ.
\end{align*}
Suppose that $\sem{\Gamma\cctx q\colon A}_\BBH(\bar v) = \brks{\cc, d,e}$. Then, by definition of $\BBH$, $d = d_0 + d_{\star}$
where
\begin{align*}
\brks{\cc, d_0,e_0} &\,= h(\bar v),\\
\brks{\cc,d_{\star},e_{\star}} &\,= \\
(\lambda\brks{x,c}.\,&\ite{(H\inr) (\kappa\, l(\bar v,x))}{b(\bar v,x) \wedge c}{\eta\inl x})^\istar(e_0^{d_0},\top).
\end{align*}
Consider the sequence 
\begin{align}\label{eq:H-while}
  \brks{\cc, d_0,e_0},\brks{\cc, d_1,e_1},\ldots
\end{align}
formed as follows: 
$\brks{\cc, d_{i+1},e_{i+1}} = l(\bar v,e_i^{d_i})$ if $b(\bar v,e_i^{d_i})$. That is, every triple 
$\brks{\cc, d_{i+1},e_{i+1}}$ is obtained by applying the above expression under
dagger to $\brks{e_i^{d_i},\top}$ as long as $b(\bar v,e_i^{d_i})$ is true. 
Note that this process cannot produce triples with the first element different 
than $\cc$, because this would mean that the loop diverges after finitely many 
iterations, returning a triple different from $\brks{\cc,d_{\star},e_{\star}}$.
By Lemma~\ref{lem:H-elgot} (2), the sequence~\eqref{eq:H-while} is necessarily finite.
Hence, by induction hypothesis $\sem{\Gamma,x\colon A\cctx r\colon A}_\BBQ(\bar v, e_i^{d_i}) = \brks{d_{i+1}, e_{i+1}^{d_{i+1}}}$ 
for every $i$, from which we obtain $\sem{\Gamma\cctx p\colon A}_\BBQ(\bar v)=\brks{d,e^d}$,
as desired.

For the converse, suppose that $\sem{\Gamma\cctx p\colon A}_\BBQ(\bar v) = \brks{d,v}$. By definition
of iteration in $\BBQ$, this means that the loop in $\sem{\Gamma\cctx p\colon A}_\BBQ(\bar v)$
is unfolded finitely many times, i.e.\ $\brks{d_0,v_0}=\hat h(\bar v)$, 
$\brks{d_1,v_1} = \hat l(\bar v,v_0),\ldots,\brks{d_n,v_n} = \hat l(\bar v,v_{n-1})$
and $b(\bar v, v_0)=\top,\ldots$, $b(\bar v, v_{n-1})=\top$, ${b(\bar v, v_0)=\bot}$ for suitable 
$d_i$, $v_i$ and $d = d_0+\ldots+d_n$, $v=v_n$. By induction hypothesis, 
$\sem{\Gamma\cctx p\colon A}_\BBH(\bar v) = \brks{\cc,d_0,e_0}$,
$\sem{\Gamma,x\colon A\cctx q\colon A}_\BBH(\bar v,v_i) = \brks{\cc, d_{i+1},e_{i+1}}$ for $i=1,\ldots,n$ 
and $e_{i}^{d_i} = v_i$ for every $i=0,\ldots,n$. Using the above expression 
for $\sem{\Gamma\cctx p\colon A}_\BBQ(\bar v)$ and the assumption that 
$\sem{\Gamma\cctx p\colon A}_\BBH(\bar v) = \brks{\alpha, d,e}$, we obtain that 
$\alpha=\cc$. 
\qed
\end{citemize}
\noqed\end{proof}
\noindent
The following lemma essentially captures the soundness direction of
Theorem~\ref{thm:adeq-H}.
\begin{lemma}\label{lem:sound-H}
Given $-\cctx p\colon A$,
\begin{enumerate}
  \item if $p, t \TTo v$ then $\sem{-\cctx p\colon A}_\BBH=\brks{\alpha,d,e}$, $t\leq d$ and $e^t = v$;
  \item if $p, d \TTo v$ and $p\To d,w$ then $v=w$, $\sem{-\cctx p\colon A}_\BBH=\brks{\cc,d,e}$ and $e^d = v$.
\end{enumerate}
\end{lemma}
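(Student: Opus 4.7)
The plan is to prove both clauses simultaneously by induction on the derivation of the evolution judgement $p,t\TTo v$ (respectively $p,d\TTo v$) using the rules of Figure~\ref{fig:hyb-oper}, strengthened to open contexts exactly as in the adequacy proof of Theorem~\ref{thm:sound_adeq}, so that the induction invariant is stable under substitutions $\sigma$ for variables of $\Gamma$. Throughout, I would freely invoke the soundness half of Theorem~\ref{thm:sound_adeq} to convert duration judgements $p\To d$ or $p\To d,w$ (which occur as side-premises in many rules of Figure~\ref{fig:hyb-oper}) into the corresponding duration-denotational equations, and then apply Lemma~\ref{lem:H-cc} to transfer these to statements about $\sem{-\cctx p\colon A}_\BBH$. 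The structural cases \textbf{(pm$^{\mathbf{e}}$)}, \textbf{(if$_{\bm1}^{\mathbf{e}}$)}, \textbf{(if$_{\bm2}^{\mathbf{e}}$)} and \textbf{(now$^{\mathbf{e}}$)} are immediate by unfolding the denotational clauses in Figure~\ref{fig:Q-denote} (transported to~$\BBH$) and the definition of $\eta$ in $\BBH$.

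For \textbf{(tr$_{\bm1}^{\mathbf{e}}$)} and \textbf{(tr$_{\bm2}^{\mathbf{e}}$)}, I would unfold the $\BBH$-version of \textbf{(tr)} in Figure~\ref{fig:H-denote} to see that $\sem{-\cctx\cwhile{x\ass s.\,v}{w}\colon A} = \brks{\alpha,d(-),\lambda t.\,h(-,t)}$ with $d(-) = \sup\{e\mid \forall t\in[0,e].\,b(h(t))\}$, and then match the premise $\cwhile{x\ass s.\,v}{w}\To d$ (or $\To d,u$) with the corresponding duration denotation via Theorem~\ref{thm:sound_adeq}; the equations $e^t = v[t/s]$ and $t\leq d$ follow directly, and clause~(2) follows because $\alpha=\cc$ exactly when $b(h(d))$ holds, which by Lemma~\ref{lem:H-cc} is equivalent to $p\To d,w$ producing a finite pair.

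The sequential composition cases are a principal source of care: for \textbf{(seq$_{\bm1}^{\mathbf{e}}$)} the premises provide, for every $s\leq t$, judgements $p,s\TTo v_s$ and $q[v_s/x],0\TTo w_s$. By induction, $\sem{-\cctx p\colon A}_\BBH=\brks{\alpha_p,d_p,e_p}$ with $t\leq d_p$ and $e_p^s=v_s$ for all $s\leq t$, and $\sem{-\cctx q[v_s/x]\colon B}_\BBH=\brks{\alpha_s,d_s,e_s}$ is non-empty at $0$ for each $s\leq t$, so in particular the set $D$ appearing in the definition of Kleisli composition of $\BBH$ contains $[0,t]$; unfolding that definition then yields $w_t$ as the value at $t$ of the composed trajectory. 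For \textbf{(seq$_{\bm2}^{\mathbf{e}}$)} the additional premise $p\To d,v'$ combined with Theorem~\ref{thm:sound_adeq} and Lemma~\ref{lem:H-cc} forces $\alpha_p=\cc$ with endpoint $v_d=v'$, so the first clause of the Kleisli composition of $\BBH$ applies and gives the desired value at time $d+t$.

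The main obstacle is the while-loop family \textbf{(wh$_{\bm1}^{\mathbf{e}}$)}--\textbf{(wh$_{\bm4}^{\mathbf{e}}$)}, where the denotation is an Elgot iteration on $\BBH$ and the operational rule only evaluates the body once, deferring the rest of the loop either implicitly or via a recursive premise. Here I would argue via the fixpoint identity $f^\istar=[\eta,f^\istar]^\klstar f$ for the iteration operator of $\BBH$, applied to $f=\lambda\brks{x,c}.\,\ite{(H\inr)(\kappa\,l(-,x))}{b(-,x)\land c}{\eta\inl x}$, to expose the first iterate. For \textbf{(wh$_{\bm1}^{\mathbf{e}}$)} and \textbf{(wh$_{\bm2}^{\mathbf{e}}$)}, with $t<d$, the induction hypothesis applied to $p,t\TTo w$ pins down the value on the prefix $[0,t]$ of the body's trajectory, and this prefix occurs verbatim as a prefix of $f^{\istar\klstar}\kappa(h(-))$ because the Kleisli action of $\BBH$ preserves prefixes of open/closed trajectories up to the first ``commit'' point. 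For \textbf{(wh$_{\bm3}^{\mathbf{e}}$)}, Lemma~\ref{lem:H-cc} turns $p\To d,w'$ into a closed-convergent denotation, the test $v[w/x]=\False$ shifts the Elgot-fixpoint to return $\eta\inl w$, and the endpoint matches. Finally, \textbf{(wh$_{\bm4}^{\mathbf{e}}$)} is handled by feeding the inductive hypothesis on the tail loop $\mwhile{x\ass q[w/x]}{v}{q},t\TTo u$ into the right-hand side of the fixpoint equation after one unfolding: the segment up to time $d$ comes from $p$ (using Lemma~\ref{lem:H-cc} to ensure it is closed-convergent and to identify $w$ with $e^d$), and the segment thereafter is the iterated-body trajectory given by induction, the two glued by the Kleisli composition of $\BBH$. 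Clause~(2) is then specialized from clause~(1) by observing that the extra premise $p\To d,w$ combined with Lemma~\ref{lem:H-cc} forces $\alpha=\cc$.
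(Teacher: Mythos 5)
Your proposal is correct and follows essentially the same route as the paper: induction over the derivation of $p,t\TTo v$ in Figure~\ref{fig:hyb-oper} (mirroring the soundness argument of Theorem~\ref{thm:sound_adeq}), with clause~(2) obtained from clause~(1) by combining the duration-soundness of Theorem~\ref{thm:sound_adeq} with Lemma~\ref{lem:H-cc} to force $\alpha=\cc$ and identify the endpoint. Your treatment of the sequential-composition cases via the set $D$ in the Kleisli lifting of $\BBH$ matches the paper's use of Lemma~\ref{lem:d-close} there, and the fixpoint-unfolding argument for the while-rules is the intended one.
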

\begin{proof}
First, let us argue that for a fixed $-\cctx p\colon A$,~(1) 
implies~(2). Indeed, $p, d \TTo v$ by the first clause implies that 
$\sem{-\cctx p\colon A}_\BBH=\brks{\alpha,d,e}$, $e^d = v$ and by 
Theorem~\ref{thm:sound_adeq}, $\sem{-\cctx p\colon A}_\BBQ = \brks{d,w}$.
By Lemma~\ref{lem:H-cc}, $e^d = w = v$ and $\alpha=\cc$.

The proof of~(1) is now obtained analogously to the proof of Theorem~\ref{thm:sound_adeq},
by induction over the derivation $p, t \TTo v$ in the proof system of Figure~\ref{fig:hyb-oper}.
We occasionally need to call both~(1) and~(2) in the induction step, but, as we 
argued~(1) implies~(2) for every specific~$p$. Consider, e.g.\ the rule
\begin{align*}
\lrule{(seq$_{\bm2}^{\mathbf{e}}$)}{
  \begin{array}{c@{\qquad}c}
    p \To d, v' & (p,s\TTo v_s)_{s\leq d} \\[1ex] 
    (q[v_s/x],0\TTo w_s)_{s\leq d} & q[v_d/x],t\TTo w 
  \end{array}
}{
  \mbind{x\ass p}{q},d+t \TTo w
}
\end{align*}
in detail. By induction hypothesis, $\sem{-\cctx p\colon A}_\BBH=\brks{\cc,d,e}$, 
${e^d=v_d}$, $\sem{-\cctx q[v_d/x]\colon B}_\BBH=\brks{\alpha,d_\star,e_\star}$, 
$t\leq d_\star$, $e_\star^t=w$. Now, by definition, and using Lemma~\ref{lem:d-close},
\begin{align*}
\sem{-\cctx&\mbind{x\ass p}{q}\colon A}_\BBH\\ 
  =&\; \brks{f_{\tt}(e^d),d+f_\dr(e^d),\lambda s.\,\ite{f_\ev^0(e^s)}{s<d}{f_\ev^{s-d}(e^d)}}\\
  =&\; \brks{f_{\tt}(v_d),d+f_\dr(v_d),\lambda s.\,\ite{f_\ev^0(v_s)}{s<d}{f_\ev^{s-d}(v_d)}}
\end{align*}
where $f = \sem{x\colon B\cctx q\colon A}_\BBH$, and therefore $d+t\leq d+d_\star=  d+f_\dr(v_d)$
and 
\begin{displaymath}
  \ite{f_\ev^0(v_{d+t})}{{d+t<d}}{f_\ev^{d+t-d}(v_d)} = f_\ev^{t}(v_d) = e_\star^t = w,
\end{displaymath}
as desired.
\end{proof}
\noindent
The following lemma essentially captures the adequacy direction of Theorem~\ref{thm:adeq-H}.
\begin{lemma}\label{lem:adeq-H}
If $\sem{\argument\cctx p\colon A}_\BBH=\brks{\alpha,d,e}$ and $t\leq d$ then $p, t \TTo v$ for some $v$.
\end{lemma}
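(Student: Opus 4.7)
The plan is to prove the statement by structural induction on $p$, treating each computation constructor in turn and relying on Theorem~\ref{thm:sound_adeq} and Lemma~\ref{lem:H-cc} to synchronize durations between the two semantics. The base cases ($\now v$, conditionals, pattern matching) are direct: the denotational clause for each constructor determines the trajectory pointwise and the matching evolution rule (\textbf{(now$^{\mathbf{e}}$)}, \textbf{(if$_{\bm1}^{\mathbf{e}}$)}, \textbf{(if$_{\bm2}^{\mathbf{e}}$)}, \textbf{(pm$^{\mathbf{e}}$)}) applies after reducing by induction on the sub-computation, using the observation that $t \leq d$ for the sub-computation as well.

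For the continuous abstraction $\cwhile{x\ass s.\,v}{w}$, inspect the denotational clause in Figure~\ref{fig:H-denote}: the duration $d(\bar v)$ is exactly the supremum $\sup\{e \mid \forall s\in[0,e].\, b(\bar v, h(\bar v, s))\}$, so using Theorem~\ref{thm:sound_adeq} together with Lemma~\ref{lem:H-cc} one obtains either $p \To d, v[d/s]$ (if $b(\bar v, h(\bar v, d)) = \top$, i.e.\ $\alpha = \cc$) or $p \To d$ otherwise. Applying \textbf{(tr$_{\bm1}^{\mathbf{e}}$)} or \textbf{(tr$_{\bm2}^{\mathbf{e}}$)} then yields $p, t \TTo v[t/s]$ for any $t \leq d$.

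The sequential composition case $\mbind{x\ass q}{r}$ is the first significant case. Writing $\sem{q}_\BBH = \brks{\alpha_1, d_1, e_1}$ and $f = \sem{x\colon B \cctx r\colon A}_\BBH$, the four clauses for Kleisli lifting of $\BBH$ show that the resulting duration $d$ lies in $[0, d_1]$ (possibly augmented by $f_\dr(e_1^{d_1})$ in the convergent case). Case split on whether $t < d_1$ or $t \geq d_1$: in the first case apply \textbf{(seq$_{\bm1}^{\mathbf{e}}$)}, obtaining the required infinite families $(q, s \TTo v_s)_{s \leq t}$ by induction on $q$ and $(r[v_s/x], 0 \TTo w_s)_{s \leq t}$ from induction on $r$ after noting that $f(e_1^s) \neq \div$ for all $s \leq t$ by the definition of $D$; in the second case, which forces $\alpha_1 = \cc$ and hence delivers $q \To d_1, v'$ via Theorem~\ref{thm:sound_adeq} and Lemma~\ref{lem:H-cc}, apply \textbf{(seq$_{\bm2}^{\mathbf{e}}$)} and invoke induction on $r[v_{d_1}/x]$ at the smaller time $t - d_1 \leq f_\dr(e_1^{d_1})$.

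The main obstacle, as expected, is the while-loop $\mwhile{x\ass q}{v}{r}$, whose denotation is an iterate of the form $(g^\istar)^\klstar (\kappa \comp \sem{q}_\BBH(\bar v))$ with $g$ as in Figure~\ref{fig:H-denote}. Here Lemma~\ref{lem:H-elgot}(1) (together with condition~\eqref{eq:it-as-while}, which $g$ satisfies by construction because each $\inr$-branch carries the endpoint flag) supplies an integer $n$ such that the trajectory $e^t$ at time $t$ is already produced by $g_n$, i.e.\ by $n$ finite unfoldings of the loop. I would then use a secondary induction on $n$ to build the operational derivation: the innermost unfolding is handled by induction on $q$ or $r$ (together with the sequential composition case above), and each additional unfolding is glued on by \textbf{(wh$_{\bm4}^{\mathbf{e}}$)} after extracting the requisite $q \To d_i, w_i$ or $r[w_i/x] \To d_{i+1}, w_{i+1}$ and truth values $v[w_i/x] = \True$ from Theorem~\ref{thm:sound_adeq} and Lemma~\ref{lem:H-cc}; at the last step, where $t$ lies in the body of the $n$-th iterate, either \textbf{(wh$_{\bm1}^{\mathbf{e}}$)} or \textbf{(wh$_{\bm2}^{\mathbf{e}}$)} closes the derivation. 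The delicate point is to match the duration arithmetic produced by $g^\istar$ with the additive structure on the right-hand side of the \textbf{(wh$_{\bm4}^{\mathbf{e}}$)} chain, which is exactly what Lemma~\ref{lem:H-elgot}(1) together with the $\kappa$-instrumented iteration makes available.
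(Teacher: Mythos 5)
Your proposal is correct and follows essentially the same route as the paper: structural induction on $p$, with the while-loop case resolved by expanding the $\kappa$-instrumented iterate, using Lemma~\ref{lem:H-elgot}(1) to reduce the time instant $t$ to a finite number $n$ of loop unfoldings, and then assembling the operational derivation via \textbf{(wh$_{\bm4}^{\mathbf{e}}$)} with \textbf{(wh$_{\bm1}^{\mathbf{e}}$)}/\textbf{(wh$_{\bm2}^{\mathbf{e}}$)} closing it off. The only presentational difference is that the paper first splits on whether the unfolding sequence is finite or infinite and invokes Lemma~\ref{lem:H-elgot}(1) only in the infinite (Zeno) case, whereas you apply it uniformly; both variants carry the same content.
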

\begin{proof}[Proof Sketch]
  The proof is analogous to the corresponding fragment of the proof of
  Theorem~\ref{thm:sound_adeq}. Let us consider the case of
  while-loops, which is the trickiest one. That is, we have to
  show that
  $\sem{\argument\cctx \mwhile{x\ass p}{b}{q}\colon 
    A}_\BBH=\brks{\alpha,d,e}$ and $t \leq d$ imply that
  $\mwhile{x\ass p}{b}{q}, t \TTo v$ for some $v$. By expanding the
  assumption we obtain
\begin{align*}
 &\brks{\alpha,d,e}=\\
 &\bigl((\lambda\brks{x,c}.\,\ite{(H\inr) (\kappa(l(x)))}{b(x) \wedge c}{\eta(\inl x)})^\istar\bigr)^\klstar
 (\kappa(\alpha_0,d_0,e_0))
\end{align*}
where $b=\sem{x\colon A\vctx v\colon 2}$, $\brks{\alpha_0,d_0,v_0}=\sem{-\cctx p\colon A}_\BBH$ and $l=\sem{x\colon A\cctx q\colon A}_\BBH$. 
If $\alpha_0=\od$ or $\alpha_0=\cd$, the above equation yields $\alpha_0=\alpha$, $d=d_0$ and $e=e_0$
and we obtain the requisite judgement using the induction hypothesis either 
by \textbf{(wh$_{\bm1}^{\mathbf{e}}$)} or by \textbf{(wh$_{\bm2}^{\mathbf{e}}$)}.
The same considerations apply if $t<d_0$. We proceed under the assumption that $\alpha_0=\cc$ and $t\geq d_0$. Consider the sequence 
\begin{align}\label{eq:H-while-a}
  \brks{\cc, d_0,e_0},\brks{\cc, d_1,e_1},\ldots
\end{align}
iteratively constructed as follows: $d_{i+1}$ and $e_{i+1}$ are defined as long as 
$\brks{\cc,d_{i+1},e_{i+1}} = l(e_i^{d_i})$ and $b(e_i^{d_i})$ is true. If 
any of the latter conditions eventually fails, the sequence~\eqref{eq:H-while-a}
terminates resulting in $d = d_0+\ldots+d_n$ for some $t$. We then construct the 
requisite derivation of $\mwhile{x\ass p}{b}{q}, t \TTo v$ by calling the induction 
hypothesis. If the sequence~\eqref{eq:H-while-a} is infinite, still, by 
Lemma~\ref{lem:H-elgot} (1), for some $n$, $d_0+\ldots+d_n\leq t<d_0+\ldots+d_{n+1}$
and $e^t = e_n^{t - d_0-\ldots-d_n}$. Again, by induction, we can construct a 
derivation of the judgement $\mwhile{x\ass p}{b}{q}, t \TTo e_{n}^{t - d_0-\ldots-d_n}$.
\end{proof}
\noindent
Finally, we can prove the soundness and adequacy theorem that was
previously mentioned.
\begin{proof}[Proof (Theorem~\ref{thm:adeq-H})]
The Lemmas~\ref{lem:sound-H},~\ref{lem:adeq-H} and~\ref{lem:H-cc} jointly
imply the theorem as follows. The left-to-right directions for both
clauses are already in Lemma~\ref{lem:sound-H}. For the right-to-left direction of
the first clause, Lemma~\ref{lem:adeq-H} produces some value $v$, which, again
by the soundness direction must be the one for which $e^t=v$. For the right-to-left
direction of the second clause we additionally use Lemma~\ref{lem:H-cc}.
\rnnote{I'm not sure you can apply Lemma~\ref{lem:H-cc}, because you cannot
  assume the condition $\sem{\Gamma\cctx p\colon A}_\BBQ(\bar v) = \brks{d',v}$. I think one needs to consider two clauses in Lemma~\ref{lem:H-cc}}
\sgnote{Not should be OK?}
\end{proof}

\section{Conclusions and Further Work}\label{sec:conc}
Our present work is\rnnote{For future work: Could the small-step
  semantics generated from the big-step semantics in Sec. 4 be
  described combined inductively/coinductively instead of using
  infinite premises?}
a 
result of an ongoing effort to establish solid semantic foundations
for hybrid computation. More specifically, here we complement our
previous work on denotational
models~\cite{NevesBarbosaEtAl16,GoncharovJakobEtAl18,
  GoncharovJakobEtAl18a} in terms of \emph{hybrid monads}, with a
corresponding operational semantics and connect both styles of
semantics by a soundness and adequacy theorem. The central ingredient
of our framework is a hybrid, call-by-value while-language \hc{},
which we put forward as syntactic means for capturing the essence of
hybrid computation, subsequently study it, and use as basis of future,
more complex hybrid programming languages.  The task of formulating an
adequate operational semantics turned out to require a sophisticated
route via an auxiliary lightweight duration semantics, and we regard
the fact of success, i.e.\ the fact that the term ``adequate
operational semantics'' can actually be sensibly interpreted in the
hybrid setting, as a striking outcome of this work.

%
%
\hc{} provides a minimal framework combining the basic and
uncontroversial features of hybrid computation with clear and
principled semantic foundations. We plan to use it both as a stepping
stone and as a yardstick in further work.
The fact that~\hc{} purposefully combines the classical view of
(instantaneous) computations -- which either succeed immediately or
silently diverge (e.g.\ variable assignments in a loop) -- and
processes extended over real-time, makes the operational semantics
of~\hc{} particularly distinct, specifically, the rules generally
require \emph{continuum-size} numbers of premises, which is needed to
ensure the integrity of the resulting trajectories. This kind of
complexity vanishes when restricting to progressive semantics, which
is obtained by forbidding empty trajectories.  The
submonad~$\BBH_\mplus$ of $\BBH$ that arises from this restriction,
was studied in~\cite{GoncharovJakobEtAl18} and enjoys better properties
than~$\BBH$; we expect our adequacy result to restrict
accordingly. This task is however not entirely trivial, as it requires
a considerable adaptation of the language to keep track of
progressiveness of terms, yet a systematic approach to this issue
was proposed recently~\cite{GoncharovRauchEtAl18}.

As indicated in the introduction, the use of monads gives access to
various generic scenarios for combining hybridness with other effects,
specifically via universal constructions and monad transformers. One
generic construction that can be applied to $\BBH$ is the
\emph{generalized coalgebraic resumption monad
  transformer}~\cite{GoncharovSchroderEtAl18} sending an endofunctor
$T$ to the coalgebra $T_F=\nu\gamma.\ T(\argument+F\gamma)$ for a
given endofunctor $F$, which can be chosen suitably for modelling
\emph{interactive/concurrent} features, e.g.\ $F = A\times\argument$
corresponds to automata or process algebra-style actions ranging over
$A$. By the abstract results in~\cite{GoncharovSchroderEtAl18} for an
Elgot monad~$\BBT$, $T_F$ canonically extends to an Elgot monad
$\BBT_F$, which is indeed the case for $\BBT=\BBH$. A disciplined way
of adding further effects to \hc{} is by resorting to Plotkin and
Power's \emph{generic effects}~\cite{PlotkinPower01}. E.g.\ for binary
non-determinism, we would need a coin-tossing primitive
$\Gamma\cctx\oname{toss}()\colon 2$ for choosing a control branch
non-deterministically, yielding a non-deterministic choice operator in
the following way (a similar approach applies e.g.\ to
probabilistic~choice):
\begin{displaymath}
  p + q = (\mbind{b\ass\oname{toss}()}{\mif{b}{p}{q}}).
\end{displaymath}
Denotationally, in order to interpret such extensions of \hc{} we need
to modify $\BBH$ by applying suitable monad transformers, in
particular, we plan to extend previous work on combining
iteration-free versions of $\BBH$ with other
monads~\cite{DahlqvistNeves18}.  Our commitment to the language of
categorical constructions raises hopes that the presented developments
can be carried over to categories other than~$\Set$ and more suitable
for analysing topological aspects of hybrid dynamics, such as
\emph{stability} of hybrid behaviour under input perturbations. E.g.\
the construction of the duration monad as a quotient of the
corresponding layered duration monad can be instantly reproduced in
the category $\Top$ of topological spaces, therefore inducing a
necessary topologization in a canonical way. It remains an important
goal of further work to provide an analogous treatment for the hybrid
monad~$\BBH$, possibly by introducing a \emph{layered hybrid monad}
whose quotient would be~$\BBH$.
%
%
Our long-term goal is to extend
\hc{} to a higher-order hybrid language with general recursion.


%
\begin{acks}
The first author would like to acknowledge the support of German Research Foundation
under grant~GO~2161/1\dash 2.

The second author would like to acknowledge the support of ERDF -- European
Regional Development Fund through the Operational Programme for Competitiveness
and Internationalisation - COMPETE 2020 Programme and by National Funds through
the Portuguese funding agency, FCT - Funda\c{c}\~ao para a Ci\^encia e a
Tecnologia, within project POCI-01-0145-FEDER-030947.
\end{acks}

%
\bibliographystyle{ACM-Reference-Format}
\bibliography{monads}

\end{document}